\newcommand{\bol}[1]{\mbox{\boldmath$#1$}}
\newcommand{\bSigma}{\bol{\Sigma}}
\newcommand{\bOmega}{\bol{\Omega}}
\newcommand{\hbSigma}{\hat{\bol{\Sigma}}}
\newcommand{\bmu}{\bol{\mu}}
\newcommand{\blambda}{\bol{\lambda}}
\newcommand{\hbmu}{\hat{\bol{\mu}}}
\newcommand{\tbmu}{\tilde{\bol{\mu}}}
\newcommand{\bet}{\bol{\eta}}
\newcommand{\btheta}{\bol{\theta}}
\newcommand{\bb}{\mathbf{b}}
\newcommand{\bx}{\mathbf{x}}
\newcommand{\bQ}{\mathbf{Q}}
\newcommand{\hbQ}{\hat{\mathbf{Q}}}
\newcommand{\by}{\mathbf{y}}
\newcommand{\bt}{\mathbf{t}}
\newcommand{\bP}{\mathbf{P}}
\newcommand{\bM}{\mathbf{M}}
\newcommand{\bH}{\mathbf{H}}
\newcommand{\bL}{\mathbf{L}}
\newcommand{\bl}{\mathbf{l}}
\newcommand{\hbet}{\bol{\hat{\eta}}}
\newcommand{\bz}{\mathbf{z}}
\newcommand{\bd}{\mathbf{d}}
\newcommand{\bB}{\mathbf{B}}
\newcommand{\bv}{\mathbf{v}}
\newcommand{\bw}{\mathbf{w}}
\newcommand{\hbw}{\mathbf{\hat{w}}}
\newcommand{\bi}{\mathbf{1}}
\newcommand{\bI}{\mathbf{I}}
\newcommand{\bA}{\bol{A}}
\newcommand{\bF}{\mathbf{F}}
\newcommand{\bD}{\mathbf{D}}
\newcommand{\bG}{\mathbf{G}}
\newcommand{\bu}{\mathbf{u}}
\newcommand{\hbtheta}{\bol{\hat{\theta}}}
\numberwithin{equation}{section}
\theoremstyle{plain}
\newtheorem{theorem}{Theorem}[section]
\newtheorem{lemma}{Lemma}[section]
\newtheorem{corollary}{Corollary}[section]
\providecommand{\keywords}[1]{\textbf{\textit{Keywords---}} #1}
\newenvironment{customthm}[1]
  {\innercustomthm}
  {\endinnercustomthm}
\newenvironment{customlem}[1]
  {\innercustomlem}
  {\endinnercustomlem}
\newenvironment{customcor}[1]
  {\innercustomcor}
  {\endinnercustomcor}
\begin{document}

\begin{center}
  \vspace*{1cm} \noindent {\bf \large Corrigendum to\\
  "Sampling Distributions of Optimal Portfolio Weights and Characteristics in Small and Large Dimensions"}  \\

\vspace{1cm} \noindent {\sc  Taras Bodnar$^{a}$, Holger Dette$^{b}$,
Nestor Parolya$^{c}$ and Erik Thors\'{e}n$^{a}$}\\
 \vspace{0.5cm}
 {\it \footnotesize  $^a$
 Department of Mathematics, Stockholm University, Stockholm, Sweden
 } \\
 {\it \footnotesize  $^b$
 Department of Mathematics, Ruhr University Bochum, D-44870 Bochum, Germany
 }\\
 {\it \footnotesize  $^c$
 Delft Institute of Applied Mathematics, Delft University of Technology, Delft, The Netherlands }

\end{center}
\vspace{0.5cm}
  We would like to thank Raymond Kan, Nathan Lassance and Xiaolu Wang for drawing our attention to the errors present in the paper ``Sampling Distributions of Optimal Portfolio Weights and Characteristics in Small and Large Dimensions''. The correct statements of the theorems are presented below, in which we used the notations from the original paper. For further important stochastic representations of estimated optimal portfolio weights and related asymptotic theory we refer to \cite{KanLassanceWang2023}.

Although the proof of Lemma A.1 in \cite{bodnar2022sampling} is correct, a typographical error is present in its formulation. The corrected version of this lemma is given below:

\begin{customlem}{A.1}\label{th1}
Under the conditions of Theorem 2.1 in \cite{bodnar2022sampling}, the distribution of $(\hat{V}_{GMV},\hat{\btheta}^\top,\tilde{R}_{GMV},\tilde{s},\tilde{\bet}^\top)^\top$ is determined by
\begin{enumerate}[(i)]
\item $\hat{V}_{GMV}$ is independent of $(\hat{\btheta}^\top,\tilde{R}_{GMV},\tilde{s},\tilde{\bet}^\top)^\top$;
\item $(n-1)\frac{\hat{V}_{GMV}}{V_{GMV}}\sim \chi^2_{n-p}$;
\item $\left(\begin{array}{c} \hat{\btheta} \\ \tilde{R}_{GMV}\end{array}\right) \sim
    t_{k+1}\left( n-p+1,\left(\begin{array}{c} \btheta \\ \breve{R}_{GMV}\end{array}\right),
     \frac{V_{GMV}}{n-p+1} \breve{\bG} \right)$, with~ $\breve{\bG}=\begin{pmatrix} \bL \bQ \bL^\top & \bL \bQ \tbmu \\ \tbmu^\top \bQ \bL^\top & \tbmu^\top \bQ \tbmu \end{pmatrix}
    =\begin{pmatrix} \bL \bQ \bL^\top & \breve{s} \breve{\bet} \\ \breve{s} \breve{\bet}^\top & \breve{s} \end{pmatrix}
   $
\item $\tilde{s}$ and $\tilde{\bet}$ are conditionally independent given $\hat{\btheta}$ and $\tilde{R}_{GMV}$

\item $(n-1)\frac{\breve{s}}{\tilde{s}}\left(1 + \frac{(\tilde{R}_{GMV} - \breve{R}_{GMV})^2}{V_{GMV}\breve{s}}\right) \sim \chi^2_{n-p+2}$;
\item
$\tilde{\bet}|\hat{\btheta}^\top, \tilde{R}_{GMV} \sim
t_{k}\left(n-p+3,\breve{\bet}+ \mathbf{h},
\frac{(n-p+3)^{-1}\tilde{\bF}}{\breve{s}\left(1+\frac{(\tilde{R}_{GMV} - \breve{R}_{GMV})^2}{V_{GMV}\breve{s}}\right)^2} \right), $
where {\small
\begin{eqnarray*}
\mathbf{h}&=&\left(1 + \frac{(\tilde{R}_{GMV} - \breve{R}_{GMV})^2}{V_{GMV}\breve{s}}\right)^{-1}\frac{(\hat{\btheta}-\btheta-\breve{\bet}(\tilde{R}_{GMV} - \breve{R}_{GMV}))(\tilde{R}_{GMV} - \breve{R}_{GMV})}{V_{GMV}\breve{s}}\\
\tilde{\bF}&=&\left(\bL \bQ \bL^\top-\breve{s}\breve{\bet}\breve{\bet}^\top\right)\left(1+\frac{(\tilde{R}_{GMV} - \breve{R}_{GMV})^2}{V_{GMV}\breve{s}}\right)\\
&+&\frac{1}{V_{GMV}}\left(\hbtheta - \btheta-\breve{\bet}(\tilde{R}_{GMV} - \breve{R}_{GMV})\right)\left(\hbtheta - \btheta-\breve{\bet}(\tilde{R}_{GMV} - \breve{R}_{GMV})\right)^\top.
\end{eqnarray*}
}
\end{enumerate}
\end{customlem}

The new formulation of Lemma A.1 impacts the statements of Theorem 2.1 in \cite{bodnar2022sampling}, which is corrected to

\begin{customthm}{2.1}\label{th2}
Let $\bx_1,\bx_2,...,\bx_n$ be independent and normally distributed with mean vector $\bmu$ and covariance matrix $\bSigma$, i.e. $\bx_i\sim \mathcal{N}_p(\bmu,\bSigma)$ for $i=1,...,n$ with $n>p$. Define $\bM=(\bL^\top,\tbmu,\bi)^\top$ and assume that $rank(\bM)=k+2$. Let $\bSigma$ be positive definite. Then, a joint stochastic representation of $\hat{V}_{GMV}$, $\hat{R}_{GMV}$, $\hat{\btheta}$, $\hat{s}$, and $\hat{\bet}$ is given by
\begin{enumerate}[(i)]
\item \label{th2_VG} $\hat{V}_{GMV}\stackrel{d}{=}\frac{V_{GMV}}{n-1}\xi_1$;
\item \label{th2_RG}
$\hat{R}_{GMV}\stackrel{d}{=}R_{GMV}+\sqrt{V_{GMV}}\left(\frac{z_1}{\sqrt{n}}+\sqrt{f}\frac{t_1}{\sqrt{n-p+1}}\right)$;
\item 
\begin{eqnarray*}
\hat{\btheta}&\stackrel{d}{=}&\btheta+\sqrt{V_{GMV}}
\Bigg(\frac{s\bet+\bz_2/\sqrt{n}}{\sqrt{f}}\frac{t_1}{\sqrt{n-p+1}}\nonumber \\
&+&\left(\bL \bQ \bL^\top-\frac{\left(s\bet+\bz_2/\sqrt{n}\right)\left(s\bet+\bz_2/\sqrt{n}\right)^\top}
{f}\right)^{1/2}
\sqrt{1+\frac{t_1^2}{n-p+1}}\frac{\bt_2}{\sqrt{n-p+2}}\Bigg);\nonumber
\end{eqnarray*}
\item \label{th2_s}$\hat{s}\stackrel{d}{=}(n-1)\left(1+\frac{t_1^2}{n-p+1}\right)\frac{f}{\xi_2}$ with
\begin{equation*}\label{th2_f}
f=\frac{\xi_3}{n}+ \left(s\bet+ \frac{\bz_2}{\sqrt{n}}\right)^\top\left(\bL \bQ \bL^\top\right)^{-1}\left( s\bet+ \frac{\bz_2}{\sqrt{n}}\right);
\end{equation*}
\item \label{th2_bet} \begin{eqnarray*}
\hat{\bet}&\stackrel{d}{=}&\frac{s\bet+\bz_2/\sqrt{n}}{f}
+\frac{1}{\sqrt{f}\sqrt{1+\frac{t_1^2}{n-p+1}}}\left(\bL \bQ \bL^\top-\frac{\left(s\bet+\bz_2/\sqrt{n}\right)\left(s\bet+\bz_2/\sqrt{n}\right)^\top}
{f}\right)^{1/2} \nonumber \\
&\times& \left(\frac{\bt_2}{\sqrt{n-p+2}}\frac{t_1}{\sqrt{n-p+1}}
+ \left(\bI_k+\frac{\bt_2\bt_2^\top}{n-p+2}\right)^{1/2}\frac{\bt_3}{\sqrt{n-p+3}} \right)\nonumber
\end{eqnarray*}
\end{enumerate}
where $\xi_1\sim \chi^2_{n-p}$, $\xi_2\sim \chi^2_{n-p+2}$, $\xi_3\sim \chi^2_{p-k-1;n\bmu^\top\bA \bmu}$, $z_1\sim \mathcal{N}(0,1)$, $\bz_2\sim \mathcal{N}_k(\mathbf{0},\bL \bQ \bL^\top)$, $t_1 \sim t(n-p+1)$, $\bt_2\sim t_k(n-p+2)$, and $\bt_3 \sim t_k(n-p+3)$ are mutually independent with
$\bA=\bQ- \bQ \bL^\top \left(\bL \bQ \bL^\top\right)^{-1}\bL \bQ$.
\end{customthm}

\begin{proof}[Proof of Theorem 2.1:]
The new formulation of Lemma \ref{th1} has an influence on the derivation of the stochastic representation of $\hat{\bet}$ only, while the remaining parts of the proof are correct. The derivation of the new stochastic representation of $\hat{\bet}$ is given by
{\footnotesize
\begin{eqnarray*}
\hat{\bet}&\stackrel{d}{=}&\frac{\bL\bQ \hbmu}{\hbmu^\top \bQ \hbmu}
+\frac{\sqrt{1+\frac{t_1^2}{n-p+1}}\left(\bL \bQ \bL^\top-\frac{\bL\bQ \hbmu\hbmu^\top \bQ \bL^\top}{\hbmu^\top \bQ \hbmu}\right)^{1/2}\frac{\bt_2}{\sqrt{n-p+2}} \frac{1}{\sqrt{\hbmu^\top \bQ \hbmu}}\frac{t_1}{\sqrt{n-p+1}}
                            }{1+\frac{t_1^2}{n-p+1}}\nonumber\\
  &+&\frac{1}{\sqrt{\hbmu^\top \bQ \hbmu}} \frac{1}{1+\frac{t_1^2}{n-p+1}}
 \Bigg(\left(\bL \bQ \bL^\top-\frac{\bL\bQ \hbmu\hbmu^\top \bQ \bL^\top}{\hbmu^\top \bQ \hbmu}\right)\left(1+\frac{t_1^2}{n-p+1}\right)\nonumber\\
&+&\frac{1}{V_{GMV}}\left(1+\frac{t_1^2}{n-p+1}\right)\frac{V_{GMV}}{n-p+2}
\left(\bL \bQ \bL^\top-\frac{\bL\bQ \hbmu\hbmu^\top \bQ \bL^\top}{\hbmu^\top \bQ \hbmu}\right)^{1/2}\bt_2\bt_2^\top \left(\left(\bL \bQ \bL^\top-\frac{\bL\bQ \hbmu\hbmu^\top \bQ \bL^\top}{\hbmu^\top \bQ \hbmu}\right)^{1/2}\right)^\top
\Bigg)^{1/2}\frac{\bt_3}{\sqrt{n-p+3}}\nonumber\\
&=&\frac{\bL\bQ \hbmu}{\hbmu^\top \bQ \hbmu}
+\frac{1}{\sqrt{\hbmu^\top \bQ \hbmu}}\frac{1}{\sqrt{1+\frac{t_1^2}{n-p+1}}}\left(\bL \bQ \bL^\top-\frac{\bL\bQ \hbmu\hbmu^\top \bQ \bL^\top}
    {\hbmu^\top \bQ \hbmu}\right)^{1/2}\nonumber\\
  &\times&\left(\frac{\bt_2}{\sqrt{n-p+2}}\frac{t_1}{\sqrt{n-p+1}}
+ \left(\bI_k+\frac{\bt_2\bt_2^\top}{n-p+2}\right)^{1/2}\frac{\bt_3}{\sqrt{n-p+3}} \right)\label{ap_th2_eq5}
\end{eqnarray*}
}
where $\bt_3 \sim t_k(n-p+3)$ and is independent of $t_1$ and $\bt_2$. Moreover, due to Lemma \ref{th1}.i and \ref{th1}.iv we get that $\xi_1$, $\xi_2$, $t_1$, $\bt_2$, and $\bt_3$ are mutually independent.
\end{proof}

The new statement of Theorem 2.1 leads to the corrected version of Theorem 3.1 given by

\begin{customthm}{3.1}\label{th3}
  Under the conditions of Theorem 2.1 in \cite{bodnar2022sampling}, it holds that
\begin{flalign*}
\bL \hat{\bw}_g &\stackrel{d}{=} \btheta + \left( \sqrt{\frac{V_{GMV}}{f}} \frac{t_1}{\sqrt{n-p+1}}+
\frac{g(\hat{R}_{GMV}, \hat{V}_{GMV}, \hat{s})}{f} \right) \left(s\bet+\bz_2/\sqrt{n} \right)\nonumber \\ 
&+\frac{1}{\sqrt{1+\frac{t_1^2}{n-p+1}}} \left( \bL \bQ \bL^\top-\frac{\left(s\bet+\bz_2/\sqrt{n}\right)\left(s\bet+\bz_2/\sqrt{n}\right)^\top}{f} \right)^{1/2}\nonumber \\ 
&\times\Bigg[\Bigg(\sqrt{V_{GMV}} \left(1+\frac{t_1^2}{n-p+1} \right) + \frac{g(\hat{R}_{GMV}, \hat{V}_{GMV}, \hat{s})}{\sqrt{f}} \frac{t_1}{\sqrt{n-p+1}}
 \Bigg)\frac{\bt_2}{\sqrt{n-p+2}}\nonumber\\
 &+ \frac{g(\hat{R}_{GMV}, \hat{V}_{GMV}, \hat{s})}{\sqrt{f}}  \left( \bI_k + \frac{\bt_2 \bt_2^\top}{n-p+2}\right)^{1/2} \frac{\bt_3}{\sqrt{n-p+3}}\Bigg]\label{eqn:th3_samp}
\end{flalign*}
where the joint stochastic representation of $\hat{V}_{GMV}$, $\hat{R}_{GMV}$ and $\hat{s}$ is given in \eqref{th2_VG}-\eqref{th2_bet} of Theorem 2.1.
\end{customthm}

The corrected version of Corollary 4.1 is the following one

\begin{customcor}{4.1}\label{cor1}
 Under the conditions of Theorem~\ref{th2} and Assumption (\textbf{A1}) in \cite{bodnar2022sampling}, it holds that
\begin{equation*}
 \sqrt{n-p}\begin{pmatrix}
 \hat{V}_{GMV} - \frac{1-p/n}{1-1/n} V_{GMV} \\
 \hat{R}_{GMV}- R_{GMV} \\
 \hat{\btheta} - \btheta \\
 \hat{s}-\frac{(s+p/n)(1-1/n)}{1-p/n+2/n} \\
 \hat{\bet} - \frac{s}{s+p/n}\bet
 \end{pmatrix} \rightarrow N_{2k+3}\left(\mathbf{0},\mathbf{\Xi} \right)
\end{equation*}
with
\begin{equation*}
\mathbf{\Xi}=
 \begin{pmatrix}
 2 V^2_{GMV} (1-c)^2 & 0 & 0 & 0 & 0 \\
 0 & V_{GMV}(1+s) &  V_{GMV} s \bet^\top & 0 & 0  \\
 0 & V_{GMV} s \bet &   V_{GMV} \bL \bQ \bL^\top & 0 & 0  \\
 0 & 0 & 0 & \Xi_{s,s} & \boldsymbol{\Xi}^\top_{s, \bet}  \\
 0 & 0 & 0 & \boldsymbol{\Xi}_{s, \bet}  &  \boldsymbol{\Xi}_{\bet,\bet}
 \end{pmatrix}
\end{equation*}
for $p/n \rightarrow c\in [0,1)$ as $n \rightarrow \infty$ where
 \begin{eqnarray*}
\Xi_{s,s}&=& \frac{2(c+2s)}{(1-c)} + 2\frac{(s+c)^2}{(1-c)^2}, \label{eqn:s_asymp}\\
\boldsymbol{\Xi}_{\bet,\bet}&=& \frac{s+1}{(s+c)^2}\bL\bQ\bL^\top - \frac{s^2(2c(1-c)+(s+c)^2)}{(s+c)^4}\bet \bet^\top, \label{eqn:bet_asymp} \\
\boldsymbol{\Xi}_{s, \bet} &=&-\frac{2s^2}{(s+c)^2}\bet. \nonumber
\end{eqnarray*}
\end{customcor}

\begin{proof}[Proof of Corollary 4.1:]
Only, the expression of $\boldsymbol{\Xi}_{s, \bet}$ is changed. The new formula is obtained from Theorem 4.1 in \cite{bodnar2022sampling} and it is given by
\begin{eqnarray*}
\boldsymbol{\Xi}_{s, \bet} &=&-\frac{2s}{(s+c)^2}(c+2\bmu^\top \bA \bmu)\bet+\frac{2s}{s+c}\left(\bL\bQ\bL^\top-\frac{2s^2}{s+c}\bet\bet^\top\right)(\bL\bQ\bL^\top)^{-1} \bet=-\frac{2s^2}{(s+c)^2}\bet.
\end{eqnarray*}
\end{proof}

The improved statement of Theorem 4.2 is given by
\begin{customthm}{4.2}\label{th6}
Let $g(.,.,.)$ be differentiable with first order continuous derivatives. Then, under the conditions of Theorem~\ref{th2} and Assumption (\textbf{A1}) in \cite{bodnar2022sampling}, we get
\begin{eqnarray*}
&& \sqrt{n-p}\left(\bL\hat{\bw}_g - \left(\btheta + \frac{s g\left(\blambda\right)}{s+p/n}\bet \right) \right)
 \stackrel{d}{\rightarrow} N_k(\mathbf{0}, \bOmega_{\bL,g})
\end{eqnarray*}
for $p/n \rightarrow c\in [0,1)$ as $n \rightarrow \infty$ with
\begin{eqnarray*}
&&\bOmega_{\bL,g} =  \bigg(\frac{(s+1) g\left(\blambda\right)^2}{(s+c)^2} + V_{GMV}  \bigg) \bL \bQ \bL^\top
+ s^2\Bigg\{
	2\frac{(1-c)^2 V^2_{GMV}}{(s+c)^2}  g_2\left(\blambda\right)^2\nonumber \\
&	+&  \left(\frac{g_3\left(\blambda\right)}{1-c} - \frac{ g\left(\blambda\right)}{s+c} \right)^2 \frac{2(1-c)(c+2s)}{(s+c)^2} + \frac{4(1-c)}{(s+c)^2}
 g\left(\blambda\right) \left(\frac{g_3\left(\blambda\right)}{1-c}  - \frac{g\left(\blambda\right)}{s+c} \right)
 \nonumber \\
&	+ & \frac{ V_{GMV}(s+1) }{(s+c)^2} g_1\left(\blambda\right)^2 +  2\frac{V_{GMV}}{(s+c)} g_1\left(\blambda\right) + \frac{2}{(1-c)^2} g_3\left(\blambda\right)^2 -\frac{ g\left(\blambda\right)^2}{(s+c)^2} \Bigg\} \bet \bet^\top. \label{Omega_g}
\end{eqnarray*}
\end{customthm}

As a special case of Theorem 4.2 we get the formula of the asymptotic covariance matrix expressed as
 \begin{flalign*}
\bOmega_{\bL,EU}&=\bigg( \gamma^{-2} \frac{s+1}{(1-c)^2} + V_{GMV}  \bigg) \bL \bQ \bL^\top +\frac{\gamma^{-2}s^2}{(1-c)^2}\bet \bet^\top.
\end{flalign*}

The correct version of Theorem 4.4 is given by
\begin{customthm}{4.4}\label{th8}
Let $\blambda=(R_{GMV},V_{GMV},s)^\top$. Then, under the conditions of Theorems 4.2 and 4.3 in \cite{bodnar2022sampling}, it holds that
\begin{enumerate}[(a)]
\item $ \sqrt{n-p}\left(\bL\hat{\bw}_{g;c} - \bL\bw_g \right) \stackrel{d}{\rightarrow} N_k(\mathbf{0}, \bOmega_{\bL,g,c})
$
for $p/n \rightarrow c\in[0,1)$ as $n \rightarrow \infty$ with
{\small
\begin{eqnarray*}\label{Omega_gc}
\bOmega_{\bL,g,c}&=& \bigg( \frac{s+1}{s^2} g\left(\blambda_0\right)^2 + V_{GMV}  \bigg) \bL \bQ \bL^\top\nonumber\\
  &+&\Bigg\{2 V^2_{GMV} g_2\left(\blambda_0\right)^2+2 (s^2+2s+c) \left(g_3\left(\blambda_0\right)-\frac{g\left(\blambda_0\right)}{s}\right)^2
  +4 (s+1) g\left(\blambda_0\right)\left(g_3\left(\blambda_0\right)-\frac{g\left(\blambda_0\right)}{s}\right)
 \nonumber\\
& +& (s+1)V_{GMV} g_1\left(\blambda_0\right)^2 + 2sV_{GMV} g_1\left(\blambda_0\right)+  g\left(\blambda_0\right)^2\Bigg\} \bet \bet^\top
 \nonumber;
\end{eqnarray*}
}
\item for $p/n \rightarrow c\in[0,1)$ as $n \rightarrow \infty$ holds
\begin{equation*}
 \sqrt{n-p}\begin{pmatrix}
 \hat{h}_{g,1,c}- h_{g,1}\left(\blambda_0\right) \\
 \vdots \\
 \hat{h}_{g,q,c}- h_{g,q}\left(\blambda_0\right) \\
 \end{pmatrix} \rightarrow N_q\left(\mathbf{0},\mathbf{\Xi}_{h,c} \right)~~\text{with~~ $\mathbf{\Xi}_{h,c}=\left(\Xi_{h,c;ij}\right)_{i,j=1,...,q}$ }
\end{equation*}
where
\begin{equation*}
\Xi_{h,c;ij}=V_{GMV}(1+s) h_{g,i;1}\left(\blambda_0\right)h_{g,j;1}\left(\blambda_0\right)
+2 V^2_{GMV} h_{g,i;2}\left(\blambda_0\right)h_{g,j;2}\left(\blambda_0\right)
+\left(2s^2+4s+2c\right)h_{g,i;3}\left(\blambda_0\right)h_{g,j;3}\left(\blambda_0\right).\label{Xi_hc}
\end{equation*}
\end{enumerate}
\end{customthm}

Finally, a consistent estimator for the covariance matrix of the estimated weights of the EU portfolio is given by:
 \begin{flalign*}
\hat{\bOmega}_{\bL,EU,c}&=\bigg( \gamma^{-2}(\hat{s}_c+1) + \hat{V}_{GMV;c} \bigg) (1-c_n)\bL \hat{\bQ} \bL^\top + \gamma^{-2}\hat{s}_c^2  \hat{\bet}_c \hat{\bet}_c^\top,
\end{flalign*}
where $c_n=p/n$ and $\hat{s}_c= \frac{n-p}{n}\left(\hat{s}-\frac{p}{n-p}\right)$.

\newpage

\begin{center}
  \vspace*{1cm} \noindent {\bf \large Sampling Distributions of Optimal Portfolio Weights and Characteristics in Small and Large Dimensions}  \\

\vspace{1cm} \noindent {\sc  Taras Bodnar$^{a}$, Holger Dette$^{b}$,
Nestor Parolya$^{c}$ and Erik Thors\'{e}n$^{a}$}\\
 \vspace{0.5cm}
 {\it \footnotesize  $^a$
 Department of Mathematics, Stockholm University, Stockholm, Sweden
 } \\
 {\it \footnotesize  $^b$
 Department of Mathematics, Ruhr University Bochum, D-44870 Bochum, Germany
 }\\
 {\it \footnotesize  $^c$
 Delft Institute of Applied Mathematics, Delft University of Technology, Delft, The Netherlands }

\end{center}
\vspace{0.5cm}






\begin{abstract}
Optimal portfolio selection problems are determined by the (unknown) parameters of the data generating process. If an investor wants to realise the position suggested by the optimal portfolios, he/she needs to estimate the unknown parameters and to account for the parameter uncertainty in the decision process. Most often, the parameters of interest are the population mean vector and the population covariance matrix of the asset return distribution. In this paper, we characterise the exact sampling distribution of the estimated optimal portfolio weights and their characteristics. This is done by deriving their sampling distribution by its stochastic representation. This approach possesses several advantages, {e.g.} (i) it determines    the sampling distribution of the estimated optimal portfolio weights by expressions, which could be used to draw samples from this distribution efficiently; (ii) the application of the derived stochastic representation provides an easy way to obtain the asymptotic approximation of the sampling distribution. The later property is used to show that the high-dimensional asymptotic distribution of optimal portfolio weights is a multivariate normal and to determine its parameters. Moreover, a consistent estimator of optimal portfolio weights and their characteristics is derived under the high-dimensional settings. Via an extensive simulation study, we investigate the finite-sample performance of the derived asymptotic approximation and study its robustness to the violation of the model assumptions used in the derivation of the theoretical results.
\end{abstract}
\keywords{sampling distribution, optimal portfolio, parameter uncertainty, stochastic representation, high-dimensional asymptotics.}

\section{Introduction}
The solution to the optimal portfolio selection problems are determined by the parameters of the data generating process. In many cases, the optimal portfolio weights and their characteristics, like the portfolio mean, the portfolio variance, the value-at-risk (VaR), the conditional VaR (CVaR), etc., can be computed by using only the mean vector and the covariance matrix of the asset return distribution. More precisely, these relationships are summarized by the following five quantities:
\begin{equation}\label{5quant}
V_{GMV}=\frac{1}{\bi^\top \bSigma^{-1}\bi},~~
\bw_{GMV}=\frac{\bSigma^{-1}\bi}{\bi^\top \bSigma^{-1}\bi},~~
R_{GMV}=\frac{\bmu^\top\bSigma^{-1}\bi}{\bi^\top \bSigma^{-1}\bi},~~
s=\bmu^\top \bQ \bmu,~~
\bv=\frac{\bQ \bmu}{\bmu^\top \bQ \bmu},
\end{equation}
where $\bmu=E(\bx)$ and $\bSigma=Var(\bx)$ are the mean vector and the covariance matrix of the $p$-dimensional asset return vector $\bx$ and
\begin{equation}\label{bQ}
\bQ=\bSigma^{-1}-\frac{\bSigma^{-1}\bi\bi^\top \bSigma^{-1}}{\bi^\top \bSigma^{-1}\bi}.
\end{equation}

The five quantities in \eqref{5quant} have an interesting financial interpretation. The components of the $p$-dimensional vector $\bw_{GMV}$ define  the {weights}  of the global minimum variance (GMV) portfolio, i.e. of the portfolio with the smallest variance, while $R_{GMV}$ and $V_{GMV}$ are the expected return and the variance of the GMV portfolio. The quantity $s$ is the slope parameter of the efficient frontier, the set of all optimal portfolios following Markowitz's approach. This parameter, together with $R_{GMV}$ and $V_{GMV}$, fully determine the location and the shape of the efficient frontier, which is a parabola in the mean-variance space. Finally, the components of the
 $p$-dimensional vector $\bv$ define the {weights}  of the so-called self-financing portfolio (cf. \citet{korkie2002mean}), i.e. the sum of its weights is equal to zero, that is $\bi^\top \bv=0$.

The five quantities in \eqref{5quant} determine the structure of many optimal portfolios, like the GMV portfolio, the mean-variance (MV) portfolio, the expected maximum exponential utility (EU) portfolio, the tangency (T) portfolio, the optimal portfolio that maximizes the Sharpe ratio (SR), the minimum VaR (MVaR) portfolio, and the minimum CVaR (MCVaR) portfolio, maximum value-of-return (MVoR) portfolio, maximum conditional value-of-return (MCVoR) portfolio, among others (see, e.g., \citet{markowitz1952portfolio}, \citet{ingersoll1987theory}, \citet{korkie1981}, \citet{alexander2002economic}, \citet{alexander2004comparison}, \citet{okhrin2006distributional}, \citet{kan2007optimal}, \citet{frahm2010dominating}, \citet{bodnar2012minimum}, \citet{adcock2015statistical}, \citet{woodgate2015much}, \citet{bodnar2018estimation}, \citet{BodnarLindholmThorsenTyrcha2018},  \citet{simaan2018estimation}, \citet{bodnar2019okhrin}, \citet{bodnardmytrivparolyaschmid2019}). On the other hand, the quantities \eqref{5quant} cannot be directly used to compute the weights and the characteristics of these portfolios, since both $\bmu$ and $\bSigma$ are unobservable parameters in practice. As a result, an investor determines the optimal portfolios by replacing $\bmu$ and $\bSigma$ in \eqref{5quant} with the corresponding sample estimators given by
\begin{equation}\label{sample_mean_var}
\hbmu=\frac{1}{n}\sum_{i=1}^{n} \bx_i \quad \text{and} \quad \hbSigma=\frac{1}{n-1} \sum_{i=1}^{n} (\bx_i-\hbmu)(\bx_i-\hbmu)^\top,
\end{equation}
given a sample of asset returns $\bx_1,\bx_2,...,\bx_n$. This approach leads to the sample or the so-called plug-in estimators of the optimal portfolios, which are based on the corresponding sample estimators of \eqref{5quant}, expressed as
\begin{equation}\label{sample_5quant}
\hat{V}_{GMV}=\frac{1}{\bi^\top \hbSigma^{-1}\bi}, ~~
\hat{\bw}_{GMV}=\frac{\hbSigma^{-1}\bi}{\bi^\top \hbSigma^{-1}\bi},~~
\hat{R}_{GMV}=\frac{\hbmu^\top\hbSigma^{-1}\bi}{\bi^\top \hbSigma^{-1}\bi},~~
\hat{s}=\hbmu^\top \hbQ \hbmu,~~
\hat{\bv}=\frac{\hbQ \hbmu}{\hbmu^\top \hbQ \hbmu},
\end{equation}
with
\begin{equation}\label{sample_bQ}
\hbQ=\hbSigma^{-1}-\frac{\hbSigma^{-1}\bi\bi^\top \hbSigma^{-1}}{\bi^\top \hbSigma^{-1}\bi}
\end{equation}
as well as to the sample (plug-in) estimators of the optimal portfolio weights.

The notion of the sampling distribution in portfolio allocation has recently been given large attention. Investors and researchers realize that the uncertainty, introduced by using historical data, needs to be integrated into the optimal portfolio decision process as well as  properly assessed. The sampling distribution of the mean-variance portfolio was investigated  as early as \citet{korkie1981}, \citet{britten1999sampling}, \citet{okhrin2006distributional}, where the distributions of estimated optimal portfolio weights were derived under the assumption of an independent sample of asset returns taken from a multivariate normal distribution. Moreover, both the asymptotic and finite-sample distributions of the estimated efficient frontier, the set of all mean-variance optimal portfolios, were obtained by \citet{jobson1991confidence}, \citet{bodnar2008estimation}, \citet{kan2008distribution}, and \citet{bodnar2009econometrical}, among others, while \citet{siegel2007performance} and \citet{bodnar2010unbiased} presented its improved estimators and proposed a test of its existence. Some of these results were later extended to the high-dimensional setting in \citet{frahm2010dominating}, \citet{glombeck}, \citet{bodnar2018estimation}, \citet{bodnar2019okhrin}, whereas several limiting results related to the estimation of optimal portfolios under high-dimensional settings are present in \citet{ao2019approaching}, \citet{kan2019sample}, \citet{cai2020high}, \citet{ding2020high}, \citet{bodnardmytrivokhrinparolyaschmid2020}, among others.

The sample mean vector and the sample covariance matrix given by \eqref{sample_mean_var} have been used extensively in previous research (see, e.g., \citet{britten1999sampling}, \citet{memmel2006estimating}, \citet{okhrin2008estimation}) for estimating the asset return vector and its covariance matrix. These estimators appear to be consistent and the corresponding
estimated optimal portfolios  have desirable asymptotic properties when the portfolio dimension is considerably smaller than the sample size. However, they can no longer be used when a high-dimensional portfolio is constructed, due to their performance when the portfolio dimension is comparable to the sample size. One of the issues lies in that the quantities \eqref{sample_5quant} depend on the inverse covariance matrix. Its sample counterpart is not a consistent estimator in the high-dimensional settings (see, e.g., \citet{BodnarGuptaParolya2016}). To cope with these limitations, a number of improved estimators have been considered in the literature (cf., \citet{efron1976}, \citet{JagannathanMa2003}, \citet{golosnoy2007multivariate}, \citet{frahm2010dominating}, \citet{DeMiguel2009b}, \citet{rubioetal2012}, \citet{yao_zheng_bai_2015}).

We contribute to the existent literature by deriving the joint sampling distribution of the estimated five quantities in \eqref{sample_5quant}, which solely determine the structure of optimal portfolios. These results are then used to establish a unified approach for characterizing the sampling distributions of the estimated weights and the corresponding estimated characteristics of optimal portfolios. The goal is achieved by presenting the joint distribution of $(\hat{V}_{GMV},\hat{\bw}_{GMV}^\top,\hat{R}_{GMV},\hat{s},\hat{\bv}^\top)^\top$ in terms of a very useful stochastic representation. A stochastic representation is a computationally efficient tool in statistics and econometrics to characterize the distribution of a random variable/vector, which is widely used in both conventional and Bayesian statistics. While it plays a special rule in the theory of elliptical distributions (c.f., \citet{gupta2013elliptically}), the stochastic representation is also a very popular method to generate random variables/vectors in computational statistics (see, e.g., \citet{givens2012computational}). The applications of stochastic representations in the determination of the posterior distributions of estimated optimal portfolios can be found in \citet{bodnar2017bayesian} and \citet{bauder2019bayesian}. Finally, \citet{zellner2010direct}, among others, argued that the direct Monte Carlo approach based on stochastic representations is a computationally efficient method to calculate Bayesian estimation. In the present paper, we employ the derived stochastic representation for $(\hat{V}_{GMV},\hat{\bw}_{GMV}^\top,\hat{R}_{GMV},\hat{s},\hat{\bv}^\top)^\top$ in the derivation of their high-dimensional asymptotic distribution, as well as in obtaining the high-dimensional asymptotic distribution of estimated optimal portfolios.

The theoretical results derived in the paper are based on the assumption that the asset returns are independent and normally distributed. Although this assumption is crucial for the derivation, especially of the finite-sample distributions of the estimated portfolio weights and their estimated characteristics, it is not obviously fulfilled in practical applications, especially, when financial data of daily or higher frequency are considered. On the other side, such assumptions are still appropriate for data taken of weekly or lower frequency, especially when the data is obtained from developed financial markets. For that reason, in the numerical part of the paper, we investigate the robustness of the derived high-dimensional asymptotic results to the violation of both normality and independence
{considering multivariate $t$-distributions and  the CCC-GARCH model} (see, \citet{bollerslev1990modelling}). The results of the simulation study indicate the presence of overestimation for the slope parameter of the efficient frontier and underestimation of the variance of the global minimum variance portfolio under the $t$-model, while the asymptotic normality can still be used for the rest of the desired quantities. Moreover, the presence of autocorrelation between squared asset returns has only minor influence on the derived high-dimensional distributions. It means that, when the data follows a heavy tailed distribution, it is not fully clear what happens with the derived asymptotic distributions, and this case should be considered with care. We expect that the asymptotic results will depend on the fourth moments of the noise variables, and we will study this case deeper in the future papers.

The rest of the paper is organized as follows. In Section 2, we derive the finite-sample joint distribution of  $(\hat{V}_{GMV},\hat{\bw}_{GMV}^\top, \hat{R}_{GMV},\hat{s},\hat{\bv}^\top)^\top$. This result is then used to establish the sampling distributions of the estimated optimal portfolio weights and their estimated characteristics in Section 3. Section 4 presents the asymptotic distributions of the estimated weights derived under the large-dimensional asymptotics. The results of the finite-sample performance of the asymptotic distributions and the robustness analysis to the distributional assumptions imposed on the data-generating process are investigated in Section 5, while final remarks are given in Section 6. The technical derivations are moved to the appendix.

\section{Exact sampling distribution of $\hat{V}_{GMV}$, $\hat{\bw}_{GMV}$, $\hat{R}_{GMV}$, $\hat{s}$, and $\hat{\bv}$}

Throughout the paper we assume that the $p$-dimensional vectors of asset returns $\bx_1,\bx_2,...,\bx_n$ are independent and normally distributed with mean vector $\bmu$ and covariance matrix $\bSigma$, i.e. $\bx_i\sim \mathcal{N}_p(\bmu,\bSigma)$ for $i=1,...,n$. While \citet{fama1976} argued that the distribution of monthly asset returns can be well approximated by the normal distribution, \citet{tu2004data} found no significant impact of heavy tails on the performance of optimal portfolios.

The stochastic representation of $\hat{V}_{GMV}$, $\hat{\btheta}$, $\hat{R}_{GMV}$, $\hat{s}$, and $\hat{\bet}$ is derived in a more general case, namely by considering linear combinations of $\hat{\btheta}$ and $\hat{\bet}$ expressed as
\[\hat{\btheta}=\bL \hat{\bw}_{GMV} \qquad \text{and} \qquad  \hat{\bet}=\bL \hat{\bv},
\]
where $\bL$ is a $k \times p$ matrix of constant with $k<p-1$ and $rank(\bL)=k$.In Sections 2 and 3 we assume that $k<p-1$, while a stronger condition on $k$ is imposed in the derivation of the high-dimensional asymptotic results given in Section 4. Here, we assume that $k$ remains finite, while the portfolio dimension $p$ increases together with the sample size $n$.  The matrix $\bL$ represents the investors interest and views of the portfolio weights. If $\bL$ is selected to be the row vector with one on the $i$th element and zero otherwise, for instance $(1,0,0,...,0)$, then the investor would be interested in the $i$th weight and its distribution. If the distribution of the $i$th weight is centered around zero, then one can start to question whether or not the true weight is actually zero and in turn if the asset should be present in the portfolio.

In the same manner as above, we define the population counterparts of $\hat{\btheta}$ and $\hat{\bet}$ given by
\[\btheta=\bL \bw_{GMV} \qquad \text{and} \qquad \bet=\bL \bv.
\]

Since $\hbmu$ and $\hbSigma$ are independently distributed (cf. \citet{rencher1998multivariate}), the conditional distribution of $(\hat{V}_{GMV}$, $\hat{\btheta}^\top$, $\hat{R}_{GMV}$, $\hat{s}$, $\hat{\bet}^\top)^\top$ under the condition $\hbmu=\tbmu$ is equal to the distribution of $(\hat{V}_{GMV},\hat{\btheta}^\top,\tilde{R}_{GMV},\tilde{s},\tilde{\bet}^\top)^\top$ with
\begin{equation}\label{con_5quant}
\tilde{R}_{GMV}=\frac{\tbmu^\top\hbSigma^{-1}\bi}{\bi^\top \hbSigma^{-1}\bi},~~
\tilde{s}=\tbmu^\top \hbQ \tbmu,~~\text{and}~~
\tilde{\bet}=\frac{\bL\hbQ \tbmu}{\tbmu^\top \hbQ \tbmu},
\end{equation}
while their population counterparts we denote by:
\begin{equation}\label{breve_5quant}
\breve{R}_{GMV}=\frac{\tbmu^\top\bSigma^{-1}\bi}{\bi^\top \bSigma^{-1}\bi},~~
\breve{s}=\tbmu^\top \bQ \tbmu,~~\text{and}~~
\breve{\bet}=\frac{\bL\bQ \tbmu}{\tbmu^\top \bQ \tbmu}.
\end{equation}

Let the symbol $\stackrel{d}{=}$ denote the equality in distribution. In Lemma \ref{th1} we derive the joint distribution of $(\hat{V}_{GMV},\hat{\btheta}^\top,\tilde{R}_{GMV},\tilde{s},\tilde{\bet}^\top)^\top$, whose result is an intermediate step towards Theorem \ref{th2}. For that reason, Lemma \ref{th1} is placed in the Appendix. In Theorem \ref{th2} we present a joint stochastic representation of $\hat{V}_{GMV}$, $\hat{\btheta}$, $\hat{R}_{GMV}$, $\hat{s}$, and $\hat{\bet}$, which will be used in the next section to characterize the distribution of portfolio weights on the efficient frontier. Furthermore, we use the notation $\bt \sim t_p(r)$ to indicate that a random vector $\bt$ of size $p$ follows a standardized multivariate t distribution with $r$ degrees of freedom. Since the multivariate t distribution is not uniquely defined, we state the density to be used. The density and further information   can be found in \citet{gupta2013elliptically}. If $\bt \sim t_p(r)$, then we imply that it has density
\begin{equation}
f_{\bt} (\by) = \frac{\Gamma (p)}{(\pi r)^{p/2} \Gamma(r/2)} \left( 1 + \frac{\by^\top \by}{r}\right)^{- \frac{p+r}{2}},
\end{equation}
where $\Gamma (\cdot)$ is the gamma function. We omit the subindex in $t_p$ when it is a one dimensional t distribution. The proof of Theorem \ref{th2} is given in the appendix.

\begin{theorem}\label{th2}
Let $\bx_1,\bx_2,...,\bx_n$ be independent and normally distributed with mean vector $\bmu$ and covariance matrix $\bSigma$, i.e. $\bx_i\sim \mathcal{N}_p(\bmu,\bSigma)$ for $i=1,...,n$ with $n>p$. Define $\bM=(\bL^\top,\tbmu,\bi)^\top$ and assume that $rank(\bM)=k+2$. Let $\bSigma$ be positive definite. Then, a joint stochastic representation of $\hat{V}_{GMV}$, $\hat{R}_{GMV}$, $\hat{\btheta}$, $\hat{s}$, and $\hat{\bet}$ is given by
\begin{enumerate}[(i)]
\item \label{th2_VG} $\hat{V}_{GMV}\stackrel{d}{=}\frac{V_{GMV}}{n-1}\xi_1$;
\item \label{th2_RG}
$\hat{R}_{GMV}\stackrel{d}{=}R_{GMV}+\sqrt{V_{GMV}}\left(\frac{z_1}{\sqrt{n}}+\sqrt{f}\frac{t_1}{\sqrt{n-p+1}}\right)$;
\item \label{th2_btheta}\begin{eqnarray}
\hat{\btheta}&\stackrel{d}{=}&\btheta+\sqrt{V_{GMV}}
\Bigg(\frac{s\bet+\bz_2/\sqrt{n}}{\sqrt{f}}\frac{t_1}{\sqrt{n-p+1}}\nonumber \\
&+&\left(\bL \bQ \bL^\top-\frac{\left(s\bet+\bz_2/\sqrt{n}\right)\left(s\bet+\bz_2/\sqrt{n}\right)^\top}
{f}\right)^{1/2}
\sqrt{1+\frac{t_1^2}{n-p+1}}\frac{\bt_2}{\sqrt{n-p+2}}\Bigg);\nonumber
\end{eqnarray}
\item \label{th2_s}$\hat{s}\stackrel{d}{=}(n-1)\left(1+\frac{t_1^2}{n-p+1}\right)\frac{f}{\xi_2}$ with
\begin{equation}\label{th2_f}
f=\frac{\xi_3}{n}+ \left(s\bet+ \frac{\bz_2}{\sqrt{n}}\right)^\top\left(\bL \bQ \bL^\top\right)^{-1}\left( s\bet+ \frac{\bz_2}{\sqrt{n}}\right);
\end{equation}
\item \label{th2_bet} \begin{eqnarray}
\hat{\bet}&\stackrel{d}{=}&\frac{s\bet+\bz_2/\sqrt{n}}{f}
+\frac{1}{\sqrt{f}}\left(\bL \bQ \bL^\top-\frac{\left(s\bet+\bz_2/\sqrt{n}\right)\left(s\bet+\bz_2/\sqrt{n}\right)^\top}
{f}\right)^{1/2} \nonumber \\
&\times& \left(\frac{1}{\sqrt{1+\frac{t_1^2}{n-p+1}}}\frac{\bt_2}{\sqrt{n-p+2}}\frac{t_1}{\sqrt{n-p+1}}
+ \left(\bI_k+f\frac{\bt_2\bt_2^\top}{n-p+2}\right)^{1/2}\frac{\bt_3}{\sqrt{n-p+3}} \right)\nonumber
\end{eqnarray}
\end{enumerate}
where $\xi_1\sim \chi^2_{n-p}$, $\xi_2\sim \chi^2_{n-p+2}$, $\xi_3\sim \chi^2_{p-k-1;n\bmu^\top\bA \bmu}$, $z_1\sim \mathcal{N}(0,1)$, $\bz_2\sim \mathcal{N}_k(\mathbf{0},\bL \bQ \bL^\top)$, $t_1 \sim t(n-p+1)$, $\bt_2\sim t_k(n-p+2)$, and $\bt_3 \sim t_k(n-p+3)$ are mutually independent with
\begin{equation}\label{th2_bA}
\bA=\bQ- \bQ \bL^\top \left(\bL \bQ \bL^\top\right)^{-1}\bL \bQ.
\end{equation}
\end{theorem}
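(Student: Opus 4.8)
The plan is to isolate the two independent sources of randomness, the sample mean $\hbmu$ and the Wishart sample covariance $\hbSigma$, and to treat them in turn. Since the $\bx_i$ are i.i.d.\ $\mathcal{N}_p(\bmu,\bSigma)$, the vector $\hbmu\sim\mathcal{N}_p(\bmu,\bSigma/n)$ and the matrix $(n-1)\hbSigma\sim\mathcal{W}_p(n-1,\bSigma)$ are independent, so I would first condition on $\hbmu=\tbmu$. In the conditional model $\hat V_{GMV}=1/(\bi^\top\hbSigma^{-1}\bi)$, $\hat\btheta=(\bL\hbSigma^{-1}\bi)/(\bi^\top\hbSigma^{-1}\bi)$, and the quantities $\tilde R_{GMV},\tilde s,\tilde\bet$ of \eqref{con_5quant} are all explicit (Schur-complement) functions of the single $(k+2)\times(k+2)$ matrix $\bM\hbSigma^{-1}\bM^\top$ with $\bM=(\bL^\top,\tbmu,\bi)^\top$. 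The distributional engine is the classical fact that, for a fixed full-row-rank $\bM$, $(n-1)(\bM\hbSigma^{-1}\bM^\top)^{-1}\sim\mathcal{W}_{k+2}\big(n-p+k+1,(\bM\bSigma^{-1}\bM^\top)^{-1}\big)$, which is a genuine non-degenerate Wishart precisely because $n>p$; applied to the single row $\bi$ it already gives $\hat V_{GMV}\stackrel{d}{=}V_{GMV}\,\xi_1/(n-1)$ with $\xi_1\sim\chi^2_{n-p}$, that is \eqref{th2_VG}.

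For the joint law of the remaining quantities I would build a joint stochastic representation of the relevant functions of $\bM\hbSigma^{-1}\bM^\top$ by a triangular (Cholesky-type) factorization, equivalently by a chain of conditionings in the order $\bi$, then $\tbmu$ given $\bi$, then the rows of $\bL$ given $(\bi,\tbmu)$. Each step is governed by the elementary partitioned-Wishart identities: a diagonal Schur complement of a Wishart is a scalar times a $\chi^2$, while the associated regression block is, conditionally, Gaussian with covariance proportional to that Schur complement. This produces the remaining building blocks $t_1\sim t(n-p+1)$, $\xi_2\sim\chi^2_{n-p+2}$, $\bt_2\sim t_k(n-p+2)$, $\bt_3\sim t_k(n-p+3)$ --- the Student vectors arising on dividing the conditional Gaussians by the independent square roots of the relevant $\chi^2$'s --- and yields the conditional (given $\tbmu$) versions of \eqref{th2_RG}--\eqref{th2_bet}, expressed through the population analogues $\breve R_{GMV},\breve s,\breve{\bet}$ defined in \eqref{breve_5quant} (whose laws, unlike those of $\xi_i,t_1,\bt_j$, still depend on $\tbmu$). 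The slight repackaging in \eqref{th2_s} --- displaying $(n-1)\big(1+t_1^2/(n-p+1)\big)/\xi_2$ in place of the bare $(n-1)\breve s/\chi^2_{n-p+1}$ that this step first yields --- is reconciled by the Beta--Gamma identity $\xi_2\,V/(V+Z^2)\stackrel{d}{=}\chi^2_{n-p+1}$ for independent $\xi_2\sim\chi^2_{n-p+2}$, $V\sim\chi^2_{n-p+1}$, $Z\sim\mathcal{N}(0,1)$, and this is precisely the rewriting that lets the single $t_1$ be shared across \eqref{th2_RG}, \eqref{th2_btheta}, \eqref{th2_s}, \eqref{th2_bet}.

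The third step restores the randomness of $\hbmu$. Using $\bQ\bSigma\bQ=\bQ$, $\bA\bSigma\bA=\bA$, $\bA\bL^\top=\bzero$ and $\bA\bSigma\bQ=\bA$ (which also give $\mathrm{rank}(\bA)=p-k-1$ and the mutual orthogonality of the relevant linear forms in $\tbmu$), I would decompose $\tbmu\sim\mathcal{N}_p(\bmu,\bSigma/n)$ into its three independent Gaussian parts along $\bSigma^{-1}\bi$, along $\bQ\bL^\top$, and in the range of $\bA$. This gives $\breve R_{GMV}\stackrel{d}{=}R_{GMV}+\sqrt{V_{GMV}}\,z_1/\sqrt n$; $\bL\bQ\tbmu\stackrel{d}{=}s\bet+\bz_2/\sqrt n$ with $\bz_2\sim\mathcal{N}_k(\bzero,\bL\bQ\bL^\top)$ (recall $s\bet=\bL\bQ\bmu$); and, since $\bSigma^{1/2}\bA\bSigma^{1/2}$ is idempotent of rank $p-k-1$, $n\,\tbmu^\top\bA\tbmu\stackrel{d}{=}\xi_3\sim\chi^2_{p-k-1;\,n\bmu^\top\bA\bmu}$ --- all of these independent of one another and of the Wishart-side variables. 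Writing $\breve s=\tbmu^\top\bQ\tbmu=(\bL\bQ\tbmu)^\top(\bL\bQ\bL^\top)^{-1}(\bL\bQ\tbmu)+\tbmu^\top\bA\tbmu$ then reproduces exactly the quantity $f$ of \eqref{th2_f}, and substituting these representations of $\breve R_{GMV},\breve s,\breve{\bet}$ into the conditional forms turns them into \eqref{th2_RG}--\eqref{th2_bet}.

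Finally I would assemble the pieces, read off \eqref{th2_VG}--\eqref{th2_bet}, and check that $\xi_1,\xi_2,\xi_3,z_1,\bz_2,t_1,\bt_2,\bt_3$ are mutually independent. I expect the main obstacle to be precisely this bookkeeping through the nested conditionings: tracking which Gaussian fluctuation of the Wishart layer pairs with which $\chi^2$ to build $t_1,\bt_2,\bt_3$, forcing all the $\chi^2$ degrees of freedom to come out as $n-p$, $n-p+1$, $n-p+2$, $n-p+3$, and --- the most delicate point --- ensuring that the $t$-type vectors appearing in \eqref{th2_RG}, \eqref{th2_btheta}, \eqref{th2_s} and \eqref{th2_bet} are literally the \emph{same} random objects (with the $\tbmu$-dependent coefficients arranged so that each marginal comes out correct while the joint dependence is preserved), so that the resulting representation is genuinely joint and not merely a list of marginals. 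The matrix algebra around the Schur complements, $\bL\bQ\bL^\top$ and $\bA$, is routine by comparison.
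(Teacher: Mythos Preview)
Your proposal is correct and follows essentially the same route as the paper: condition on $\hbmu$, exploit that $(n-1)(\bM\hbSigma^{-1}\bM^\top)^{-1}$ is $(k+2)$-variate Wishart (equivalently $(n-1)^{-1}\bM\hbSigma^{-1}\bM^\top$ is inverse Wishart), unwind by partitioned (inverse-)Wishart identities, and finally split $\hbmu$ into the three independent Gaussian pieces along $\bSigma^{-1}\bi$, $\bQ\bL^\top$ and the range of $\bA$. The paper packages the Wishart step as a separate lemma (citing Muirhead and the inverse-Wishart partition results of Bodnar--Okhrin) and partitions in the order $[\bL,\tbmu]\,|\,\bi$ and then $\bL\,|\,\tbmu$ \emph{within the Schur complement} $\tilde{\bG}$, conditioning on the full $(\hat\btheta,\tilde R_{GMV})$; this yields $(n-1)\tilde B_{22}/\tilde s\sim\chi^2_{n-p+2}$ directly, so no Beta--Gamma repackaging is needed, and the factor $1+t_1^2/(n-p+1)$ appears automatically as $\tilde B_{22}/\breve s$. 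Your fully sequential $(\bi,\tbmu,\bL)$ chain is a legitimate variant that reaches the same endpoint, but the paper's grouping makes the sharing of $t_1$ across \eqref{th2_RG}--\eqref{th2_bet} immediate and spares the extra identity.
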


The results of Theorem \ref{th2} provides a simple way to simulate observations from the distribution of $\hat{V}_{GMV}$, $\hat{R}_{GMV}$, $\hat{\btheta}$, $\hat{s}$, and $\hat{\bet}$. To simulate observations from the joint distribution, we only need to simulate random variables from well-known distributions. Moreover, the total dimension of independently simulated variables is equal to $(3k+5)$, which is considerably small when direct simulation would imply simulating $p \times p$ matrices from a Wishart distribution and a $p$-dimensional vector from a normal distribution.
To this end, we point out that both the square roots in \eqref{th2_btheta} and \eqref{th2_bet} can be computed analytically, which will further facilitate  speeding up the simulation study. This observation is based on the following two equalities
\begin{equation}
  (\bD - \bb\bb^\top)^{1/2} =  \bD^{1/2} \left(\bI -  c \bD^{-1/2} \bb \bb^\top \bD^{-1/2}  \right)
\end{equation}
where $\bD^{1/2}$ is a symmetric square root of $\bD$, $c=(1-\sqrt{1-\bb^\top \bD^{-1} \bb})/\bb^\top \bD^{-1} \bb$
and
\begin{equation}\label{eqn:mat_sqrt}
  (\bI + \bd\bd^\top)^{1/2} =  \bI+ a \bd \bd^\top
\end{equation}
where $a=(\sqrt{1+\bd^\top \bd}-1)/\bd^\top \bd$. Hence, it holds that
\begin{flalign}\label{sq_root1}
&\left(\bL \bQ \bL^\top-\frac{\left(s\bet+\bz_2/\sqrt{n}\right)\left(s\bet+\bz_2/\sqrt{n}\right)^\top}
{f}\right)^{1/2} &\nonumber \\
=&\left(\bL \bQ \bL^\top\right)^{1/2}\left(\bI_k-
\frac{1-\sqrt{\frac{\xi_3}{nf}}}{f-\frac{\xi_3}{n}}
\left(\bL \bQ \bL^\top\right)^{-1/2}\left(s\bet+\bz_2/\sqrt{n}\right)\left(s\bet+\bz_2/\sqrt{n}\right)^\top\left(\bL \bQ \bL^\top\right)^{-1/2}\right)&
\end{flalign}
and
\begin{equation}\label{sq_root2}
 \left(\bI_k+f\frac{\bt_2\bt_2^\top}{n-p+2}\right)^{1/2}=\bI_k+\left(\sqrt{1+f\frac{\bt_2^\top\bt_2}{n-p+2}}-1\right)\frac{\bt_2\bt_2^\top}{\bt_2^\top\bt_2}.
\end{equation}

In equations \eqref{sq_root1} and \eqref{sq_root2} the matrix inverse and square roots are functions of population quantities. They only need to be computed once, independently of the length of the generated sample. The same argument cannot be performed when simulations of the joint distribution are obtained through simulating the sample covariance matrix and the sample mean vector directly. Hence, Theorem \ref{th2} provides an efficient algorithm to generate samples of arbitrary large size from the joint distribution of $\hat{V}_{GMV}$, $\hat{R}_{GMV}$, $\hat{\btheta}$, $\hat{s}$, and $\hat{\bet}$ relative to simulating $\hat{\bSigma}$ and $\hbmu$ directly. The findings of Theorem \ref{th2} also leads to an efficient way of sampling from the sample distribution of the optimal portfolio weights and their estimated characteristics, which will be discussed in detail in the next section. 

\section{Exact sampling distribution of optimal portfolio weights}
The weights of the optimal portfolios that belong to the efficient frontier have the following structure
\begin{equation}\label{w_g}
\bw_g = \bw_{GMV} + g(R_{GMV},V_{GMV}, s) \mathbf{v}
\end{equation}
with their $k$ linear combinations expressed as
\begin{equation}\label{eqn:w}
\bL\bw_g = \btheta + g(R_{GMV},V_{GMV}, s) \bet,
\end{equation}
where the function $g(R_{GMV},V_{GMV}, s)$ determines a specific type of an optimal portfolio. This function depends on $\bmu$ and $\bSigma$ only through the three quantities $R_{GMV}$, $V_{GMV}$, and $s$, which fully determine the efficient frontier in the mean-variance space. By considering the general form of \eqref{eqn:w}, we are able to cover a number of well-known optimal portfolios: the global minimum variance (GMV) portfolio, the mean-variance (MV) portfolio, the expected maximum exponential utility (EU) portfolio, the tangency (T) portfolio, the optimal portfolio that maximizes the Sharpe ratio (SR), the minimum Value-at-Risk (MVaR) portfolio, and the minimum conditional Value-at-Risk (MCVaR) portfolio, the maximum Value-of-Return (MVoR) portfolio, the maximum conditional Value-of-Return (MCVoR) portfolio, among others. The specific choices of $g(.,.,,)$ for each of these optimal portfolios are provided in Table \ref{tab1}.

\begin{table}[ptbh]
\begin{center}
\begin{tabular}{c|c|c}
\hline\hline
Portfolio & $g(R_{GMV},V_{GMV}, s)$ & Additional quantities\\
\hline \hline
GMV& $0$& \\
\hline
MV& $\mu_0 - R_{GMV}$ & $\mu_0 \in \mathds{R}$ -target expected return\\
\hline
EU& $\gamma^{-1} s$& $ \gamma > 0$ is the risk-aversion coefficient\\
\hline
T& $V_{GMV} s/(R_{GMV}-r_f)$ & $r_f$ is the risk-free return\\
\hline
SR& $V_{GMV} s/ R_{GMV}$&\\
\hline
MVaR& $s\sqrt{V_{GMV} /(z^2_\alpha-s)}$& $z_\alpha=\Phi^{-1}(\alpha)$ \\
\hline
MCVaR& $s\sqrt{V_{GMV} /(k^2_\alpha-s)}$&$k_\alpha=\text{exp}\{-z^2_\alpha/2\}/(2\pi(1-\alpha))$\\
\hline
MVoR& $\frac{(R_{GMV}+v_0)s + \sqrt{z_\alpha^2 s\left((R_{GMV}+v_0)^2+(s-z_\alpha^2) V_{GMV}\right)}}{z_\alpha^2- s}$ & $v_0>0$ is the target value-at-risk\\
\hline
MCVoR& $\frac{(R_{GMV}+k_0)s + \sqrt{k_\alpha^2 s\left((R_{GMV}+k_0)^2+(s-k_\alpha^2) V_{GMV}\right)}}{k_\alpha^2- s}$ & $k_0$ is the target conditional value-at-risk\\
\hline\hline
\end{tabular}
\end{center}
\caption{Choice of the function $g$ for several optimal portfolios. The symbol $\Phi(.)$ denotes the distribution function of the standard normal distribution and $\Phi^{-1}(.)$ stands for its inverse.}%
\label{tab1}%
\end{table}

Let $\hat{\bw}_g$ denote the sample estimator of the optimal portfolio weights given in the general form as in \eqref{eqn:w}, which is obtained by plugging the sample mean vector and the sample covariance matrix instead of the unknown population counterparts. The $k$ linear combinations of the optimal portfolio weights are estimated by
\begin{equation}\label{eqn:what}
\bL\hat{\bw}_g= \hbtheta +  g(\hat{R}_{GMV}, \hat{V}_{GMV}, \hat{s}) \hbet.
\end{equation}

By Theorem \ref{th2} the exact sampling distribution of \eqref{eqn:what} is derived in terms of its stochastic representation. The results are summarized in Theorem \ref{th3}, whose proof follows from Theorem \ref{th2}.

\begin{theorem}\label{th3}
  Under the conditions of Theorem \ref{th2}, it holds that
\begin{flalign}
\bL \hat{\bw}_g &\stackrel{d}{=} \btheta + \left( \sqrt{\frac{V_{GMV}}{f}} \frac{t_1}{\sqrt{n-p+1}} +
\frac{g(\hat{R}_{GMV}, \hat{V}_{GMV}, \hat{s})}{f} \right) \left(s\bet+\bz_2/\sqrt{n} \right) \nonumber\\
&+ \left( \bL \bQ \bL^\top-\frac{\left(s\bet+\bz_2/\sqrt{n}\right)\left(s\bet+\bz_2/\sqrt{n}\right)^\top}{f} \right)^{1/2}\nonumber \\
 &\times\Bigg(\sqrt{V_{GMV}} \sqrt{1+\frac{t_1^2}{n-p+1} } + \frac{g(\hat{R}_{GMV}, \hat{V}_{GMV}, \hat{s})}{\sqrt{f}} \frac{t_1/\sqrt{n-p+1}}{\sqrt{1+\frac{t_1^2}{n-p+1}}}
 \Bigg)\frac{\bt_2}{\sqrt{n-p+2}}\nonumber\\
  &+ \frac{g(\hat{R}_{GMV}, \hat{V}_{GMV}, \hat{s})}{\sqrt{f}}  \left( \bI_k + f\frac{\bt_2 \bt_2^\top}{n-p+2}\right)^{1/2} \frac{\bt_3}{\sqrt{n-p+3}}\label{eqn:th3_samp}
\end{flalign}
where the joint stochastic representation of $\hat{V}_{GMV}$, $\hat{R}_{GMV}$ and $\hat{s}$ is given in \eqref{th2_VG}-\eqref{th2_bet} of Theorem \ref{th2}.
\end{theorem}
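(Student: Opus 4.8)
The plan is to obtain \eqref{eqn:th3_samp} directly from Theorem~\ref{th2} by substituting the stochastic representations of $\hat{\btheta}$ and $\hat{\bet}$ into the defining identity \eqref{eqn:what}, $\bL\hat{\bw}_g = \hbtheta + g(\hat{R}_{GMV}, \hat{V}_{GMV}, \hat{s})\,\hbet$. The point that makes this legitimate is that Theorem~\ref{th2} provides a \emph{joint} stochastic representation: the five quantities $\hat{V}_{GMV}$, $\hat{R}_{GMV}$, $\hat{\btheta}$, $\hat{s}$ and $\hat{\bet}$ are expressed \emph{simultaneously} as fixed measurable functions of one and the same family of mutually independent variables $\xi_1,\xi_2,\xi_3,z_1,\bz_2,t_1,\bt_2,\bt_3$ (with $f$ as in \eqref{th2_f}). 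In particular, by parts~(i), (ii) and (iv) of Theorem~\ref{th2}, the scalar $g(\hat{R}_{GMV}, \hat{V}_{GMV}, \hat{s})$ is itself such a function of $\xi_1,\xi_2,\xi_3,z_1,\bz_2,t_1$. Hence the right-hand side of \eqref{eqn:what} is a fixed Borel map applied to the random vector represented in Theorem~\ref{th2}, and applying that map to the stochastic representation leaves the distribution unchanged; this is exactly the content of the asserted $\stackrel{d}{=}$.

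Concretely, I would abbreviate $\bb:=s\bet+\bz_2/\sqrt n$, $\bD:=\bL\bQ\bL^\top$ and $g:=g(\hat{R}_{GMV}, \hat{V}_{GMV}, \hat{s})$, substitute part~(iii) of Theorem~\ref{th2} for $\hat{\btheta}$ and part~(v) for $\hat{\bet}$ into \eqref{eqn:what}, and then collect like terms. Both representations are built from the same three independent ``blocks'': the vector $\bb$, the vector $(\bD-\bb\bb^\top/f)^{1/2}\bt_2$, and (present only in $\hat{\bet}$) the $\bt_3$-contribution, and organising the sum along these blocks is purely mechanical. The coefficient of $\bb$ receives $\sqrt{V_{GMV}}\,t_1/(\sqrt f\,\sqrt{n-p+1})$ from $\hat{\btheta}$ and $g/f$ from $g\hat{\bet}$; the coefficient of $(\bD-\bb\bb^\top/f)^{1/2}\bt_2$ receives $\sqrt{V_{GMV}}\,\sqrt{1+t_1^2/(n-p+1)}/\sqrt{n-p+2}$ from $\hat{\btheta}$ and $(g/\sqrt f)(t_1/\sqrt{n-p+1})/\bigl(\sqrt{1+t_1^2/(n-p+1)}\,\sqrt{n-p+2}\bigr)$ from $g\hat{\bet}$; and the $\bt_3$-block comes through $g\hat{\bet}$ alone with scalar coefficient $g/\sqrt f$. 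Using $\sqrt{V_{GMV}}/\sqrt f=\sqrt{V_{GMV}/f}$ and assembling the three blocks leads to \eqref{eqn:th3_samp}, with the joint law of $\hat{V}_{GMV},\hat{R}_{GMV},\hat{s}$ carried along through $g$ as recorded in parts~(i)--(v) of Theorem~\ref{th2}.

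I do not expect a genuine obstacle here: the statement is a deterministic corollary of Theorem~\ref{th2}, and everything beyond that theorem is the gathering of like terms. The single point that calls for care — and the one I would spell out — is that one must invoke the \emph{joint} representation of Theorem~\ref{th2}, not the marginals, because the scalar multiplier $g(\hat{R}_{GMV}, \hat{V}_{GMV}, \hat{s})$ and the vector $\hat{\bet}$ must be realised on the same probability space in terms of the same underlying variables; this is precisely why Theorem~\ref{th2} was formulated as a joint representation in the first place. A minor bookkeeping subtlety is the factor $\sqrt{1+t_1^2/(n-p+1)}$, which enters the $(\bD-\bb\bb^\top/f)^{1/2}\bt_2$-block through the \emph{numerator} of the $\hat{\btheta}$-piece but through the \emph{denominator} of the $g\hat{\bet}$-piece, so the two contributions must be kept as a sum rather than merged into a single power of it.
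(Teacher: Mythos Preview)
Your proposal is correct and follows exactly the paper's approach: the paper's proof is a single line stating that the result follows directly from Theorem~\ref{th2}, and you have spelled out precisely this substitution-and-collection argument. Your emphasis on invoking the \emph{joint} representation --- so that $g(\hat{R}_{GMV},\hat{V}_{GMV},\hat{s})$ and $\hat{\bet}$ are realised through the same underlying variables --- is exactly the point that makes the one-line proof legitimate.
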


From Theorem \ref{th3} we can derive a number of important results. First, it provides a complete characterization of the sampling distribution of the estimators for the optimal portfolio weights. This distribution can be assessed by drawing samples with independent observations from the derived stochastic representation of a relatively large size and then applying the well-established statistical methods for estimating the distribution function, the density, the moments, etc. Second, the obtained stochastic representation in Theorem \ref{th3} provides an efficient way for generating samples from the finite-sample distribution of $\bL \hat{\bw}_g$ following the discussion provided in Section 2 after Theorem \ref{th2}, which is based on drawing independent realizations from well-known univariate and multivariate distributions. To this end, we note that the two square roots in (\ref{eqn:th3_samp}) should be computed as given by \eqref{sq_root1} and \eqref{sq_root1}. Similarly to the prior discussion, using these simplifications the derived stochastic representation can be rewritten to include matrix inverses and square roots of population quantities. Once more, these objects should only be computed once during the whole simulation study. Third, for the chosen values of the population quantities used in the simulation study, we can construct concentration sets of optimal portfolio weights. Fourth, an important probabilistic result about the sampling distribution of $\bL \hat{\bw}_g$ follows directly from the derived stochastic representation, namely that the finite-sample distribution of $\bL \hat{\bw}_g$ depends on the population mean vector $\bmu$ and the population covariance matrix $\bSigma$ through $R_{GMV}$, $V_{GMV}$, $s$, $\btheta$, $\bet$, and $\bL \bQ \bL$. To sample from the distribution of $\bL \hat{\bw}_g$ we only need to fix these seven quantities. In particular, in the case of a single linear combination, i.e. when $k=1$, we only have to fix six univariate quantities independently of the dimension $p$ of the data-generating process.

In a similar way, we derive statistical inference for the estimated characteristics of optimal portfolio with weights $\hat{\bw}_g$ as given by \eqref{w_g}. The expected return of the optimal portfolio with the weights \eqref{w_g} is given by
\begin{equation}\label{eqn:R}
R_g = R_{GMV} + g(R_{GMV},V_{GMV}, s),
\end{equation}
while its variance is
\begin{equation}\label{eqn:V}
V_g = V_{GMV} + \frac{g(R_{GMV},V_{GMV}, s)^2}{s}.
\end{equation}
Similarly, the Value-at-Risk (VaR), the Conditional Value-at-Risk (CVaR), the Value-of-Return (VoR) and the Conditional Value-of-Return (CVoR) are computed by
\begin{eqnarray}
VaR_g &=& -\left( R_{GMV} + g(R_{GMV},V_{GMV}, s) \right) - z_\alpha \sqrt{V_{GMV} + \frac{g(R_{GMV},V_{GMV}, s)^2}{s}}  ,\label{eqn:VaR}\\
CVaR_g &=&  -\left( R_{GMV} + g(R_{GMV},V_{GMV}, s) \right) - k_\alpha \sqrt{V_{GMV} + \frac{g(R_{GMV},V_{GMV}, s)^2}{s}},\label{eqn:CVaR}
\end{eqnarray}
and by symmetry
\begin{eqnarray}
VoR_g &=& \left( R_{GMV} + g(R_{GMV},V_{GMV}, s) \right) - z_\alpha \sqrt{V_{GMV} + \frac{g(R_{GMV},V_{GMV}, s)^2}{s}} ,\label{eqn:VoR}\\
CVoR_g &=& \left( R_{GMV} + g(R_{GMV},V_{GMV}, s) \right) - k_\alpha \sqrt{V_{GMV} + \frac{g(R_{GMV},V_{GMV}, s)^2}{s}} .\label{eqn:CVoR}
\end{eqnarray}
Equation \eqref{eqn:VaR} and \eqref{eqn:CVaR} describe the VaR and CVaR respectively for a specific portfolio and not a single asset. Since risk measures simply measure the risk of random objects, the interpretation of the VaR remains the same as in the case of the univariate case. The VaR measures the loss at a certain confidence level and specifies a quantile. The CVaR specifies the mean loss when a loss larger {than} (or equal to) the VaR occurs, a tail-conditional expectation, hence its name Conditional Value-at-Risk. By a change of sign, the same holds for the CVoR and VoR.

Inserting the sample mean vector and the sample covariance matrix in \eqref{eqn:R}-\eqref{eqn:CVoR} instead of the population counterparts, we get the sample estimators of the optimal portfolio characteristics. The application of Theorem \ref{th2} leads to the statement of their (joint) sampling distribution, which is presented in Theorem \ref{th4}

\begin{theorem}\label{th4}
Under the conditions of Theorem \ref{th2}, the stochastic representation of the estimated characteristic of optimal portfolio are obtained as in \eqref{eqn:R}-\eqref{eqn:CVoR} where $R_{GMV}$, $V_{GMV}$, and $s$ are replaced by their sample counterparts $\hat{R}_{GMV}$, $\hat{V}_{GMV}$, and $\hat{s}$
with
\begin{eqnarray*}
\hat{V}_{GMV}&\stackrel{d}{=}& \frac{V_{GMV}}{n-1}\xi,\\
\hat{R}_{GMV}&\stackrel{d}{=}& R_{GMV} + \sqrt{\frac{V_{GMV}}{n} \left(1+\frac{p-1}{n-p+1} \psi \right)}z, \\
\hat{s}&\stackrel{d}{=}&\frac{(n-1)(p-1)}{n(n-p+1)} \eta,
\end{eqnarray*}
where $\xi\sim \chi^2_{n-p}$, $\psi \sim F(p-1, n-p+1, ns)$, $z \sim N(0,1)$ are mutually independent.
\end{theorem}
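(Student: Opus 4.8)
The plan is to obtain the joint stochastic representation of the scalar triple $(\hat V_{GMV},\hat R_{GMV},\hat s)$ directly from parts \eqref{th2_VG}, \eqref{th2_RG} and \eqref{th2_s} of Theorem~\ref{th2} and then to simplify it by elementary distributional algebra. Since every characteristic in \eqref{eqn:R}--\eqref{eqn:CVoR} is a deterministic function of $(\hat R_{GMV},\hat V_{GMV},\hat s)$, the statement follows by substituting the simplified representations of these three quantities into \eqref{eqn:R}--\eqref{eqn:CVoR}.

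First I would show that the auxiliary variable $f$ defined in \eqref{th2_f} satisfies $nf\dis\chi^2_{p-1;\,ns}$ and is independent of $\xi_1,\xi_2,t_1,z_1$. Writing $\bg=s\bet+\bz_2/\sqrt n$ and using $s\bet=\bL\bQ\bmu$, one has $\bg\sim\mathcal N_k(\bL\bQ\bmu,\,\tfrac1n\bL\bQ\bL^\top)$; the affine change of variables $\bu=\sqrt n\,(\bL\bQ\bL^\top)^{-1/2}\bg$ then gives $n\,\bg^\top(\bL\bQ\bL^\top)^{-1}\bg=\bu^\top\bu\sim\chi^2_{k;\,\lambda}$ with $\lambda=n\,\bmu^\top\bQ\bL^\top(\bL\bQ\bL^\top)^{-1}\bL\bQ\bmu$. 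By the definition \eqref{th2_bA} of $\bA$ we have $n\,\bmu^\top\bA\bmu=ns-\lambda$, so $\xi_3\sim\chi^2_{p-k-1;\,ns-\lambda}$; since $\xi_3$ and $\bz_2$ are independent by Theorem~\ref{th2}, additivity of independent noncentral $\chi^2$ laws yields $nf=\xi_3+\bu^\top\bu\sim\chi^2_{p-1;\,ns}$ (in particular, the law of $f$ does not depend on $\bL$).

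Next I would treat the three components in turn. Part \eqref{th2_VG} already gives $\hat V_{GMV}\dis\frac{V_{GMV}}{n-1}\xi$ with $\xi=\xi_1\sim\chi^2_{n-p}$. For $\hat R_{GMV}$ I would write $t_1=Z/\sqrt{W/(n-p+1)}$ with $Z\sim\mathcal N(0,1)$ and $W\sim\chi^2_{n-p+1}$ independent; conditionally on $(f,W)$ the bracket in part \eqref{th2_RG} is $\mathcal N(0,\tfrac1n+\tfrac fW)$, and $\tfrac1n+\tfrac fW=\tfrac1n\bigl(1+\tfrac{p-1}{n-p+1}\psi\bigr)$ with $\psi=\dfrac{nf/(p-1)}{W/(n-p+1)}\sim F(p-1,n-p+1,ns)$ by the previous step and independence of $f$ and $W$. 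Representing the bracket as $\sqrt{\tfrac1n(1+\tfrac{p-1}{n-p+1}\psi)}\,z$ defines a standard normal $z$ whose conditional law given $(f,W)$ is free of $(f,W)$, so $z$ is independent of $\psi$; as $\xi=\xi_1$ is independent of $(f,W,z)$, the triple $(\xi,\psi,z)$ is mutually independent. For $\hat s$ I would use $1+t_1^2/(n-p+1)=(W+Z^2)/W$ to rewrite part \eqref{th2_s} as $\hat s\dis\frac{(n-1)f(W+Z^2)}{W\xi_2}$; with $B=W/(W+Z^2)\sim\mathrm{Beta}(\tfrac{n-p+1}{2},\tfrac12)$, which is independent of $\xi_2\sim\chi^2_{n-p+2}$, the Beta--Gamma identity gives $B\xi_2\sim\chi^2_{n-p+1}$, and since $nf$ is independent of $(W,Z,\xi_2)$ we obtain $\hat s\dis\frac{n-1}{n}\cdot\frac{nf}{B\xi_2}=\frac{(n-1)(p-1)}{n(n-p+1)}\,\eta$ with $\eta=\dfrac{nf/(p-1)}{B\xi_2/(n-p+1)}\sim F(p-1,n-p+1,ns)$.

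I expect the main obstacle to be the bookkeeping in the $f$ step: one must verify that the $\bL$-dependent ingredients of $f$ recombine into a single noncentral $\chi^2_{p-1;\,ns}$, which rests on the identity $n\bmu^\top\bA\bmu+\lambda=ns$ forced by the definition \eqref{th2_bA} of $\bA$ together with the independence of $\xi_3$ and $\bz_2$. The only other point requiring care is to confirm that the reparametrisations $t_1\mapsto(Z,W)$ and $(W,Z^2,\xi_2)\mapsto(B,\xi_2)$ keep the independence structure of Theorem~\ref{th2} intact, so that the claimed mutual independence of $\xi,\psi,z$ and the marginal law of $\eta$ are justified.
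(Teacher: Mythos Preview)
Your distributional-algebra route is more elementary than the paper's and the marginal steps are correct: the identity $nf\sim\chi^2_{p-1;\,ns}$ via additivity of independent noncentral $\chi^2$ laws (using $n\bmu^\top\bA\bmu+\lambda=ns$) is valid, and the Beta--Gamma argument for the marginal law of $\hat s$ is fine. However, there is a genuine gap concerning the \emph{joint} law of $(\hat R_{GMV},\hat s)$. The theorem, as the paper uses it (see the paragraph immediately following it), is a joint stochastic representation in which the $F$ variable driving $\hat R_{GMV}$ is \emph{the same} as the one determining $\hat s$ (i.e.\ $\eta=\psi$, so that only three mutually independent variables $\xi,\psi,z$ govern all the characteristics in \eqref{eqn:R}--\eqref{eqn:CVoR}). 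In your construction $\psi=\dfrac{nf/(p-1)}{W/(n-p+1)}$ is a function of $(f,W)$, whereas $\eta=\dfrac{nf/(p-1)}{B\xi_2/(n-p+1)}$ is a function of $(f,W,Z,\xi_2)$; these are two different random variables with the same marginal law but they are not equal, and in fact your $z$ (which depends on $Z$) is not independent of your $\eta$ (which depends on $Z$ through $B=W/(W+Z^2)$). So your argument delivers the three marginals and the independence $\xi\perp\psi\perp z$, but not the joint representation needed for the characteristics, which are functions of $(\hat R_{GMV},\hat V_{GMV},\hat s)$ simultaneously.

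The paper closes this gap by a different device: starting from the joint density of $(\hat R_{GMV},\hat s,t_1,f)$ induced by Theorem~\ref{th2}, it integrates out $t_1$ via a Gaussian integral (after completing the square) and then $f$, and identifies the resulting joint density of $(\hat R_{GMV},\hat s)$ as the product of a noncentral $F$ density for $\hat s$ and the density of $\mathcal N\bigl(R_{GMV},(1+\tfrac{n}{n-1}\hat s)\tfrac{V_{GMV}}{n}\bigr)$ for $\hat R_{GMV}\mid\hat s$. This directly yields the joint representation with a single $\psi$ in both places. If you want to stay with your approach, you would need to show, in your parametrisation, that $\hat R_{GMV}$ is conditionally normal given $\hat s$ with variance $\tfrac{V_{GMV}}{n}\bigl(1+\tfrac{n}{n-1}\hat s\bigr)$; your current $\psi$ does not equal $\tfrac{n(n-p+1)}{(n-1)(p-1)}\hat s$, so this step is still missing.
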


The proof of Theorem \ref{th4} is given in the appendix. It has to be noted that the joint distribution of all six estimators $(\hat{R}_g,\hat{V}_g,\widehat{VaR}_g,\widehat{CVaR}_g,\widehat{VoR}_g,\widehat{CVoR}_g)$ is completely determined by three mutually independent random variables $\xi$, $\psi$, and $z$ with the standard marginal univariate distribution. Moreover, it depends on the unknown population mean vector and covariance matrix over three univariate quantities $R_{GMV}$, $V_{GMV}$, and $s$. These uniquely determine the whole efficient frontier in the mean-variance space. To this end, the stochastic representation derived for the estimated optimal portfolio characteristics appears to be simpler than the one obtained in Theorem \ref{th3} for the corresponding estimator of the optimal portfolio weights. Similarly, the independent realizations from the joint distribution of $(\hat{R}_g,\hat{V}_g,\widehat{VaR}_g,\widehat{CVaR}_g,\widehat{VoR}_g,\widehat{CVoR}_g)$ can be drawn efficiently by employing the results of Theorem \ref{th4}.

Another interesting financial application of the derived theoretical findings of Theorem \ref{th4} is present in the case of the EU portfolio, whose sample expected return and sample variance possess the following stochastic representations:
\begin{eqnarray}
 \hat{R}_{EU} & \stackrel{d}{=}& \hat{R}_{GMV} + \gamma^{-1} \hat{s}, \label{eqn:eu_rg} \\
 \hat{V}_{EU} & \stackrel{d}{=}& \hat{V}_{GMV} + \gamma^{-2}\hat{s}. \label{eqn:eu_vg}
\end{eqnarray}
It appears that $\hat{R}_{EU}$ and $\hat{R}_{EU}$ are conditionally independent, given the estimated slope parameter of the efficient frontier $\hat{s}$. In the limit case, when the risk aversion coefficient $\gamma$ tends to infinity and the EU portfolio tends towards the vertex of the efficient frontier, the two estimated portfolio characteristics are unconditionally independent. In all other cases, the dependence between them is fully captured by the estimated geometry of the efficient frontier.

\section{High-dimensional asymptotic distributions}
The derived stochastic representations of Sections 3 and 4 are also very useful in the derivation of the asymptotic distributions of the estimators of optimal portfolio weights and their estimated characteristics. To this end, we note that the same approach can be used independently whether the dimension of the data generating process $p$ is assumed to be fixed or allowed to grow together with the sample size $n$. These two regimes have been intensively discussed in statistical literature. The former asymptotic regime, i.e. with fixed $p$, is called the ``standard asymptotics'' (see, e.g., \citet{le2012asymptotics}). Here, both the sample mean and the sample covariance matrix are proven to be consistent estimators for the corresponding population counterparts. Challenges arise when $p$ is comparable to $n$, i.e. both the dimension $p$ and the sample size $n$ tend to infinity while their ratio $p/n$ tends to a positive constant $c \in [0,1)$, the so-called concentration ratio. It is called ``large dimensional asymptotics'' or ``Kolmogorov asymptotics'' (c.f., \citet{buhlmann2011statistics}, \citet{cai2011analysis}), while the case $c=0$ corresponds to the standard asymptotics.

Although,  a large amount of research has been done on the asymptotic behavior of functionals which only include the sample mean vector or the sample covariance matrix under the high-dimensional asymptotics (see, e.g., \citet{bai2010spectral}, \citet{cai2010}, \citet{wang2014}, \citet{BodnarGuptaParolya2016}, \citet{bodnar2019optimalmean}, \citet{bodnardetteparolya2019}), the situation becomes more complicated when both the sample mean vector and the sample (inverse) covariance matrix are present in the expressions. The problem is still unsolved and attracts both  researchers and  practitioners. In this section, we show how the derived stochastic representations of Sections 2 and 3 can be employed in the derivation of the high-dimensional asymptotic distributions of the estimated optimal portfolios and their characteristics. The main advantage of the suggested approach based on the stochastic representations is {the clear  separation between} the deterministic quantities and the stochastic ones. By using the stochastic representation we can determine the joint asymptotic distributions of the latter.

Throughout this section, we will impose the following technical conditions on the functions involving the population mean vector and the population covariance matrix:
\begin{description}
\item[(A1)] There exist $m$ and $M$, such that
\begin{equation}\label{A1_1}
0<m\le \bmu^\top \bSigma^{-1} \bmu \le M <\infty
\quad\text{and} \quad
0<m\le \bi^\top \bSigma^{-1} \bi \le M <\infty
\end{equation}
uniformly in $p$. Moreover, for a linear combination of optimal portfolio weights determined by the $p$-dimensional vector $\mathbf{l}$ it holds uniformly in $p$ that
\begin{equation}\label{A1_2}
0<m\le \mathbf{l}^\top \bSigma^{-1} \mathbf{l} \le M <\infty.
\end{equation}
\end{description}

Assumption (A1) ensures that the efficient frontier $R_{GMV}$, $V_{GMV}$, and $s$ as well as the components of $k$ linear combinations of optimal portfolio weights $\mathbf{L} \mathbf{w}_g$ are all finite numbers in higher dimensions. The financial interpretation shows  that even though we have an infinite amount of assets, we should not be able to gain an infinite amount of return for any amount of risk taken, i.e., the slope of the efficient frontier is bounded. It also states that the variance of the GMV portfolio is finite and bounded from zero, which can also be interpreted as  investing in a market (regardless of how big it is), should imply some risk, but it can be neither infinite nor zero.
Mathematically, it may happen depending on $\bmu$ and $\bSigma$ that some quantities of $R_{GMV}$, $V_{GMV}$, $s$, and $\mathbf{L} \mathbf{w}_g$ tend to infinity as $p$ increases. In such cases, one should replace the constants $m$ and $M$ in \eqref{A1_1} and \eqref{A1_2} by $p^{\kappa}m$ and $p^{\kappa}M$ for some $\kappa>0$. This approach would lead only to minor changes in the expressions of the derived asymptotic covariance matrices in this section, where some terms might disappear (see, e.g., \citet{bodnar2019central} for a similar discussion). 

To this end, by an abuse of notations, we use the same notations for the functions involving the population mean vector $\bmu$ and the population covariance matrix $\bSigma$ and their corresponding deterministic limits. For instance, $\bmu^\top \bSigma^{-1} \bmu$ will also be used to denote the limit $\lim_{p \to \infty} \bmu^\top \bSigma^{-1} \bmu$. {The interpretation of the quantities becomes clear from the text where they are used.}

\subsection{High-dimensional asymptotic distribution of $\hat{V}_{GMV}$, $\hat{R}_{GMV}$, $\hat{\btheta}$, $\hat{s}$, and $\hat{\bet}$}
Before presenting the high-dimensional asymptotic results for the estimated optimal portfolio weights and their characteristic, we derive the asymptotic stochastic representation for the five quantities $\hat{V}_{GMV}$, $\hat{R}_{GMV}$, $\hat{\btheta}$, $\hat{s}$, and $\hat{\bet}$. It is presented in Theorem~\ref{th5} in terms of several independently normally distributed random variables/vectors. Such a presentation allows also to characterize the asymptotic dependence structure $\hat{V}_{GMV}$, $\hat{R}_{GMV}$, $\hat{\btheta}$, $\hat{s}$, and $\hat{\bet}$ as well as to derive the expression of the asymptotic covariance matrix which is given after Theorem \ref{th5}.

\begin{theorem}\label{th5}
 Under the conditions of Theorem~\ref{th2} and Assumption (\textbf{A1}), it holds that {\small
 \begin{enumerate}[(i)]
  \item $\sqrt{n-p}\left(\hat{V}_{GMV} - \frac{1-p/n}{1-1/n} V_{GMV} \right) \stackrel{d}{\rightarrow} \sqrt{2}(1-c) V_{GMV} u_1$,
  \item $\sqrt{n-p}\left( \hat{R}_{GMV}-R_{GMV}\right) \stackrel{d}{\rightarrow} \sqrt{V_{GMV}}\left(\sqrt{1-c} u_4 + \sqrt{s+c} u_5 \right)$,
  \item $\sqrt{n-p}\left( \hat{\btheta}-\btheta\right) \stackrel{d}{\rightarrow} \sqrt{V_{GMV}}\left(\frac{s u_5 }{\sqrt{s+c}}\bet+ \left(\bL \bQ \bL^\top - \frac{s^2}{s+c}\bet \bet^\top\right)^{1/2}\bu_6 \right)$,
  \item $ \sqrt{n-p}\left( \hat{s} - \frac{(s+p/n)(1-1/n)}{1-p/n+2/n} \right)\stackrel{d}{\rightarrow}  \frac{1}{1-c}\Bigg( \sqrt{2(1-c)\left(c+2\bmu^\top \bA \bmu \right)}u_2 + 2s\sqrt{(1-c)} \bet^\top (\bL\bQ\bL^\top)^{-1/2} \bu_3 + \sqrt{2}(s+c)u_7\Bigg)$,
  \item $\sqrt{n-p}\left(\hat{\bet} - \frac{s}{s+p/n}\bet \right)\stackrel{d}{\rightarrow}  \frac{1}{\sqrt{s+c}}\left(\bL \bQ \bL^\top - \frac{s^2}{s+c}\bet \bet^\top\right)^{1/2}\bu_8+ \frac{\sqrt{1-c}}{(s+c)} \Bigg(\bL \bQ \bL^\top - 2\frac{s^2}{s+c}\bet \bet^\top \Bigg) (\bL \bQ \bL^\top)^{-1/2} \bu_3$\\ $-\frac{s\sqrt{2(1-c)\left(c+2\bmu^\top \bA \bmu \right)}u_2}{(s+c)^2}\bet$
  \end{enumerate}
  }
for $p/n \rightarrow c\in [0,1)$ as $n \rightarrow \infty$ where $u_1, u_2, \bu_3, u_4, u_5, \bu_6, u_7, \bu_8$  are mutually independent, $u_1, u_2, u_4, u_5, u_7 \sim N(0,1)$ and $ \bu_3, \bu_6, \bu_8 \sim N_k(\mathbf{0}, \bI_k)$.
\end{theorem}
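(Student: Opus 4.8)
The plan is to pass to the limit directly in the exact stochastic representations of Theorem~\ref{th2}, keeping careful track of the order of each stochastic term. Since $p/n\to c\in(0,1)$, all the degrees of freedom $n-p$, $n-p+1$, $n-p+2$, $n-p+3$ and $p-k-1$ tend to infinity, so every elementary ingredient in Theorem~\ref{th2} stabilises after centring and normalisation. First I would record the following expansions, each an instance of a standard limit theorem: by the central limit theorem for (possibly noncentral) chi-square variables, $(\xi_j-\mathrm{E}\,\xi_j)/\sqrt{\mathrm{Var}\,\xi_j}\stackrel{d}{\rightarrow}N(0,1)$ for $j=1,2,3$, which produce $u_1$, $u_7$, $u_2$ respectively; writing each (multivariate) Student variable as a Gaussian vector over the root of an independent scaled chi-square gives $t_1\stackrel{d}{\rightarrow}N(0,1)=:u_5$ and $\bt_2,\bt_3\stackrel{d}{\rightarrow}N_k(\mathbf 0,\bI_k)=:\bu_6,\bu_8$, where in particular $t_1,\bt_2,\bt_3$ are $O_P(1)$, so $t_1^2/(n-p+1)$ and $\bt_2^\top\bt_2/(n-p+2)$ are $O_P((n-p)^{-1})$; and $z_1=:u_4$, $\bz_2=(\bL\bQ\bL^\top)^{1/2}\bu_3$ with $\bu_3\sim N_k(\mathbf 0,\bI_k)$, $\bz_2/\sqrt n\stackrel{P}{\rightarrow}\mathbf 0$. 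Because $\xi_1,\xi_2,\xi_3,z_1,\bz_2,t_1,\bt_2,\bt_3$ are mutually independent, so are $u_1,u_2,\bu_3,u_4,u_5,\bu_6,u_7,\bu_8$, which already gives the independence claim.

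The crux is the analysis of $f$ in \eqref{th2_f}. Here I would use $s\bet=\bL\bQ\bmu$ and the algebraic identity $s^2\,\bet^\top(\bL\bQ\bL^\top)^{-1}\bet=\bmu^\top\bQ\bL^\top(\bL\bQ\bL^\top)^{-1}\bL\bQ\bmu=\bmu^\top(\bQ-\bA)\bmu=s-\bmu^\top\bA\bmu$, together with $\xi_3/n\stackrel{P}{\rightarrow}c+\bmu^\top\bA\bmu$ (since $(p-k-1)/n\to c$ and $\mathrm{E}\,\xi_3=(p-k-1)+n\bmu^\top\bA\bmu$), to conclude $f\stackrel{P}{\rightarrow}s+c$. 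Assumption~(\textbf{A1}) ensures $s$, $\bmu^\top\bA\bmu$ and $\bet^\top(\bL\bQ\bL^\top)^{-1}\bet$ are bounded (and, after passing if necessary to a subsequence along which these scalar functionals converge, that their limits are well defined and finite), and it also makes $\bL\bQ\bL^\top-\tfrac{s^2}{s+c}\bet\bet^\top$ positive semidefinite, because its finite-$n$ analogue $\bL\bQ\bL^\top-f^{-1}(s\bet+\bz_2/\sqrt n)(s\bet+\bz_2/\sqrt n)^\top$ is. A one-term expansion of $f$ about $s+p/n$, using the chi-square CLT for $\xi_3$ and the Gaussianity of $\bz_2$, then yields
\[
\sqrt{n-p}\,\bigl(f-(s+p/n)\bigr)\stackrel{d}{\rightarrow}\Delta_f:=\sqrt{2(1-c)\bigl(c+2\bmu^\top\bA\bmu\bigr)}\,u_2+2s\sqrt{1-c}\;\bet^\top(\bL\bQ\bL^\top)^{-1/2}\bu_3 ,
\]
which is precisely the $(u_2,\bu_3)$-combination appearing in parts~(iv) and~(v).

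It then remains to substitute these expansions into \eqref{th2_VG}--\eqref{th2_bet} and apply Slutsky's theorem and the continuous mapping theorem for the matrix inverse and the matrix square root (legitimate by the positive semidefiniteness noted above). Part~(i) follows at once from the $\xi_1$-expansion and $\tfrac{n-p}{n-1}\to1-c$. In part~(ii) the term $z_1/\sqrt n$ contributes $\sqrt{1-c}\,u_4$ and $\sqrt f\,t_1$ contributes $\sqrt{s+c}\,u_5$, the fluctuation of $f$ being of lower order since $t_1=O_P(1)$. In part~(iii) the first summand converges via $t_1\stackrel{d}{\rightarrow}u_5$ and $(s\bet+\bz_2/\sqrt n)/\sqrt f\to s\bet/\sqrt{s+c}$, and the second via $\sqrt{1+t_1^2/(n-p+1)}\to1$ and $\bt_2\stackrel{d}{\rightarrow}\bu_6$. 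For part~(iv) I would write $\hat s\stackrel{d}{=}(n-1)\bigl(1+O_P((n-p)^{-1})\bigr)f/\xi_2$, expand $1/\xi_2$ to first order (this produces the $u_7$-term), insert the expansion of $f$ (giving the $u_2,\bu_3$-terms), and use $\tfrac{n-1}{n-p+2}\to(1-c)^{-1}$. For part~(v), the term $(s\bet+\bz_2/\sqrt n)/f$ is expanded about $s\bet/(s+p/n)$, its numerator fluctuation $\bz_2/\sqrt n$ and denominator fluctuation $\Delta_f/\sqrt{n-p}$ yielding the $\bu_3$- and $u_2$-contributions, which after simplification combine into $\tfrac{\sqrt{1-c}}{s+c}\bigl(\bL\bQ\bL^\top-2\tfrac{s^2}{s+c}\bet\bet^\top\bigr)(\bL\bQ\bL^\top)^{-1/2}\bu_3$ plus the $\bet$-directed $u_2$-term; in the second term of \eqref{th2_bet}, the sub-term carrying $t_1$ is $O_P((n-p)^{-1})$ and disappears after the $\sqrt{n-p}$-scaling, while $(\bI_k+f\bt_2\bt_2^\top/(n-p+2))^{1/2}\to\bI_k$ and $\sqrt{n-p}\,\bt_3/\sqrt{n-p+3}\stackrel{d}{\rightarrow}\bu_8$, producing the $\bu_8$-term.

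The main obstacle will be the order bookkeeping in parts~(iv) and~(v): one must separate the cross-products that are $O_P((n-p)^{-1})$, and hence vanish after $\sqrt{n-p}$-scaling, from those that are $O_P(n^{-1/2})$ and survive, and must propagate the single fluctuation $\Delta_f$ of $f$ coherently through a denominator so that $u_2$ and $\bu_3$ enter with the correct coefficients and all eight limit variables appear consistently across the five statements. A secondary point is that $s$, $c$ and $\bmu^\top\bA\bmu$ need not themselves converge under~(\textbf{A1}) alone; this is handled by the subsequence argument together with the convention (stated before the theorem) of identifying a functional with its limit, and one must also keep in mind that $\sqrt{n-p}$ times the difference between a centring written with $p/n$ and one written with $c$ need not be $o(1)$, which is why $p/n$ is retained in the centring constants whereas $c$ appears in the coefficients of the limit laws.
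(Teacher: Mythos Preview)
Your proposal is correct and follows essentially the same route as the paper: both arguments start from the exact stochastic representations of Theorem~\ref{th2}, use the chi-square CLT \eqref{ap_th5_eq1} for $\xi_1,\xi_2,\xi_3$, the Student-to-normal convergence for $t_1,\bt_2,\bt_3$, establish $f\stackrel{P}{\rightarrow}s+c$ together with the fluctuation $\sqrt{n-p}(f-(s+p/n))\stackrel{d}{\rightarrow}\Delta_f$, and then finish via Slutsky's lemma and the continuous mapping theorem. Your write-up is in places more explicit than the paper's (the subsequence remark for (\textbf{A1}), the positive-semidefiniteness needed for the square-root map, and the $p/n$ vs.\ $c$ distinction in centrings), but the underlying strategy is the same.
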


Several interesting results are summarized in the statement of Theorem \ref{th5}, whose proof is given in the appendix. We observe that three quantities related to the estimators of the weights and of the characteristics of the GMV portfolio, the vertex point on the efficient frontier, are asymptotically independent of the estimated slope parameter of the efficient frontier $\hat{s}$ and the self-financing portfolio $\hat{\bet}$. However, these two are not asymptotically independent. This may not be so surprising since it is implicitly present in the estimated weights of the self-financing portfolio $\bv$ from its definition \eqref{sample_5quant}. Moreover, the sample variance of the GMV portfolio appears to be asymptotically independent of its estimated expected return $\hat{R}_{GMV}$ and the estimator of the weights $\hat{\btheta}$ following the finite-sample findings of Theorem \ref{th2}. However, it is surprising that the covariance between $\hbtheta$ and $\hat{R}_{GMV}$ is partly determined by the self-financing portfolio $\bet$ due to the deterministic expression close to $u_5$ in the asymptotic stochastic representations of $\sqrt{n-p}\left( \hat{R}_{GMV}-R_{GMV}\right)$ and $\sqrt{n-p}\left( \hat{\btheta}-\btheta\right)$. Finally, the direct application of the derived stochastic representations in Theorem \ref{th5} leads to the expression of the asymptotic covariance matrix as given in Corollary \ref{cor1}.

\begin{corollary}\label{cor1}
 Under the conditions of Theorem~\ref{th2} and Assumption (\textbf{A1}), it holds that
\begin{equation*}
 \sqrt{n-p}\begin{pmatrix}
 \hat{V}_{GMV} - \frac{1-p/n}{1-1/n} V_{GMV} \\
 \hat{R}_{GMV}- R_{GMV} \\
 \hat{\btheta} - \btheta \\
 \hat{s}-\frac{(s+p/n)(1-1/n)}{1-p/n+2/n} \\
 \hat{\bet} - \frac{s}{s+p/n}\bet
 \end{pmatrix} \rightarrow N_{2k+3}\left(\mathbf{0},\mathbf{\Xi} \right)
\end{equation*}
with
\begin{equation*}
\mathbf{\Xi}=
 \begin{pmatrix}
 2 V^2_{GMV} (1-c)^2 & 0 & 0 & 0 & 0 \\
 0 & V_{GMV}(1+s) &  V_{GMV} s \bet^\top & 0 & 0  \\
 0 & V_{GMV} s \bet &   V_{GMV} \bL \bQ \bL^\top & 0 & 0  \\
 0 & 0 & 0 & \Xi_{s,s} & \boldsymbol{\Xi}^\top_{s, \bet}  \\
 0 & 0 & 0 & \boldsymbol{\Xi}_{s, \bet}  &  \boldsymbol{\Xi}_{\bet,\bet}
 \end{pmatrix}
\end{equation*}
for $p/n \rightarrow c\in [0,1)$ as $n \rightarrow \infty$ where
 \begin{eqnarray}
\Xi_{s,s}&=& \frac{2(c+2s)}{(1-c)} + 2\frac{(s+c)^2}{(1-c)^2}, \label{eqn:s_asymp}\\
\boldsymbol{\Xi}_{\bet,\bet}&=& \frac{s+1}{(s+c)^2}\bL\bQ\bL^\top - \frac{s^2(2c(1-c)+(s+c)^2)}{(s+c)^4}\bet \bet^\top, \label{eqn:bet_asymp} \\
\boldsymbol{\Xi}_{s, \bet} &=& \frac{2s\left(2c- s + 4\bmu^\top \bA \bmu) \right)}{(s+c)^2}\bet. \nonumber
\end{eqnarray}
\end{corollary}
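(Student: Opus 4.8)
The plan is to read off the asymptotic covariance matrix directly from the asymptotic stochastic representations in Theorem \ref{th5}, exploiting the fact that $u_1,u_2,\bu_3,u_4,u_5,\bu_6,u_7,\bu_8$ are mutually independent with known (standard normal) distributions. Since a linear combination of these independent Gaussian blocks is jointly Gaussian, the limiting vector in Theorem \ref{th5} is jointly normal and its covariance matrix is obtained by summing the outer products of the deterministic coefficient matrices attached to each $u_i$ (with $u_i$ scalar contributing $\mathrm{coeff}\cdot\mathrm{coeff}^\top$, and $\bu_j\sim N_k(\bzero,\bI_k)$ contributing $\bB_j\bB_j^\top$ for its coefficient matrix $\bB_j$). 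First I would record which variables appear in which component: $u_1$ only in $\hat V_{GMV}$; $u_4,u_5,\bu_6$ in $(\hat R_{GMV},\hat\btheta)$; and $u_2,\bu_3,u_7,\bu_8$ in $(\hat s,\hat\bet)$. This immediately gives the block-diagonal structure of $\mathbf{\Xi}$ — the $(V_{GMV})$ block decouples from everything, and the $(R_{GMV},\btheta)$ block decouples from the $(s,\bet)$ block — because no $u_i$ is shared across those groups.

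Next I would compute each block. The $\hat V_{GMV}$ variance is $(\sqrt2(1-c)V_{GMV})^2 = 2V_{GMV}^2(1-c)^2$. For the $(R_{GMV},\btheta)$ block, the coefficient of $u_4$ is $(\sqrt{V_{GMV}(1-c)},\bzero)$, the coefficient of $u_5$ is $\sqrt{V_{GMV}}(\sqrt{s+c},\,\tfrac{s}{\sqrt{s+c}}\bet)$, and the coefficient of $\bu_6$ is $(0,\,\sqrt{V_{GMV}}(\bL\bQ\bL^\top-\tfrac{s^2}{s+c}\bet\bet^\top)^{1/2})$. Summing outer products: the $(R,R)$ entry is $V_{GMV}(1-c)+V_{GMV}(s+c)=V_{GMV}(1+s)$; the $(R,\btheta)$ entry is $V_{GMV}s\bet^\top$; the $(\btheta,\btheta)$ block is $V_{GMV}\big(\tfrac{s^2}{s+c}\bet\bet^\top + (\bL\bQ\bL^\top-\tfrac{s^2}{s+c}\bet\bet^\top)\big)=V_{GMV}\bL\bQ\bL^\top$, where the cancellation uses $(\bD^{1/2})(\bD^{1/2})^\top=\bD$ for symmetric $\bD$. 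This reproduces the stated $2\times2$ block. For the $(s,\bet)$ block I would do the same bookkeeping with the four variables $u_2,\bu_3,u_7,\bu_8$: $\Xi_{s,s}$ comes from squaring the three scalar coefficients $\tfrac{1}{1-c}\sqrt{2(1-c)(c+2\bmu^\top\bA\bmu)}$, $\tfrac{2s}{1-c}\sqrt{1-c}\,\|\bet^\top(\bL\bQ\bL^\top)^{-1/2}\|$ contributions via $\bu_3$, and $\tfrac{\sqrt2(s+c)}{1-c}$; $\boldsymbol{\Xi}_{s,\bet}$ is the cross term carried by the shared variables $u_2$ and $\bu_3$; and $\boldsymbol{\Xi}_{\bet,\bet}$ aggregates the $\bu_3,\bu_8$, and $u_2$ contributions in $\hat\bet$.

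The one identity that does real work is $\bet^\top(\bL\bQ\bL^\top)^{-1}\bet = 1/s$, which I would derive from the definitions $\bet=\bL\bQ\tbmu/(\tbmu^\top\bQ\tbmu)$... wait — more precisely from $\bv=\bQ\bmu/s$, $\bet=\bL\bv$, and $s=\bmu^\top\bQ\bmu$, together with $\bQ\bSigma\bQ=\bQ$; this gives $\bet^\top(\bL\bQ\bL^\top)^{-1}\bet=1/s$ after the algebra, and it is what collapses the messy-looking $\bu_3$ and $u_2$ coefficients into the clean closed forms \eqref{eqn:s_asymp}--\eqref{eqn:bet_asymp}. The main obstacle is therefore not conceptual but the care needed in this last simplification: correctly tracking the quadratic forms $\bet^\top(\bL\bQ\bL^\top)^{-1}\bet$, $\bet^\top(\bL\bQ\bL^\top)^{-1}\bL\bQ\bL^\top(\bL\bQ\bL^\top)^{-1}\bet$, and the cross terms, and verifying that the $\bmu^\top\bA\bmu$ terms land only where claimed (in $\Xi_{s,s}$ and $\boldsymbol{\Xi}_{s,\bet}$ but not in $\boldsymbol{\Xi}_{\bet,\bet}$, since the $u_2$-coefficient in $\hat\bet$ is proportional to $\bet$ and contributes only a rank-one $\bet\bet^\top$ piece). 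Everything else is a routine sum of rank-one outer products.
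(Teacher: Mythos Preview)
Your overall strategy---reading off the covariance matrix from the asymptotic stochastic representation in Theorem~\ref{th5} by summing outer products of the coefficient matrices attached to the independent standard normals---is exactly what the paper does (the paper states Corollary~\ref{cor1} as ``the direct application of the derived stochastic representations in Theorem~\ref{th5}''). The block-diagonal structure and the computations for the $\hat V_{GMV}$ and $(\hat R_{GMV},\hat\btheta)$ blocks are correct as written.

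The concrete gap is the identity you invoke for the $(\hat s,\hat\bet)$ block. The claim $\bet^\top(\bL\bQ\bL^\top)^{-1}\bet = 1/s$ is false in general. From $\bet=\bL\bQ\bmu/s$ and the definition $\bA=\bQ-\bQ\bL^\top(\bL\bQ\bL^\top)^{-1}\bL\bQ$ in \eqref{th2_bA} one gets instead
\[
s^2\,\bet^\top(\bL\bQ\bL^\top)^{-1}\bet
=\bmu^\top\bQ\bL^\top(\bL\bQ\bL^\top)^{-1}\bL\bQ\bmu
=\bmu^\top\bQ\bmu-\bmu^\top\bA\bmu
=s-\bmu^\top\bA\bmu.
\]
This is precisely the identity that makes the $\bmu^\top\bA\bmu$ terms \emph{cancel} in $\Xi_{s,s}$ and in $\boldsymbol\Xi_{\bet,\bet}$: in $\Xi_{s,s}$ the $u_2$-contribution $\tfrac{2(c+2\bmu^\top\bA\bmu)}{1-c}$ combines with the $\bu_3$-contribution $\tfrac{4s^2}{1-c}\bet^\top(\bL\bQ\bL^\top)^{-1}\bet=\tfrac{4(s-\bmu^\top\bA\bmu)}{1-c}$ to give the clean $\tfrac{2(c+2s)}{1-c}$; in $\boldsymbol\Xi_{\bet,\bet}$ the $\bmu^\top\bA\bmu$ brought in by the $u_2$-coefficient is cancelled by the $\bmu^\top\bA\bmu$ that appears when you expand the $\bu_3$-coefficient through the same identity. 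So your description of ``where $\bmu^\top\bA\bmu$ lands'' is off---it does \emph{not} survive in $\Xi_{s,s}$---and the mechanism for its absence in $\boldsymbol\Xi_{\bet,\bet}$ is a genuine cancellation via the correct identity, not merely the rank-one form of the $u_2$-term. With the corrected identity, your bookkeeping plan goes through and yields exactly \eqref{eqn:s_asymp}--\eqref{eqn:bet_asymp}.
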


\subsection{High-dimensional asymptotic distribution of optimal portfolio weights}

The results of Theorem \ref{th5} are used to derive the high-dimensional asymptotic distribution of linear combinations of the estimated optimal portfolio weights $\hat{\bw}_g$ as well as of the corresponding estimated characteristics of the portfolios given in Section 3. Throughout this section we assume that the number of linear combinations $k$ is finite.

Let
\begin{equation}\label{lam}
\hat{\boldsymbol{\lambda}}=(\hat{R}_{GMV}, \hat{V}_{GMV}, \hat{s})^\top
\quad \text{and} \quad
\boldsymbol{\lambda}=\left(R_{GMV}, (1-c)V_{GMV}, \frac{s+c}{1-c}\right)^\top
\end{equation}
where the results of Theorem \ref{th5} show that
\begin{eqnarray*}
\hat{R}_{GMV} - R_{GMV}&=& o_P(1),\\
\hat{V}_{GMV} - (1-c)V_{GMV}&=& o_P(1),\\
\hat{s}-\frac{s+c}{1-c}&=& o_P(1),
\end{eqnarray*}
where $o_P(1){\to}0$ for $p/n \rightarrow c\in[0,1)$ as $n \rightarrow \infty$.

Throughout this section it is assumed that the function $g(x,y,z)$ is differentiable with first order continuous derivatives and define
\begin{eqnarray*}
g_1(x_0,y_0,z_0)&=&\frac{\partial g(x,y,z)}{\partial x}\Bigg|_{(x,y,z)=(x_0,y_0,z_0)}, \\
g_2(x_0,y_0,z_0)&=&\frac{\partial g(x,y,z)}{\partial y}\Bigg|_{(x,y,z)=(x_0,y_0,z_0)}, \\
g_3(x_0,y_0,z_0)&=&\frac{\partial g(x,y,z)}{\partial z}\Bigg|_{(x,y,z)=(x_0,y_0,z_0)}.
\end{eqnarray*}
The asymptotic distribution of $\bL\hat{\bw}_g$ is given in Theorem \ref{th6}, with the proof presented in the appendix.

\begin{theorem}\label{th6}
Let $g(.,.,.)$ be differentiable with first order continuous derivatives. Then, under the conditions of Theorem~\ref{th2} and Assumption (\textbf{A1}), we get
\begin{eqnarray}
&& \sqrt{n-p}\left(\bL\hat{\bw}_g - \left(\btheta + \frac{s g\left(\blambda\right)}{s+p/n}\bet \right) \right)
 \stackrel{d}{\rightarrow} N_k(\mathbf{0}, \bOmega_{\bL,g})
\end{eqnarray}
for $p/n \rightarrow c\in [0,1)$ as $n \rightarrow \infty$ with
\begin{eqnarray}
&&\bOmega_{\bL,g} =  \bigg( \left(\frac{1-c}{s+c} + g\left(\blambda\right)\right)
\frac{ g\left(\blambda\right)}{s+c} + V_{GMV}  \bigg) \bL \bQ \bL^\top
+ s^2\Bigg\{
	2\frac{(1-c)^2 V^2_{GMV}}{(s+c)^2}  g_2\left(\blambda\right)\nonumber \\
&	+&  \left(\frac{g_3\left(\blambda\right)}{1-c} - \frac{ g\left(\blambda\right)}{s+c} \right)^2 \frac{2(1-c)c}{(s+c)^2} + \frac{4(1-c)}{(s+c)^2}
 \Bigg[  g\left(\blambda\right) \left(\frac{g_3\left(\blambda\right)}{1-c}  - \frac{g\left(\blambda\right)}{s+c} \right)
 + s\left(\frac{g_3\left(\blambda\right) }{1-c}  - \frac{ g\left(\blambda\right)}{s+c} \right)^2  \Bigg]\nonumber \\
&	+ & \frac{ V_{GMV}(1-c) }{(s+c)^2} g_1\left(\blambda\right)^2 +  \frac{V_{GMV}}{(s+c)} g_1\left(\blambda\right) + \frac{2}{1-c} g_3\left(\blambda\right)^2 -\frac{ g\left(\blambda\right)^2}{(s+c)^2} \Bigg\} \bet \bet^\top. \label{Omega_g}
\end{eqnarray}
\end{theorem}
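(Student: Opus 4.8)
The plan is to combine the exact stochastic representation of $\bL\hat\bw_g$ from Theorem \ref{th3} with the high-dimensional asymptotic representation of $(\hat V_{GMV},\hat R_{GMV},\hat s,\hbtheta,\hbet)$ from Theorem \ref{th5}, and then push everything through a delta-method argument. Recall from \eqref{eqn:what} that $\bL\hat\bw_g = \hbtheta + g(\hat\blambda)\hbet$ (writing $\hat\blambda = (\hat R_{GMV},\hat V_{GMV},\hat s)^\top$ as in \eqref{lam}). First I would establish the limit of $g(\hat\blambda)$: since $g$ has continuous first derivatives and, by the consistency statements displayed just before Theorem \ref{th6}, $\hat\blambda \to \blambda = (R_{GMV},(1-c)V_{GMV},(s+c)/(1-c))^\top$ in probability, a first-order Taylor expansion gives
\begin{equation*}
g(\hat\blambda) = g(\blambda) + g_1(\blambda)(\hat R_{GMV}-R_{GMV}) + g_2(\blambda)(\hat V_{GMV}-(1-c)V_{GMV}) + g_3(\blambda)\bigl(\hat s - \tfrac{s+c}{1-c}\bigr) + o_P\!\bigl(\tfrac{1}{\sqrt{n-p}}\bigr).
\end{equation*}
Substituting the asymptotic representations from Theorem \ref{th5}(i),(ii),(iv), the $\sqrt{n-p}$-scaled fluctuation of $g(\hat\blambda)$ becomes an explicit linear combination of the independent normal variables $u_1,u_2,\bu_3,u_4,u_5,u_7$ (the centering contributes the deterministic drift, which is absorbed into the centering constant $s\,g(\blambda)/(s+p/n)$ in the statement, after checking that $(s+p/n)\to s+c$ and the finite-$n$ centering terms of Theorem \ref{th5} match up to $o(1/\sqrt{n-p})$).

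Next I would expand the whole expression $\hbtheta + g(\hat\blambda)\hbet$ around its limit. Write $\hbtheta = \btheta + \tfrac{1}{\sqrt{n-p}}\bT$ and $\hbet = \tfrac{s}{s+p/n}\bet + \tfrac{1}{\sqrt{n-p}}\bE$, where $\bT$ and $\bE$ are the (asymptotically normal) fluctuation terms given in Theorem \ref{th5}(iii) and (v), and $g(\hat\blambda) = g(\blambda) + \tfrac{1}{\sqrt{n-p}}G$ with $G$ the linear combination found above. Then
\begin{equation*}
\bL\hat\bw_g = \btheta + g(\blambda)\tfrac{s}{s+p/n}\bet + \tfrac{1}{\sqrt{n-p}}\Bigl( \bT + g(\blambda)\bE + G\,\tfrac{s}{s+c}\bet \Bigr) + o_P\!\bigl(\tfrac{1}{\sqrt{n-p}}\bigr),
\end{equation*}
so that $\sqrt{n-p}\,(\bL\hat\bw_g - \text{centering})$ converges to $\bT + g(\blambda)\bE + G\,\tfrac{s}{s+c}\bet$, a linear functional of the jointly Gaussian vector $(u_1,u_2,\bu_3,u_4,u_5,\bu_6,u_7,\bu_8)$, hence Gaussian with mean zero. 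It remains only to compute its covariance matrix $\bOmega_{\bL,g}$. Here I would collect the coefficient of each independent block: the $\bu_6$ and $\bu_8$ terms (together with the pieces of $\bT$ and $\bE$ multiplying the matrix square root $(\bL\bQ\bL^\top - \tfrac{s^2}{s+c}\bet\bet^\top)^{1/2}$) contribute the $\bL\bQ\bL^\top$ part and a $\bet\bet^\top$ correction; the scalar variables $u_2,u_5,u_7$ and the vector $\bu_3$ (all of which appear in Theorem \ref{th5} multiplied by $\bet$ or by $\bet^\top(\bL\bQ\bL^\top)^{-1/2}$) contribute further $\bet\bet^\top$ terms; and $u_4,u_1$ enter through $g(\blambda)$'s dependence on $\hat R_{GMV}$ and $\hat V_{GMV}$. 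Using the orthonormality of the $u$'s and the identity $\bet^\top(\bL\bQ\bL^\top)^{-1}\bet = 1/s$ (which follows from $\bet = \bL\bQ\tbmu/(\tbmu^\top\bQ\tbmu)$ and the definition of $s$), everything collapses to the two-term form \eqref{Omega_g}.

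The main obstacle I anticipate is the bookkeeping in the covariance computation: the self-financing direction $\bet$ enters through \emph{every} one of the scalar noise terms as well as through the matrix-square-root terms, and through the derivative $g_3$ via $G$, so the $\bet\bet^\top$ coefficient is an algebraic sum of many contributions that must be carefully combined — in particular the cross-terms between the $G$-induced fluctuation and the $\bE$-fluctuation of $\hbet$ (both of which point partly along $\bet$) and the squared $g_3$ term. One subtlety worth isolating is that although the square-root matrix $(\bL\bQ\bL^\top - \tfrac{s^2}{s+c}\bet\bet^\top)^{1/2}$ has rank $k$ and kills the direction $(\bL\bQ\bL^\top)^{-1/2}\bet$ up to the appropriate scalar, so $\bu_6$ and $\bu_8$ genuinely contribute only $\bL\bQ\bL^\top$ minus the rank-one correction; this must be verified rather than assumed. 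A secondary, more routine point is to confirm that replacing the exact centering quantities of Theorem \ref{th2} (which involve $f$, $t_1$, etc.) by the deterministic limits of Theorem \ref{th5} is legitimate — i.e. that the finite-sample representation of Theorem \ref{th3} and the asymptotic one of Theorem \ref{th5} are consistent — but since Theorem \ref{th5} is already proved from Theorem \ref{th2}, this reduces to a continuous-mapping/Slutsky argument on the joint vector.
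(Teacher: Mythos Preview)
Your proposal is correct and follows essentially the same route as the paper's proof: write $\bL\hat\bw_g - (\btheta + \tfrac{s}{s+p/n}g(\blambda)\bet)$ as the sum of $(\hbtheta-\btheta)$, $(\hbet-\tfrac{s}{s+p/n}\bet)g(\hat\blambda)$, and $(g(\hat\blambda)-g(\blambda))\tfrac{s}{s+p/n}\bet$, Taylor-expand $g$, substitute the asymptotic stochastic representations of Theorem~\ref{th5}, and read off $\bOmega_{\bL,g}$ from the coefficients of the mutually independent normals $u_1,\dots,\bu_8$.

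One correction for the bookkeeping step: the identity $\bet^\top(\bL\bQ\bL^\top)^{-1}\bet = 1/s$ that you plan to use is false in general. Since $\bet = \bL\bQ\bmu/s$, one has
\[
\bet^\top(\bL\bQ\bL^\top)^{-1}\bet \;=\; \frac{1}{s^2}\,\bmu^\top\bQ\bL^\top(\bL\bQ\bL^\top)^{-1}\bL\bQ\bmu \;=\; \frac{s-\bmu^\top\bA\bmu}{s^2},
\]
with $\bA$ as in \eqref{th2_bA}. This quantity does appear when you square the $\bu_3$-coefficient, but the resulting $\bmu^\top\bA\bmu$ term cancels exactly against the $\bmu^\top\bA\bmu$ factor carried by the $u_2$-coefficient in Theorem~\ref{th5}; that cancellation is why the final $\bOmega_{\bL,g}$ in \eqref{Omega_g} is free of $\bmu^\top\bA\bmu$, and you should track it explicitly rather than rely on the incorrect shortcut.
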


The results of Theorem~\ref{th6} are derived for the finite number $k$ of linear combinations of the estimated optimal portfolio weights $\hat{\bw}_g$ and, consequently, they cannot be used to verify the consistency of the whole vector $\hat{\bw}_g$. Several consistency results about estimated optimal portfolio weights can be found in \citet{ao2019approaching}.

In the special case of the EU portfolio we get $g(x,y,z)=\gamma^{-1} z$, $g_1(x,y,z)=g_2(x,y,z)=0$, and
$$
\frac{g_3\left(\blambda\right)}{1-c} - \frac{ g\left(\blambda\right)}{s+c}  = \frac{1}{1-c} \gamma^{-1} - \frac{\gamma^{-1} (s+c)}{(1-c)(s+c)}=0.
$$
As a result, the asymptotic covariance matrix of $\bL\hbw_{EU}$ is expressed as
 \begin{flalign}
\bOmega_{\bL,EU}&=\bigg( \left(\frac{1-c}{s+c} + \gamma^{-1} \frac{s+c}{1-c} \right)\frac{\gamma^{-1}}{1-c} + V_{GMV}  \bigg) \bL \bQ \bL^\top
+\frac{(1-2c)\gamma^{-2}s^2}{(1-c)^2}\bet \bet^\top.
\end{flalign}

In the same way, the high-dimensional asymptotic distribution of the estimated optimal portfolio characteristics is obtained. Following (\ref{eqn:R})-(\ref{eqn:CVoR}), $(R_g,V_g,VaR_g,CVaR_g,VoR_g,CVoR_g)$ are functions of $R_{GMV}$, $V_{GMV}$, and $s$ only. On the other hand, Theorem \ref{th5} determines the joint high-dimensional asymptotic distribution of $\hat{R}_{GMV}$, $\hat{V}_{GMV}$, and $\hat{s}$ expressed as
\begin{equation*}
 \sqrt{n-p}\begin{pmatrix}
 \hat{R}_{GMV}- R_{GMV} \\
 \hat{V}_{GMV} - \frac{1-p/n}{1-1/n} V_{GMV} \\
 \hat{s}-\frac{(s+p/n)(1-1/n)}{1-p/n+2/n} \\
 \end{pmatrix} \rightarrow N_3\left(\mathbf{0},\mathbf{\Xi}_{RVs} \right)
\end{equation*}
for $p/n \rightarrow c\in [0,1)$ as $n \rightarrow \infty$ with
\begin{equation*}
\mathbf{\Xi}_{RVs}=
 \begin{pmatrix}
 V_{GMV}(1+s) & 0 & 0 \\
 0 & 2 V^2_{GMV} (1-c)^2 &  0  \\
 0 & 0 & \frac{2(c+2s)}{(1-c)} + 2\frac{(s+c)^2}{(1-c)^2} \\
 \end{pmatrix},
\end{equation*}
which shows that $(\hat{R}_{GMV},\hat{V}_{GMV},\hat{s})$ are asymptotically independently distributed.

The characteristics in equations \eqref{eqn:R} through \eqref{eqn:CVoR} can be viewed as transformations of $R_{GMV},V_{GMV},s$. In a similar fashion to Theorem \ref{th6}, we can also construct high-dimensional asymptotic distributions for these. Let $h_{g,i}(R_{GMV},V_{GMV},s)$ denote the $i$-th characteristic of the optimal portfolio with the weights $\bw_g$ and let
 $h_{g,i}\left(\hat{\blambda}\right)$  denote   its corresponding estimate, where $\lambda$ is defined in \eqref{lam}. In the example of characteristics given by equations \eqref{eqn:R} through \eqref{eqn:CVoR}, $i$ would range from $1$ to $6$. Let the $j$-th first order partial derivative of $h_{g,i}(.)$ at $\blambda$ be denoted by $h_{g,i;j}\left(\blambda\right)$. We get the following result of the high-dimensional distribution of estimated optimal portfolio characteristic, whose proof is obtained from the proof of Theorem \ref{th6}.

\begin{theorem}\label{th7}
Let $h_{g,i}(.,.,.)$, $i=1,...,q$, be differentiable with first order continuous derivatives. Then, under the conditions of Theorem~\ref{th2} and Assumption (\textbf{A1}), we get for $p/n \rightarrow c\in[0,1)$ as $n \rightarrow \infty$
\begin{equation*}
 \sqrt{n-p}\begin{pmatrix}
 h_{g,1}\left(\hat{\blambda}\right)- h_{g,1}\left(\blambda\right) \\
 \vdots \\
 h_{g,q}\left(\hat{\blambda}\right)- h_{g,q}\left(\blambda\right) \\
 \end{pmatrix} \rightarrow N_q\left(\mathbf{0},\mathbf{\Xi}_{h} \right)
\end{equation*}
 with $\mathbf{\Xi}_{h}=\left(\Xi_{h;ij}\right)_{i,j=1,...,q}$ where
\begin{equation}\label{Xi_h}
\Xi_{h;ij}=\sum_{l=1}^{3} \Xi_{RVs;ll} h_{g,i;l}\left(\blambda\right)h_{g,j;l}\left(\blambda\right).
\end{equation}
\end{theorem}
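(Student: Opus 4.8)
The plan is to read off from Theorem~\ref{th5} the joint high-dimensional central limit theorem for the three scalars $(\hat{R}_{GMV},\hat{V}_{GMV},\hat{s})$ and then to transport it through the smooth maps $h_{g,1},\dots,h_{g,q}$ by the multivariate delta method, following the argument in the proof of Theorem~\ref{th6} but with a vector-valued rather than a scalar transformation.

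First I would note that, by the definitions \eqref{eqn:R}--\eqref{eqn:CVoR}, each characteristic $h_{g,i}$ depends on $(\bmu,\bSigma)$ only through $(R_{GMV},V_{GMV},s)$, so $h_{g,i}(\hat{\blambda})$ is a fixed deterministic function of $\hat{\blambda}=(\hat{R}_{GMV},\hat{V}_{GMV},\hat{s})^\top$. Theorem~\ref{th5}, restricted to these three coordinates (the display preceding the statement of Theorem~\ref{th7}), yields $\sqrt{n-p}\,(\hat{\blambda}-\blambda_n)\stackrel{d}{\rightarrow}N_3(\mathbf{0},\mathbf{\Xi}_{RVs})$, where $\blambda_n=\big(R_{GMV},\tfrac{1-p/n}{1-1/n}V_{GMV},\tfrac{(s+p/n)(1-1/n)}{1-p/n+2/n}\big)^\top$ and $\mathbf{\Xi}_{RVs}$ is diagonal precisely because $\hat{R}_{GMV}$, $\hat{V}_{GMV}$ and $\hat{s}$ are asymptotically independent. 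Since $p/n\to c$ we have $\blambda_n\to\blambda$ with $\blambda$ as in \eqref{lam}, and $\blambda_n-\blambda$ is of order $|p/n-c|+O(n^{-1})$, so a Slutsky-type argument allows replacing the centering $\blambda_n$ by $\blambda$ once $\sqrt{n-p}\,(\blambda_n-\blambda)\to\mathbf{0}$; this needs the mild rate $p/n-c=o(n^{-1/2})$ (alternatively one states the theorem with the finite-$n$ centering $h_{g,i}(\blambda_n)$ and no rate is required). This delivers $\sqrt{n-p}\,(\hat{\blambda}-\blambda)\stackrel{d}{\rightarrow}N_3(\mathbf{0},\mathbf{\Xi}_{RVs})$.

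Next I would check the regularity needed for the delta method: under Assumption~(\textbf{A1}) the limits $R_{GMV}$, $V_{GMV}>0$ and $s>0$ are finite and non-degenerate, $g$ is assumed $C^1$ in this section, and inspection of \eqref{eqn:R}--\eqref{eqn:CVoR} shows that every $h_{g,i}$ is a smooth algebraic combination of $R_{GMV}$, $g$, and the strictly positive radicand $V_{GMV}+g(\blambda)^2/s$, hence is continuously differentiable on a neighbourhood of $\blambda$. Writing $\mathbf{h}_g=(h_{g,1},\dots,h_{g,q})^\top$ with Jacobian $J$ at $\blambda$ whose $i$-th row is $\big(h_{g,i;1}(\blambda),h_{g,i;2}(\blambda),h_{g,i;3}(\blambda)\big)$, the multivariate delta method gives $\sqrt{n-p}\,\big(\mathbf{h}_g(\hat{\blambda})-\mathbf{h}_g(\blambda)\big)\stackrel{d}{\rightarrow}N_q(\mathbf{0},J\mathbf{\Xi}_{RVs}J^\top)$; since $\mathbf{\Xi}_{RVs}$ is diagonal with entries $\Xi_{RVs;11},\Xi_{RVs;22},\Xi_{RVs;33}$, its $(i,j)$ entry equals $\sum_{l=1}^{3}\Xi_{RVs;ll}\,h_{g,i;l}(\blambda)\,h_{g,j;l}(\blambda)$, which is exactly \eqref{Xi_h}; note in particular that it involves neither $\bmu^\top\bA\bmu$ nor $\bL\bQ\bL^\top$, reflecting the asymptotic independence of $\hat{R}_{GMV}$, $\hat{V}_{GMV}$ and $\hat{s}$.

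I expect the only genuinely delicate points to be the non-degeneracy and differentiability verification for the various $h_{g,i}$ (in particular that $V_{GMV}+g(\blambda)^2/s>0$, so that the square roots in the VaR/CVaR/VoR/CVoR expressions are smooth at $\blambda$) together with the re-centering from $\blambda_n$ to $\blambda$; everything else is the standard delta method applied to the central limit theorem of Theorem~\ref{th5}, exactly in the spirit of the proof of Theorem~\ref{th6}.
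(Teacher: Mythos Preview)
Your proposal is correct and follows essentially the same route as the paper: the paper simply states that the proof ``is obtained from the proof of Theorem~\ref{th6}'', i.e.\ one applies the multivariate delta method to the joint CLT for $(\hat{R}_{GMV},\hat{V}_{GMV},\hat{s})$ coming from Theorem~\ref{th5}, exploiting that $\mathbf{\Xi}_{RVs}$ is diagonal to obtain \eqref{Xi_h}. Your write-up is in fact more careful than the paper's, since you flag the re-centering issue $\blambda_n\to\blambda$ and the differentiability of the radicand in the VaR/CVaR-type characteristics; the paper glosses over both.
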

\subsection{Interval estimation and high-dimensional test theory}

The results of Theorems \ref{th6} and \ref{th7} indicate that both $\bL\hat{\bw}_g$ and $h_{g,i}\left(\hat{\blambda}\right)$, $i=1,...,q$, are not consistent estimators for $\bL\hat{\bw}_g$ and $h_{g,i}\left(R_{GMV},V_{GMV},s\right)$, $i=1,...,q$, respectively. While the asymptotic bias of the sample estimator of linear combinations of optimal portfolio weights is $\left(\frac{s}{s+c}g\left(\blambda\right)-g\left(R_{GMV},V_{GMV},s\right)\right)\bet$, the asymptotic bias in the estimator of the $i$-th portfolio characteristic is $h_{g,i}\left(\blambda\right)-h_{g,i}\left(R_{GMV},V_{GMV},s\right)$.

On the other hand, the results of Theorem \ref{th5} already provide consistent estimators for $V_{GMV}$, $R_{GMV}$, $\btheta$, $s$, and $\bet$. Namely, they are given by
\begin{eqnarray}
\hat{V}_{GMV;c} &=& \frac{\hat{V}_{GMV}}{1-p/n} \label{hV_c},\\
\hat{R}_{GMV;c}&=& \hat{R}_{GMV} \label{hR_c},\\
\hat{\btheta}_c &=& \hat{\btheta} \label{hbtheta_c},\\
\hat{s}_c&=& \frac{n-p}{n}\left(\hat{s}-\frac{p}{p+n}\right) \label{hs_c},\\
\hat{\bet}_c&=& \frac{\hat{s}_c+p/n}{\hat{s}_c}\hat{\bet}\label{hbet_c}.
\end{eqnarray}
Combining these equalities, we derive consistent estimators for $\bL\hat{\bw}_g$ and $h_{g,i}\left(R_{GMV},V_{GMV},s\right)$ expressed as
\begin{eqnarray}
\bL\hat{\bw}_{g;c}&=&\hat{\btheta} +g\left(\hat{R}_{GMV;c},\hat{V}_{GMV;c},\hat{s}_c\right) \hat{\bet}_c~\text{and}\\
\hat{h}_{g,i,c}&=&h_{g,i}\left(\hat{R}_{GMV;c},\hat{V}_{GMV;c},\hat{s}_c\right).
\end{eqnarray}

In Theorem \ref{th8}, the asymptotic covariance matrices of the consistent estimators of optimal portfolio weights and their characteristics are present.
\begin{theorem}\label{th8}
Let $\blambda=(R_{GMV},V_{GMV},s)^\top$. Then, under the conditions of Theorems \ref{th6} and \ref{th7}, it holds that
\begin{enumerate}[(a)]
\item $ \sqrt{n-p}\left(\bL\hat{\bw}_{g;c} - \bL\bw_g \right) \stackrel{d}{\rightarrow} N_k(\mathbf{0}, \bOmega_{\bL,g,c})
$
for $p/n \rightarrow c\in[0,1)$ as $n \rightarrow \infty$ with
{\small
\begin{eqnarray}\label{Omega_gc}
\bOmega_{\bL,g,c}&=& \bigg( \left(\frac{1-c}{s+c} + \frac{s+c}{s}g\left(\blambda_0\right)\right)
   \frac{ g\left(\blambda_0\right)}{s} + V_{GMV}  \bigg) \bL \bQ \bL^\top\nonumber\\
  &+&s^2\Bigg\{
	2\frac{(1-c) V^2_{GMV}}{s(s+c)}  g_2\left(\blambda_0\right)\nonumber
+  \left(\frac{g_3\left(\blambda_0\right)(s+c)}{s} - \frac{ g\left(\blambda_0\right)}{s} \right)^2 \frac{2(1-c)c}{(s+c)^2}\nonumber\\
& +& \frac{4(1-c)}{(s+c)^2}
 \Bigg[ \frac{s+c}{s} g\left(\blambda_0\right) \left(\frac{g_3\left(\blambda_0\right)(s+c)}{s}  - \frac{g\left(\blambda_0\right)}{s} \right)
 + s\left(\frac{g_3\left(\blambda_0\right)(s+c)}{s}  - \frac{ g\left(\blambda_0\right)}{s} \right)^2  \Bigg]\nonumber \\
&	+ & \frac{ V_{GMV}(1-c) }{s^2} g_1\left(\blambda_0\right)^2 +  \frac{V_{GMV}}{s} g_1\left(\blambda_0\right) + \frac{2(1-c)(s+c)^2}{s^2} g_3\left(\blambda_0\right)^2 -\frac{ g\left(\blambda_0\right)^2}{s^2} \Bigg\} \bet \bet^\top
 \nonumber;
\end{eqnarray}
}
\item for $p/n \rightarrow c\in[0,1)$ as $n \rightarrow \infty$ holds
\begin{equation*}
 \sqrt{n-p}\begin{pmatrix}
 \hat{h}_{g,1,c}- h_{g,1}\left(\blambda_0\right) \\
 \vdots \\
 \hat{h}_{g,q,c}- h_{g,q}\left(\blambda_0\right) \\
 \end{pmatrix} \rightarrow N_q\left(\mathbf{0},\mathbf{\Xi}_{h,c} \right)~~\text{with~~ $\mathbf{\Xi}_{h,c}=\left(\Xi_{h,c;ij}\right)_{i,j=1,...,q}$ }
\end{equation*}
where
\begin{equation}
\Xi_{h,c;ij}=V_{GMV}(1+s) h_{g,i;1}\left(\blambda_0\right)h_{g,j;1}\left(\blambda_0\right)
+2 V^2_{GMV} h_{g,i;2}\left(\blambda_0\right)h_{g,j;2}\left(\blambda_0\right)
+\left(2s^2+4s+2c\right)h_{g,i;3}\left(\blambda_0\right)h_{g,j;3}\left(\blambda_0\right).\label{Xi_hc}
\end{equation}
\end{enumerate}
\end{theorem}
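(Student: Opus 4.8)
The plan is to obtain both statements as a direct consequence of Theorem \ref{th5} combined with the delta method, after checking that the bias-corrected quantities \eqref{hV_c}--\eqref{hbet_c} are consistent and asymptotically normal with the appropriate rate. First I would establish the joint asymptotic behaviour of the corrected triple $(\hat R_{GMV;c},\hat V_{GMV;c},\hat s_c)$. Since $\hat R_{GMV;c}=\hat R_{GMV}$, part (ii) of Theorem \ref{th5} gives the limit directly. For $\hat V_{GMV;c}=\hat V_{GMV}/(1-p/n)$, I would write $\hat V_{GMV;c}-V_{GMV}=\frac{1}{1-p/n}\big(\hat V_{GMV}-\frac{1-p/n}{1-1/n}V_{GMV}\big)+V_{GMV}\big(\frac{1}{1-1/n}-1\big)$; the second term is $O(1/n)=o_P((n-p)^{-1/2})$ and the first, after multiplying by $\sqrt{n-p}$, converges to $\sqrt2(1-c)V_{GMV}u_1/(1-c)=\sqrt2 V_{GMV}u_1$ by part (i). Similarly for $\hat s_c=\frac{n-p}{n}(\hat s-\frac{p}{p+n})$: I would show the deterministic centring $\frac{(s+p/n)(1-1/n)}{1-p/n+2/n}$ in part (iv), scaled by $\frac{n-p}{n}$ and shifted, reduces to $s+o(1/\sqrt{n-p})$, so that $\sqrt{n-p}(\hat s_c-s)\stackrel{d}{\to}(1-c)\cdot\frac{1}{1-c}(\cdots)=\sqrt{2(1-c)(c+2\bmu^\top\bA\bmu)}u_2+2s\sqrt{1-c}\,\bet^\top(\bL\bQ\bL^\top)^{-1/2}\bu_3+\sqrt2(s+c)u_7$. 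Taking the variance of this expression (the $u_i$ being independent standard normals and $\bu_3$ standard normal in $\mathbb R^k$, using $\bet^\top(\bL\bQ\bL^\top)^{-1}\bet=\mathrm{Var}$-type scalar) yields the entry $2s^2+4s+2c$ appearing in part (b), and the independence of the three corrected coordinates follows from the block-diagonal structure established in Corollary \ref{cor1} (the $u_1$, $\{u_2,\bu_3,u_7\}$, $\{u_4,u_5\}$ groups being mutually independent).

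Next, for part (a), I would express $\bL\hat\bw_{g;c}=\hat\btheta_c+g(\hat\blambda_c)\hat\bet_c$ with $\hat\blambda_c=(\hat R_{GMV;c},\hat V_{GMV;c},\hat s_c)$ and Taylor-expand. Since $\hat\btheta_c=\hat\btheta$, the contribution of the first summand is governed by part (iii) of Theorem \ref{th5}. For the product $g(\hat\blambda_c)\hat\bet_c$, I would use $g(\hat\blambda_c)=g(\blambda_0)+\nabla g(\blambda_0)^\top(\hat\blambda_c-\blambda_0)+o_P((n-p)^{-1/2})$ together with the consistency $\hat\bet_c\stackrel{P}{\to}\bet$ and the asymptotic expansion of $\sqrt{n-p}(\hat\bet_c-\bet)$, which itself follows from part (v) of Theorem \ref{th5} after the bias correction $\hat\bet_c=\frac{\hat s_c+p/n}{\hat s_c}\hat\bet$ undoes the shrinkage factor $\frac{s}{s+p/n}$ present there. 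Collecting terms, $\sqrt{n-p}(\bL\hat\bw_{g;c}-\bL\bw_g)$ becomes a linear combination of the independent Gaussian vectors $u_1,\dots,\bu_8$; computing its covariance — grouping the $\bL\bQ\bL^\top$-proportional part and the $\bet\bet^\top$-proportional part and substituting the derivatives $g_1,g_2,g_3$ evaluated at $\blambda_0$ — gives \eqref{Omega_gc}. Here one uses the algebraic identities already recorded in the paper, e.g. $\bL\bQ\bL^\top-\frac{s^2}{s+c}\bet\bet^\top$ as a covariance and $\bet^\top(\bL\bQ\bL^\top)^{-1}\bet$ as the relevant scalar, so that the square-root matrices act cleanly on the independent normals.

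Part (b) is the scalar analogue and is lighter: each characteristic $h_{g,i}$ is a differentiable function of $(R_{GMV},V_{GMV},s)$ alone, so by the multivariate delta method applied to the joint limit of $(\hat R_{GMV;c},\hat V_{GMV;c},\hat s_c)$ just established (which is $N_3$ with diagonal covariance $\mathrm{diag}(V_{GMV}(1+s),\,2V_{GMV}^2,\,2s^2+4s+2c)$), the vector $\sqrt{n-p}(\hat h_{g,i,c}-h_{g,i}(\blambda_0))_{i=1}^q$ converges to a Gaussian with covariance $\Xi_{h,c;ij}=\sum_{l=1}^3 (\text{diagonal entry})_l\, h_{g,i;l}(\blambda_0)h_{g,j;l}(\blambda_0)$, which is exactly the stated formula. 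I expect the main obstacle to be the bookkeeping in part (a): carefully tracking which of the eight independent normals enters $\hat\btheta$, $g(\hat\blambda_c)$ and $\hat\bet_c$, making sure the bias-correction factors are differentiated correctly (so that the spurious shrinkage biases cancel to the stated order and only the $\blambda_0$-evaluated — rather than $\blambda$-evaluated — derivatives survive), and then reorganising the resulting quadratic form into the two rank-structured pieces of \eqref{Omega_gc}; everything else is a routine variance computation with independent Gaussians.
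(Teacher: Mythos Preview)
Your proposal is correct and follows essentially the same route as the paper: Taylor-expand the bias-corrected estimator, feed in the asymptotic stochastic representations from Theorem~\ref{th5}, and read off the Gaussian limit. The only cosmetic difference is that the paper, rather than first deriving the asymptotics of the corrected triple $(\hat R_{GMV;c},\hat V_{GMV;c},\hat s_c)$ and of $\hat{\bet}_c$, immediately rewrites $\sqrt{n-p}(\bL\hat\bw_{g;c}-\bL\bw_g)$ back in terms of the \emph{uncorrected} quantities $\hat V_{GMV},\hat R_{GMV},\hat s,\hat\btheta,\hat\bet$ centered at their biased limits (absorbing the correction factors into modified coefficients in front of $g_2$ and $g_3$), so that the final display is structurally identical to the one in the proof of Theorem~\ref{th6} and the covariance computation can be quoted verbatim from there; part (b) is then dispatched by the same reduction.
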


Since both $\bOmega_{\bL,g,c}$ and $\mathbf{\Xi}_{h,c}$ depend on unobservable quantities, we have to estimate them consistently under the high-dimensional asymptotic regime when confidence regions for the optimal portfolio weights and for the optimal portfolio characteristics are derived.

Consistent estimators for $V_{GMV}$, $R_{GMV}$, $\btheta$, $s$, and $\bet$ are given in \eqref{hV_c}-\eqref{hbet_c}. Similarly, to construct a consistent estimator for $\bL \bQ \bL^\top$, we have that
$$\bL \bQ \bL^\top =\bL \left(\bSigma^{-1}-\frac{\bSigma^{-1}\bi\bi^\top \bSigma^{-1}}{\bi^\top \bSigma^{-1}\bi}\right) \bL^\top
=\bL \bSigma^{-1}\bL^\top - \frac{1}{V_{GMV}}\btheta\btheta^\top.$$
First, $V_{GMV}$ and $\btheta$ are replaced by their consistent estimators $\hat{V}_{GMV;c}$ and $\hat{\btheta}_c$. Second, note that a consistent estimator for $\bl_i^\top \bSigma^{-1} \bl_j$ with deterministic vectors $\bl_i$ and $\bl_j$ satisfying Assumption (\textbf{A1}) is given by $\left(1-p/n\right)\bl_i^\top \hat{\bSigma}^{-1} \bl_j$ (c.f., \citet[Lemma 5.3]{bodnar2019okhrin}). As a result, $\bL \bQ \bL^\top$  is consistently estimated by $\left(1-p/n\right)\bL \hat{\bQ} \bL^\top$ with $\hat{\bQ}$ given in \eqref{sample_bQ} and, hence, $V_{GMV}$, $R_{GMV}$, $\btheta$, $s$, $\bet$, and $\bL \bQ \bL^\top$ with their consistent estimators in \eqref{Omega_gc} and \eqref{Xi_hc}, we obtain consistent estimators for $\bOmega_{\bL,g,c}$ and $\mathbf{\Xi}_{h,c}$ denoted by $\hat{\bOmega}_{\bL,g,c}$ and $\hat{\mathbf{\Xi}}_{h,c}$. For instance, a consistent estimator for the covariance matrix of the estimated weights of the EU portfolio is given by:
 \begin{flalign}
\hat{\bOmega}_{\bL,EU,c}&=\bigg(
	\left(\frac{1-c_n}{\hat{s}_c+c_n} + (\hat{s}_c+c_n)\gamma^{-1} \right)\gamma^{-1} + \hat{V}_{GMV;c}
\bigg) (1-c_n)\bL \hat{\bQ} \bL^\top \\
&+ \gamma^{-2}\Bigg\{\frac{2(1-c_n)c_n^3 }{(\hat{s}_c+c_n)^2}
+  4(1-c_n)c_n\frac{\hat{s}_c(\hat{s}_c+2c_n)}{(\hat{s}_c+c_n)^2}	
+  \frac{2(1-c_n)c_n^2(\hat{s}_c+c_n)^2}{\hat{s}_c^2} -\hat{s}_c^2  \Bigg\} \hat{\bet}_c \hat{\bet}_c^\top,\nonumber
\end{flalign}
where $c_n=p/n$.

The suggested consistent estimators of $\bOmega_{\bL,g,c}$ and $\mathbf{\Xi}_{h,c}$ are then used to derive $(1-\beta)$ asymptotic confidence intervals for the population optimal portfolio weights and their characteristics. In the case of $k$ linear combination of the optimal portfolio weights $\bw_g$ we get
\begin{equation}\label{CI_LWg}
C_{\bL,g;1-\beta}=\left\{\bol{\omega}:~ (n-p)\left(\bL\hat{\bw}_{g;c} - \bL\bw_g\right)^\top \hat{\bOmega}_{\bL,g,c}^{-1}\left(\bL\hat{\bw}_{g;c} - \bL\bw_g\right) \le \chi^2_{k;1-\beta}\right\},
\end{equation}
where $\chi^2_{k;1-\beta}$ denotes the $(1-\beta)$ quantile from the $\chi^2$-distribution with $k$ degrees of freedom.

Finally, using the duality between the interval estimation and the test theory (c.f., \citet{aitchison1964confidence}), a test on the equality of $k$-linear combination of optimal portfolio weights to a preselected vector $\mathbf{r}$ can be derived. Namely, one has to reject the null hypothesis $H_0\; \bL\bw_g=\mathbf{r}$ in favour to the alternative hypothesis $H_0\; \bL\bw_g=\mathbf{r}$ at significance level $\beta$ as soon as $\mathbf{r}$ does not belong to the confidence interval $C_{\bL,g;1-\beta}$, as given in \eqref{CI_LWg}. Similar results are also obtained in the case of optimal portfolio characteristics.

\section{Finite-sample performance and robustness analysis}
The finite-sample performance of the derived high-dimensional asymptotic approximation of the sampling distribution of the estimated optimal portfolio weights is investigated via an extensive Monte Carlo study in this section. Additionally, we study the robustness of the obtained asymptotic distributions to the violation of the assumption of normality used in their derivation. The following four simulation scenarios will be considered in the simulation study:

\begin{figure}[h!]
  \center
\includegraphics[scale=0.5]{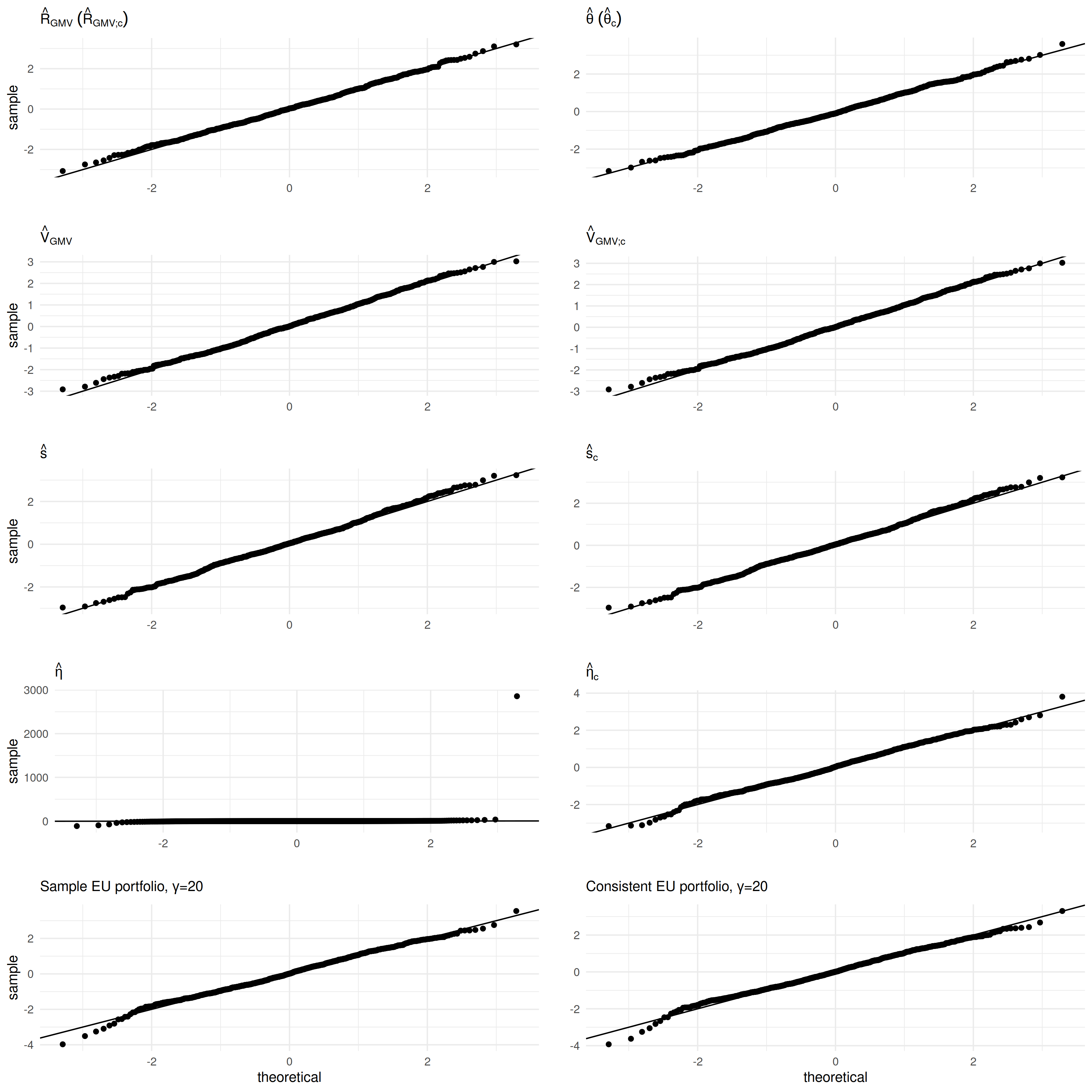}
\caption{QQ-plots of the empirical qunatiles computed for the (standardized) sample estimators (left column) and for the (standardized) consistent estimators (right column) of $V_{GMV}$, $\btheta$, $R_{GMV}$, $s$, $\bet$, and $\bL\bw_{EU}$ in comparison to the quantiles obtained from the corresponding high-dimensional asymptotic distribution. Data of asset returns are generated following \textbf{Scenario 1} with $c=0.5$.}
\label{fig:qq1}
\end{figure}
\begin{figure}[h!]
	\center
	\includegraphics[scale=0.5]{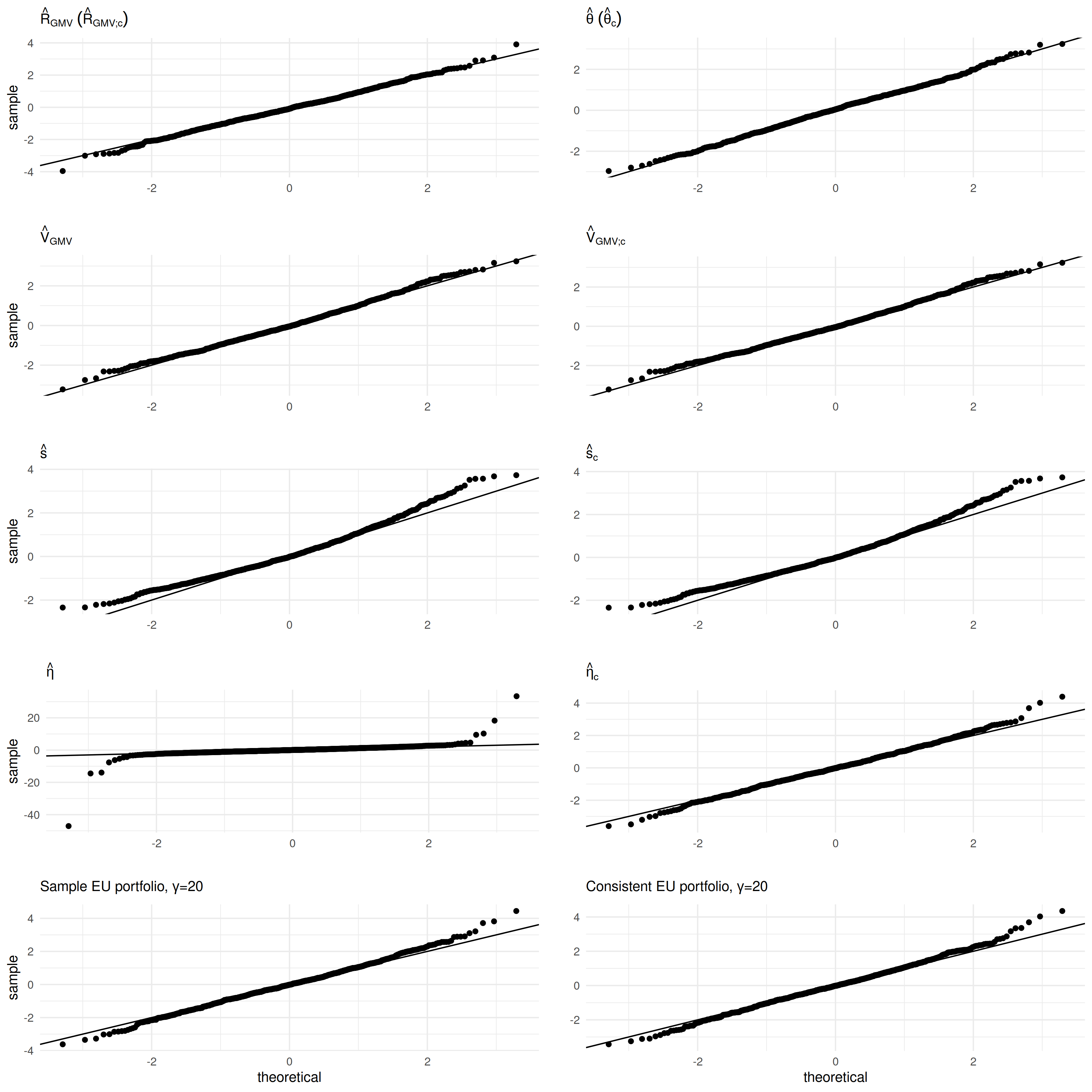}
	\caption{QQ-plots of the empirical qunatiles computed for the (standardized) sample estimators (left column) and for the (standardized) consistent estimators (right column) of $V_{GMV}$, $\btheta$, $R_{GMV}$, $s$, $\bet$, and $\bL\bw_{EU}$ in comparison to the quantiles obtained from the corresponding high-dimensional asymptotic distribution. Data of asset returns are generated following \textbf{Scenario 1} with $c=0.9$.}
	\label{fig:qq5}
\end{figure}

\begin{description}
\item[Scenario 1:] Multivariate normal distribution\\
Sample of asset returns $\bx_1,\bx_2,...,\bx_n$ are generated independently from $N_p(\bmu,\bSigma)$;
\item[Scenario 2:] CAPM model\\
Sample of asset returns $\bx_1,\bx_2,...,\bx_n$ are generated independently from the CAPM model. That is, we sample $n$ observations from $\bx_i = \by_i + \beta z_i$ where $\by_1, ..., \by_n$ are generated independently from $N_p(\bmu,\bSigma)$ and $z_1,...,z_n$ are generated independently from a standard normal distribution;
\item[Scenario 3:] Multivariate $t$-distribution\\
Sample of asset returns $\bx_1,\bx_2,...,\bx_n$ are generated independently from multivariate $t$-distribution with degrees of freedom $d=10$, location parameter $\bmu$, scale matrix $\frac{d-2}{d}\bSigma$. This choice of the scale matrix ensures that the covariance matrix of $\bx_i$ is $\bSigma$;
\item[Scenario 4:] CCC-GARCH model (c.f., \citet{bollerslev1990modelling})\\
The asset returns are assumed to be conditionally normally distributed with $\bx_t \Big| \bSigma_t \sim N_p(\bmu,  \bSigma_t)$. The conditional covariance matrix is specified by $\bSigma_t=\mathbf{D}_t^{1/2} \mathbf{C} \mathbf{D}_t^{1/2}$ with $\mathbf{D}_t=diag(h_{1,t},...,h_{p,t})$ and
$$h_{i,t} = \alpha_{i,0} + \alpha_{1,i} (x_{i,t-1} - \mu_i)^2 + \beta_{1,i}h_{i,t-1},\; \text{for} \; i=1,...,p\text{ and } t=1,...,n,$$
where $\bx_t=(x_{1,t},...,x_{p,t})^\top$, $\bmu=(\mu_1,...,\mu_p)^\top$, $\alpha_{i,0} \ge 0$ and $\alpha_{1,i},\beta_{1,i}>0$ for $i=1,...,p$.
\end{description}

\begin{figure}[h!]
\center
\includegraphics[scale=0.5]{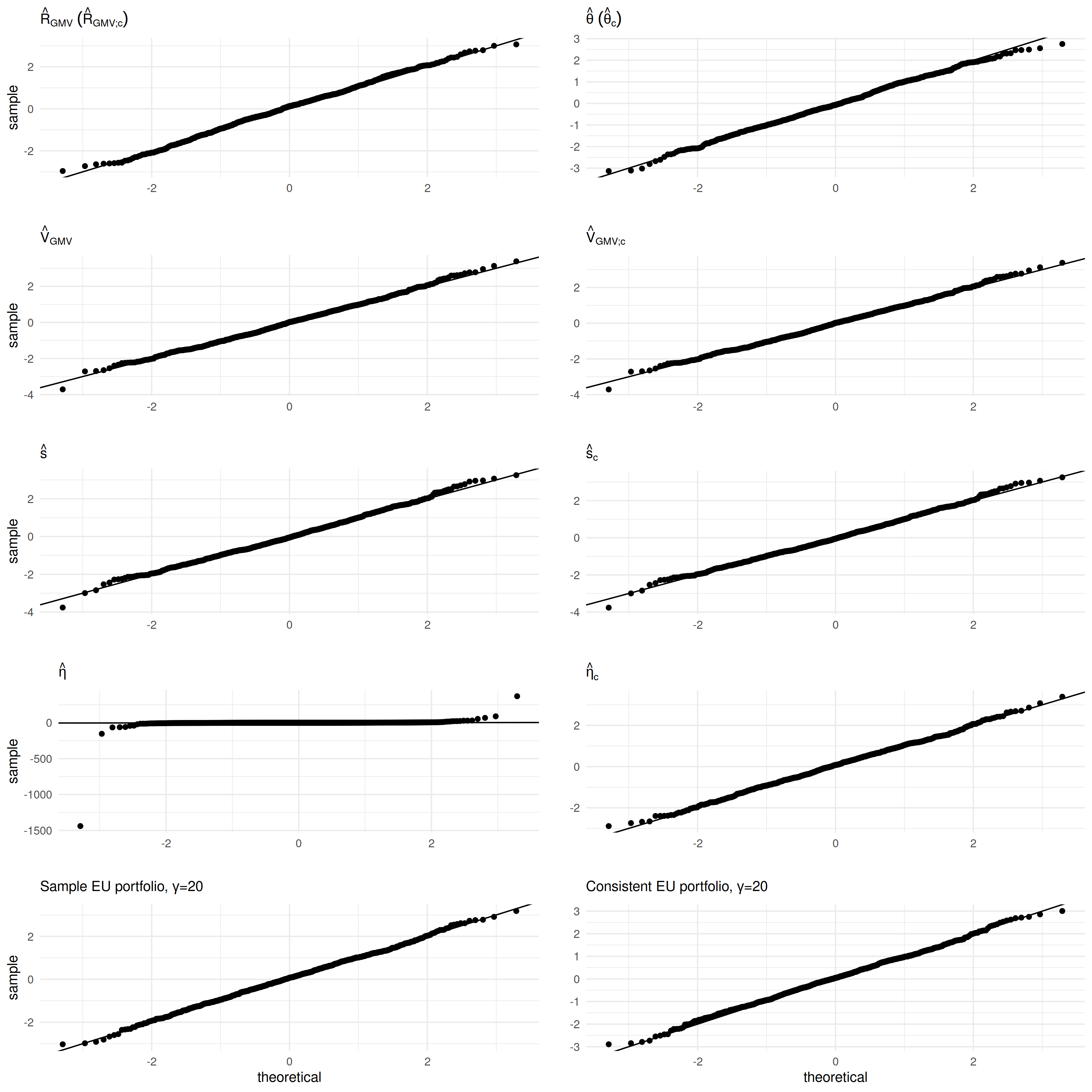}
\caption{QQ-plots of the empirical qunatiles computed for the (standardized) sample estimators (left column) and for the (standardized) consistent estimators (right column) of $V_{GMV}$, $\btheta$, $R_{GMV}$, $s$, $\bet$, and $\bL\bw_{EU}$ in comparison to the quantiles obtained from the corresponding high-dimensional asymptotic distribution. Data of asset returns are generated following \textbf{Scenario 2} with $c=0.5$.}
\label{fig:qq2}
\end{figure}

\begin{figure}[h!]
	\center
	\includegraphics[scale=0.5]{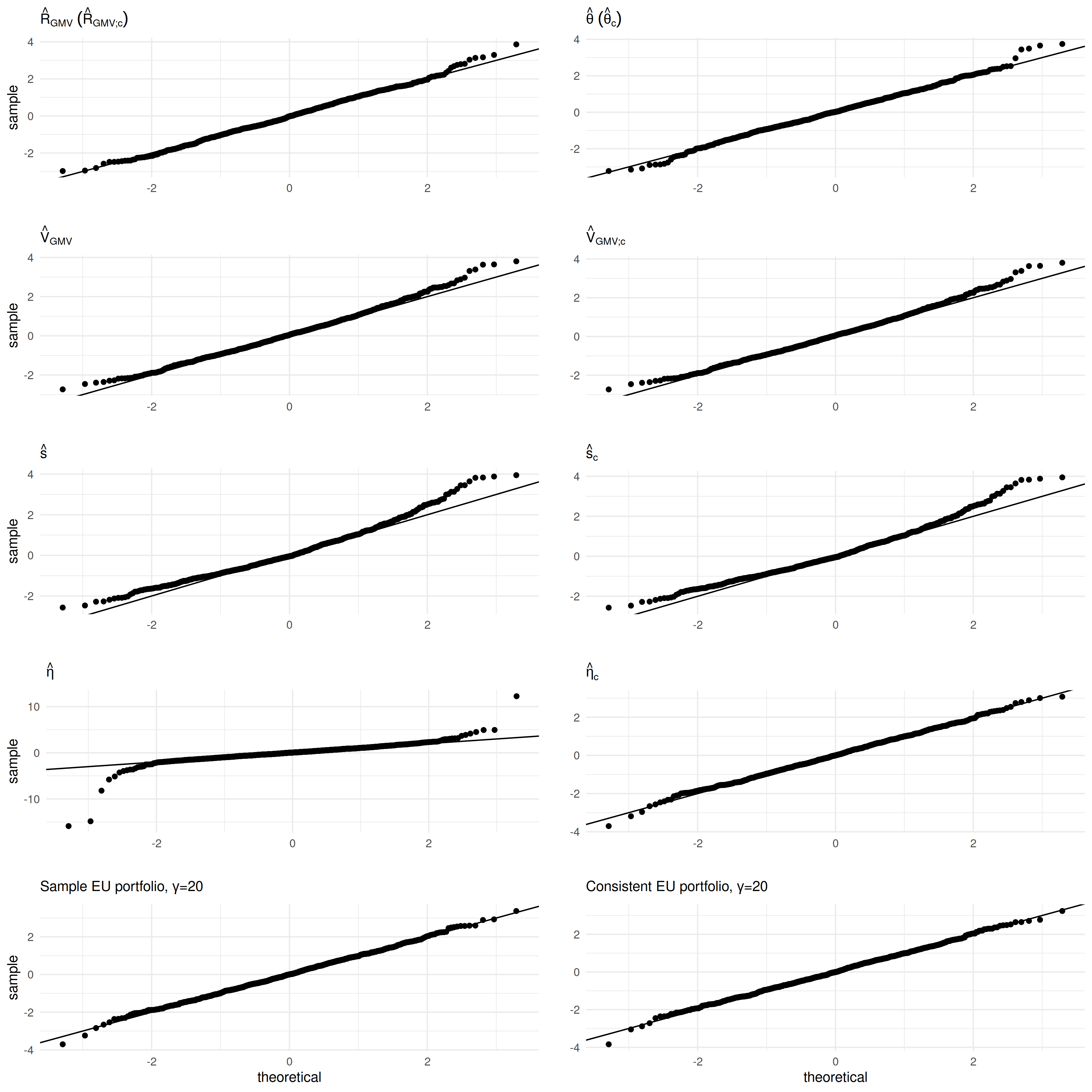}
	\caption{QQ-plots of the empirical qunatiles computed for the (standardized) sample estimators (left column) and for the (standardized) consistent estimators (right column) of $V_{GMV}$, $\btheta$, $R_{GMV}$, $s$, $\bet$, and $\bL\bw_{EU}$ in comparison to the quantiles obtained from the corresponding high-dimensional asymptotic distribution. Data of asset returns are generated following \textbf{Scenario 2} with $c=0.9$.}
	\label{fig:qq6}
\end{figure}

\textbf{Scenario 1} corresponds to the assumption used in the derivation of the theoretical results of the paper. \textbf{Scenario 2} corresponds to a one-factor model of asset returns, which is a popular approach in financial literature. Although, following this model the largest eigenvalue of the covariance matrix $\bSigma +\beta^2 \bi \bi^\top$ is of order $p$ (see, e.g., \citet{fan2013large}), it still fulfills the assumptions used in the derivation of the theoretical findings of Sections 2 and 3. \textbf{Scenario 3} violates the key assumption of normality by allowing heavy tails in the distribution of the asset returns. Finally, \textbf{Scenario 4} is used to investigate the performance of the high-dimensional asymptotic approximation of the sampling distribution of the estimated optimal portfolio weights when the assumption of independence does not hold by introducing non-zero autocorrelation between the squared values of the asset returns.

In order to make the results more flexible, the model parameters are not fixed to some preselected values, but  are randomly simulated. In the considered scenarios the components of $\bmu$ are generated from $U(-0.2,0.2)$, where $U(a,b)$  stands for the uniform distribution on $[a,b]$. The eigenvalues of the covariance matrix $\bSigma$ are fixed, such that 20\% of them are equal to 0.2, 40\% are equal to 1, and 40\% are equal to 5, while its eigenvectors are simulated from the Haar distribution. In \textbf{Scenario 3} we sample $\beta_j \sim U(-0.2, 0.2)$, $j=1,2,...,p$. The coefficients $\alpha_{1,i}, \beta_{1,i}$ of CCC-GARCH model in \textbf{Scenario 4} are generated according to $\alpha_{1,i} \sim U(0, 0.1)$ and $\beta_{1,i}\sim U(0.6, 0.7)$. By such a construction, we always ensure $\alpha_{1,i} + \beta_{1,i} < 1$, i.e., the stationarity condition is fulfilled. Finally, $\alpha_{i,0}$ is chosen such that the unconditional covariance matrix of the CCC-GARCH process is equal to $\bSigma$. Finally, we put $n=1000$ and $c \in \{0.5,0.9\}$ in all four scenarios.

\begin{figure}[h!]
	\center
	\includegraphics[scale=0.5]{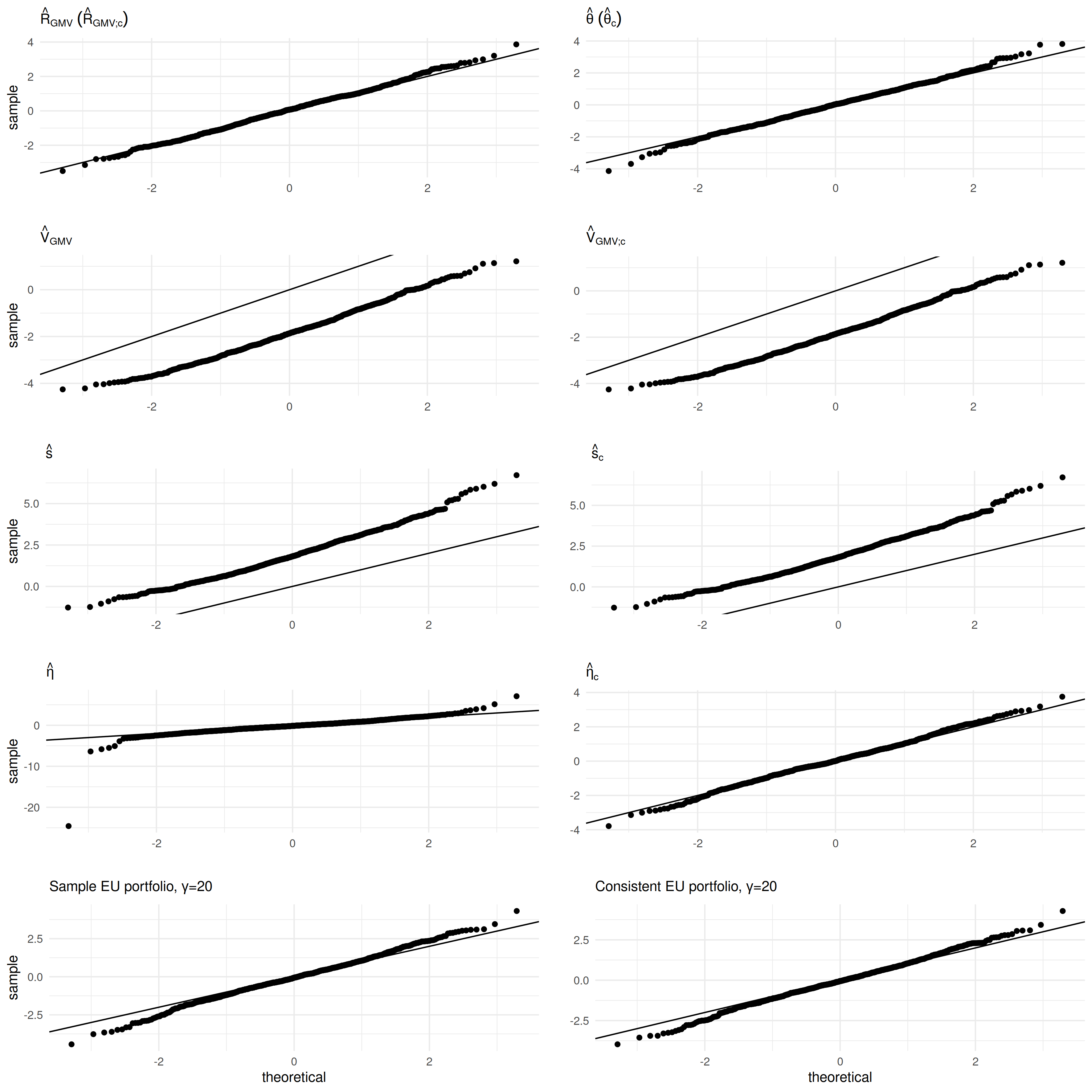}
	\caption{QQ-plots of the empirical qunatiles computed for the (standardized) sample estimators (left column) and for the (standardized) consistent estimators (right column) of $V_{GMV}$, $\btheta$, $R_{GMV}$, $s$, $\bet$, and $\bL\bw_{EU}$ in comparison to the quantiles obtained from the corresponding high-dimensional asymptotic distribution. Data of asset returns are generated following \textbf{Scenario 3} with $c=0.5$.}
	\label{fig:qq3}
\end{figure}

\begin{figure}[h!]
	\includegraphics[width=1\linewidth]{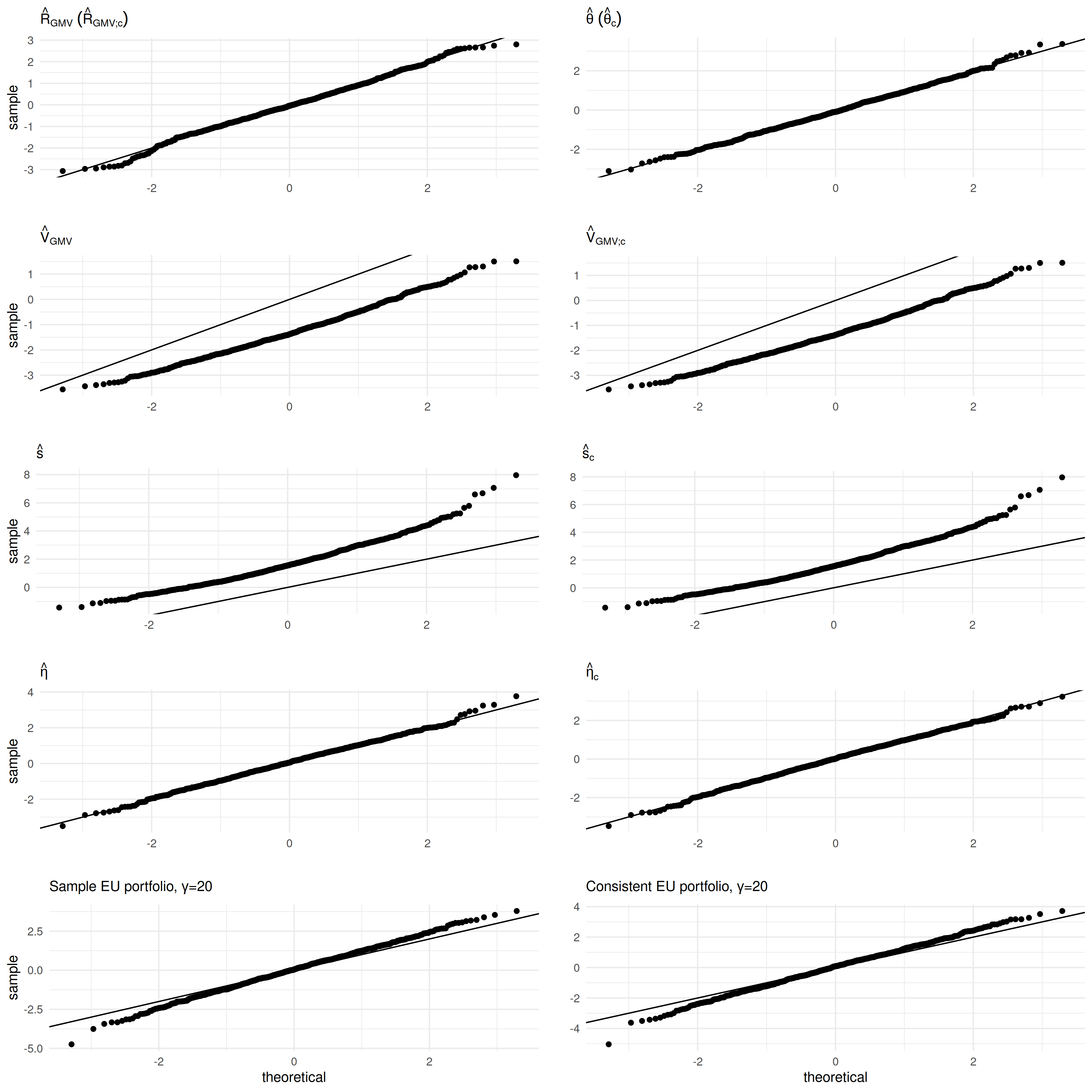}
	\caption{QQ-plots of the empirical qunatiles computed for the (standardized) sample estimators (left column) and for the (standardized) consistent estimators (right column) of $V_{GMV}$, $\btheta$, $R_{GMV}$, $s$, $\bet$, and $\bL\bw_{EU}$ in comparison to the quantiles obtained from the corresponding high-dimensional asymptotic distribution. Data of asset returns are generated following \textbf{Scenario 3} with $c=0.9$.}
	\label{fig:qq7}
\end{figure}

The results of the simulation study are illustrated in the case of five quantities $V_{GMV}$, $\btheta_{GMV}$, $R_{GMV}$, $s$, and $\bet$, and the first weight of the EU portfolio with $\gamma=20$ and $\bL=(1,0,0,...,0)$. We will compute the sample and consistent estimators for these quantities in all scenarios, and compare their finite-sample distributions to the corresponding high-dimensional ones derived in Section 4. In Figures \ref{fig:qq1} to \ref{fig:qq8}, we display QQ-plots for each of the six estimated quantities, both ordinary sample estimator and consistent ones, against the quantiles obtained from their corresponding high-dimensional distribution. On the first row of each figure, we see the estimators for $\hat{R}_{GMV}$ and $\hat{\btheta}$, where the sample and consistent estimators coincide. On the second to fifth row, the first column corresponds to the sample estimators of the considered quantities, while the second column presents the results for the consistent estimators.

In the first and second scenarios, we employ the stochastic representations of Theorems \ref{th2} and \ref{th3}. We sample from the finite-sample distribution of each estimated quantity through $1000$ independent draws $\hat{V}_{GMV}^{(b)}$, $\hat{\btheta}_{GMV}^{(b)}$, $\hat{R}_{GMV}^{(b)}$, $\hat{s}^{(b)}$, $\hat{\bet}^{(b)}$, and $\bL\hat{\bw}_{EU}^{(b)}$ for $b=1,...,1000$. To this end, we note that the application of Theorems \ref{th2} and \ref{th3} provides an efficient way to generate the sample $\hat{V}_{GMV}^{(b)}$, $\hat{\btheta}_{GMV}^{(b)}$, $\hat{R}_{GMV}^{(b)}$, $\hat{s}^{(b)}$, $\hat{\bet}^{(b)}$, and $\bL\hat{\bw}_{EU}^{(b)}$, which also avoids the computation of the inverse sample covariance matrix, which might be an ill-defined object in large dimensions, especially when $c=0.9$. In \textbf{Scenarios 3 and 4}, the stochastic representations derived in Theorems \ref{th2} and \ref{th3} can no longer be used, since the assumptions used for their derivation are no longer fulfilled. In these two cases, we calculate all quantities explicitly for each simulation run. This increases the computational cost immensely, since we need to invert a $p\times p$ matrix for each generated sample.

\begin{figure}[h!]
	\center
	\includegraphics[scale=0.5]{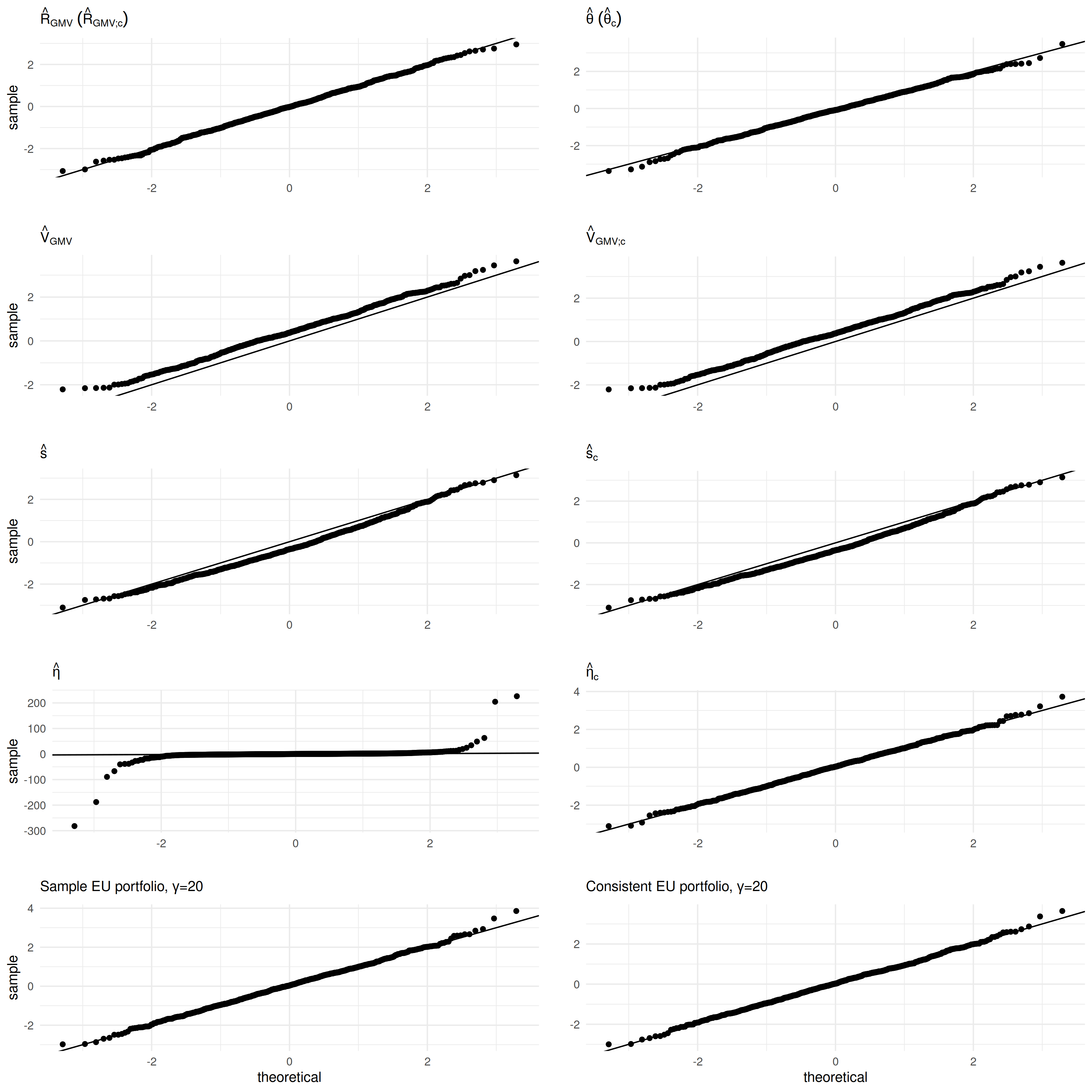}
	\caption{QQ-plots of the empirical qunatiles computed for the (standardized) sample estimators (left column) and for the (standardized) consistent estimators (right column) of $V_{GMV}$, $\btheta$, $R_{GMV}$, $s$, $\bet$, and $\bL\bw_{EU}$ in comparison to the quantiles obtained from the corresponding high-dimensional asymptotic distribution. Data of asset returns are generated following \textbf{Scenario 4} with $c=0.5$.}
	\label{fig:qq4}
\end{figure}

\begin{figure}[h!]
	\includegraphics[width=1\linewidth]{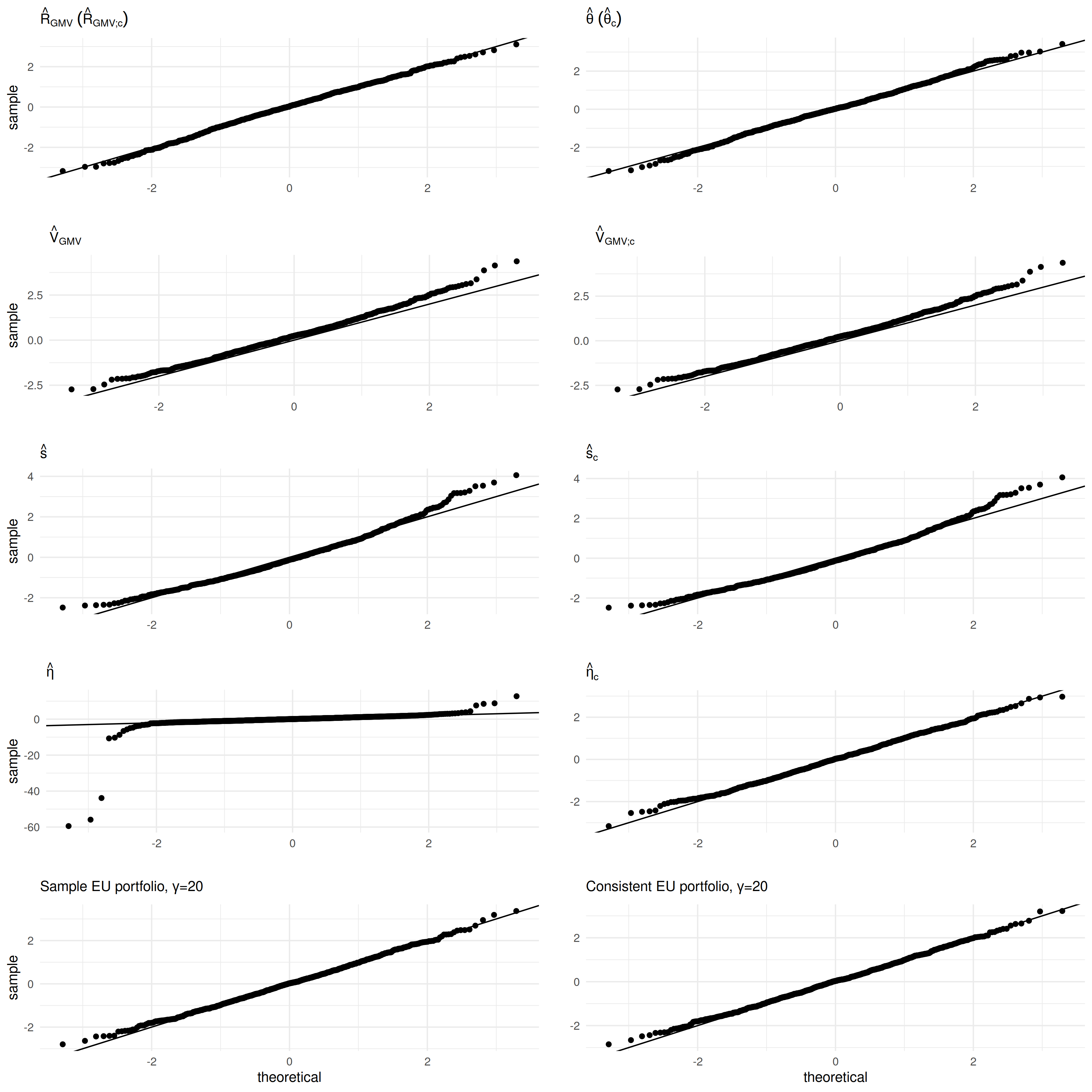}
	\caption{QQ-plots of the empirical qunatiles computed for the (standardized) sample estimators (left column) and for the (standardized) consistent estimators (right column) of $V_{GMV}$, $\btheta$, $R_{GMV}$, $s$, $\bet$, and $\bL\bw_{EU}$ in comparison to the quantiles obtained from the corresponding high-dimensional asymptotic distribution. Data of asset returns are generated following \textbf{Scenario 4} with $c=0.9$.}
	\label{fig:qq8}
\end{figure}

In Figures~\ref{fig:qq1} and \ref{fig:qq5} the results obtained under \textbf{Scenario 1} are depicted. We observe in the figures that the high-dimensional asymptotic distributions provide a good approximation for the moderate value of the concentration ratio $c=0.5$ and its large value $c=0.9$. The approximation seems to be worst off in the context of approximating the distribution of $\hat{s}$ when $c=0.9$, as the tails become much heavier than the approximation seems to be able to account for. Also, there seems to be some instability in the sample estimator for $\hat{\bet}$, as there is one observation which is very large. We do not see the same behaviour for the consistent estimator. Similar findings are also present in Figures~\ref{fig:qq2} and \ref{fig:qq6}, where we simulated from \textbf{Scenario 2}, that is from the CAPM model. The sample estimator for $\hat{\bet}$ seems to have some deviating observations which are no longer observable when the consistent estimator is used.

In Figures \ref{fig:qq3} and \ref{fig:qq7} we display the high-dimensional asymptotic approximations of the sampling distributions of the sample and consistent estimators of $V_{GMV}$, $\btheta$, $R_{GMV}$, $s$, $\bet$, and $\bL\bw_{EU}$. The high-dimensional distributions approximate the corresponding sampling distributions reasonably well when the asset returns are assumed to be multivariate $t$-distributed. There are  some deviations from normality in the sampling distributions of the estimators for $s$ and $V_{GMV}$, regardless of which estimator we use. There are also some observations in the finite-sample distribution of the sample estimator $\hat{\bet}$ that might indicate the presence of heavy tails or skewness, while the same behaviour is not seen in its consistent estimator. There is a small bias for two quantities $s$ and $V_{GMV}$ which could be explained by the influence of heavy tails in the data-generating model on the estimation of the inverse of the high-dimensional covariance matrix. On the other hand, the asymptotic variances seem to be well approximated by the results of Theorems \ref{th5} and \ref{th6}. All other quantities show a good performance despite the violation of the distributional assumption. We also observe the same type of skewness as in \textbf{Scenarios 1 and 2} in the case of $s$, when the asset universe becomes large.

In Figures \ref{fig:qq4} and \ref{fig:qq8} the results of the simulation are shown in the case of \textbf{Scenario 4}. In contrast to the previous three scenarios, here it is assumed that the square asset returns are autocorrelated, which violates the assumption of independence imposed in the derivation of the theoretical results. The empirical findings of both figures also document here that the high-dimensional asymptotic distributions provide a good approximation of the corresponding finite-sample distributions. The only inconsistency we observe is the same type of heavy tail behaviour in the finite-sample distribution of the sample estimator for $\bet$ which disappear in its consistent counterpart. No significant impact of dependence on the performance of the derived high-dimensional distribution is observable in other plots of Figures \ref{fig:qq4} and \ref{fig:qq8}.

\section{Summary}
In this paper we derive the exact sampling distribution of the estimators for a large class of optimal portfolio weights and their estimated characteristics. The results are present in terms of stochastic representations, which provides an easy way to assess the sampling distribution of the estimated optimal portfolio weights. Another important application of the derived stochastic representations is that it presents an efficient way to sample from the corresponding (joint) distribution. The largest computational efficiency comes from the fact that it excludes the inversion of the sample covariance matrix in each simulation run. Furthermore, the derived stochastic representation simplifies the study of the asymptotic properties of the estimated quantities under the high-dimensional asymptotic regime.

In the derivation of the theoretical results we assume that the asset returns are independent and normally distributed. These assumptions seem to be appropriate when asset returns are taken at weekly or lower frequency, while they are not fulfilled for financial data of daily and higher frequency. In our Monte Carlo simulations, we study the finite-sample performance of the obtained asymptotic distributions in comparison to the exact sampling distributions. We investigate the implications of violations to the model assumptions, namely departures from the assumption of normality and independence. We observe a good performance of the asymptotic distributions for finite samples when the data are simulated from the normal distribution. If the assumption of normality is violated, the asymptotic distribution fails to capture some structure in the data. These are biases that appear in the means and variances for a selected few of the estimated quantities. For the rest, the normal approximations still seem to provide a good fit. Assessing the biases in the asymptotic means and in the asymptotic (co)variances of the estimated optimal portfolio weights and their characteristic is an important challenge which will be treated in the consequent paper. Finally, we also find minor impact of the presence of autocorrelation between squared asset returns on the performance of the derived asymptotic distributions.

\section*{Acknowledgement} The authors would like to thank Professor Wang Zhou and two anonymous reviewers for their helpful suggestions. We also want to thank M. Stein who typed parts of this manuscript with considerable technical expertise. This research was partly supported by the Swedish Research Council (VR) via the project ``Bayesian Analysis of Optimal Portfolios and Their Risk Measures''.

\section*{Appendix}

In this section, the proofs of the theoretical results are given. In Lemma \ref{th1} we derive the conditional distribution of $(\hat{V}_{GMV},\hat{\btheta}^\top,\hat{R}_{GMV},\hat{s},\hat{\bet}^\top)^\top$ under the condition $\hbmu=\tbmu$, i.e. the distribution of $(\hat{V}_{GMV},\hat{\btheta}^\top,\tilde{R}_{GMV},\tilde{s},\tilde{\bet}^\top)^\top$.

\begin{lemma}\label{th1}
Under the conditions of Theorem \ref{th2}, the distribution of $(\hat{V}_{GMV},\hat{\btheta}^\top,\tilde{R}_{GMV},\tilde{s},\tilde{\bet}^\top)^\top$ is determined by
\begin{enumerate}[(i)]
\item $\hat{V}_{GMV}$ is independent of $(\hat{\btheta}^\top,\tilde{R}_{GMV},\tilde{s},\tilde{\bet}^\top)^\top$;
\item $(n-1)\frac{\hat{V}_{GMV}}{V_{GMV}}\sim \chi^2_{n-p}$;
\item $\left(\begin{array}{c} \hat{\btheta} \\ \tilde{R}_{GMV}\end{array}\right) \sim
    t_{k+1}\left( n-p+1,\left(\begin{array}{c} \btheta \\ \breve{R}_{GMV}\end{array}\right),
     \frac{V_{GMV}}{n-p+1} \breve{\bG} \right)$,\\
      with $\breve{\bG}=\begin{pmatrix} \bL \bQ \bL^\top & \bL \bQ \tbmu \\ \tbmu^\top \bQ \bL^\top & \tbmu^\top \bQ \tbmu \end{pmatrix}
    =\begin{pmatrix} \bL \bQ \bL^\top & \breve{s} \breve{\bet} \\ \breve{s} \breve{\bet}^\top & \breve{s} \end{pmatrix}
   $
\item $\tilde{s}$ and $\tilde{\bet}$ are conditionally independent given $\hat{\btheta}$ and $\tilde{R}_{GMV}$

\item $(n-1)\frac{\breve{s}}{\tilde{s}}\left(1 + \frac{(\tilde{R}_{GMV} - \breve{R}_{GMV})^2}{V_{GMV}\breve{s}}\right) \sim \chi^2_{n-p+2}$;
\item
$\tilde{\bet}|\hat{\btheta}^\top, \tilde{R}_{GMV} \sim
t_{k}\left(n-p+3,\breve{\bet}+ \mathbf{h},
\frac{(n-p+3)^{-1}\tilde{\bF}}{\breve{s}\left(1+\frac{(\tilde{R}_{GMV} - \breve{R}_{GMV})^2}{V_{GMV}\breve{s}}\right)^2} \right), $
where
\begin{eqnarray*}
\mathbf{h}&=&\left(1 + \frac{(\tilde{R}_{GMV} - \breve{R}_{GMV})^2}{V_{GMV}\breve{s}}\right)^{-1}\frac{(\hat{\btheta}-\btheta-\breve{\bet}(\tilde{R}_{GMV} - \breve{R}_{GMV}))(\tilde{R}_{GMV} - \breve{R}_{GMV})}{V_{GMV}\breve{s}}\\
\tilde{\bF}&=&\left(\bL \bQ \bL^\top-\breve{s}\breve{\bet}\breve{\bet}^\top\right)\left(1+\frac{(\tilde{R}_{GMV} - \breve{R}_{GMV})^2}{V_{GMV}\breve{s}}\right)\\
&+&\frac{\breve{s}}{V_{GMV}}\left(\hbtheta - \btheta-\breve{\bet}(\tilde{R}_{GMV} - \breve{R}_{GMV})\right)\left(\hbtheta - \btheta-\breve{\bet}(\tilde{R}_{GMV} - \breve{R}_{GMV})\right)^\top.
\end{eqnarray*}
\end{enumerate}
\end{lemma}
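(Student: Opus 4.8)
\quad The plan is to reduce the entire statement to the distribution theory of a single Wishart matrix and to read off the five functionals by two successive Schur--complement reductions. Put $\bW=(n-1)\hbSigma\sim W_p(n-1,\bSigma)$; as $\hbmu$ and $\bW$ are independent, conditioning on $\hbmu=\tbmu$ merely fixes $\tbmu$. The crucial preliminary remark is that all of $\hat V_{GMV},\hbtheta,\tilde R_{GMV},\tilde s,\tilde\bet$ are functions of the single $(k+2)\times(k+2)$ matrix $\bM\hbSigma^{-1}\bM^\top$: $\hat V_{GMV}$ is the reciprocal of its $(\bi,\bi)$ entry, $(\hbtheta^\top,\tilde R_{GMV})^\top$ is the $(\bL,\tbmu)$--block of its $\bi$--column scaled by that reciprocal, and, since $\hbQ\bi=\bzero$, both $\tilde s=\tbmu^\top\hbQ\tbmu$ and $\bL\hbQ\tbmu$ (hence $\tilde\bet$) are entries of $\mathbf N\hbQ\mathbf N^\top$, the Schur complement of the $\bi$--block in $\bM\hbSigma^{-1}\bM^\top$, where $\mathbf N=(\bL^\top,\tbmu)^\top$. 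I would then use the standard fact that for a $q\times p$ matrix $\mathbf A$ of full row rank one has $(\mathbf A\bW^{-1}\mathbf A^\top)^{-1}\sim W_q\big(n-1-p+q,\,(\mathbf A\bSigma^{-1}\mathbf A^\top)^{-1}\big)$ (obtained by block--inverting a Wishart matrix); taking $\mathbf A=\bM$, so $q=k+2$,
\[
\bG:=(n-1)(\bM\hbSigma^{-1}\bM^\top)^{-1}\ \sim\ W_{k+2}\big(n-p+k+1,\ \boldsymbol\Psi\big),\qquad \boldsymbol\Psi:=(\bM\bSigma^{-1}\bM^\top)^{-1},
\]
which is well defined because $rank(\bM)=k+2$ and $\bSigma$ is positive definite.

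Partition $\bG$ as a $(k+1)+1$ block matrix, the last coordinate being the $\bi$--direction: leading block $\bG_{11}$, off--diagonal block $\mathbf g$, trailing scalar $g$, and Schur complement $g_0=g-\mathbf g^\top\bG_{11}^{-1}\mathbf g$. Block--inverting $\bG$ gives $\hat V_{GMV}=g_0/(n-1)$, $(\hbtheta^\top,\tilde R_{GMV})^\top=-\bG_{11}^{-1}\mathbf g$, and $\tilde s=(n-1)(\bG_{11}^{-1})_{k+1,k+1}$, $\tilde\bet=(\bG_{11}^{-1})_{1{:}k,\,k+1}\big/(\bG_{11}^{-1})_{k+1,k+1}$, so $\hbtheta,\tilde R_{GMV},\tilde s,\tilde\bet$ depend only on $(\bG_{11},\mathbf g)$. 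Standard Wishart block theory then yields: $g_0$ is independent of $(\bG_{11},\mathbf g)$ with $g_0\sim\boldsymbol\Psi_{22\cdot1}\chi^2_{n-p}$, and since $\boldsymbol\Psi_{22\cdot1}=\big[(\bM\bSigma^{-1}\bM^\top)_{22}\big]^{-1}=V_{GMV}$ this is parts (i)--(ii). Further $\bG_{11}\sim W_{k+1}(n-p+k+1,\boldsymbol\Psi_{11})$ and $\mathbf g\mid\bG_{11}\sim N_{k+1}\big(\bG_{11}\boldsymbol\Psi_{11}^{-1}\boldsymbol\psi_{12},\ \boldsymbol\Psi_{22\cdot1}\bG_{11}\big)$; the Schur/Sherman--Morrison identities $\boldsymbol\Psi_{11}^{-1}=(\bM\bSigma^{-1}\bM^\top)_{11\cdot2}=\mathbf N\bQ\mathbf N^\top=\breve{\bG}$ and $-\boldsymbol\Psi_{11}^{-1}\boldsymbol\psi_{12}=(\bM\bSigma^{-1}\bM^\top)_{12}\big[(\bM\bSigma^{-1}\bM^\top)_{22}\big]^{-1}=(\btheta^\top,\breve R_{GMV})^\top$ then give $(\hbtheta^\top,\tilde R_{GMV})^\top\mid\bG_{11}\sim N_{k+1}\big((\btheta^\top,\breve R_{GMV})^\top,\ V_{GMV}\bG_{11}^{-1}\big)$. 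Mixing this normal over the Wishart $\bG_{11}$, via the elementary fact that $\bu\mid\bT\sim N_q(\mathbf m,a^{-1}\bT^{-1})$ with $\bT\sim W_q(\nu,\bV)$ implies $\bu\sim t_q\big(\nu-q+1,\,\mathbf m,\,[a(\nu-q+1)]^{-1}\bV^{-1}\big)$ (read off from the joint density), produces exactly the $t_{k+1}$ of part (iii).

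For parts (iv)--(vi) I would condition further on $\bY:=(\hbtheta^\top,\tilde R_{GMV})^\top$. Since $\mathbf g=-\bG_{11}\bY$ is a bijection of $(\bG_{11},\bY)$, completing the square in the exponent of the $W_{k+1}(\bG_{11})\times N(\mathbf g\mid\bG_{11})$ joint density gives
\[
\bG_{11}\mid\bY\ \sim\ W_{k+1}\big(n-p+k+2,\ \mathbf K^{-1}\big),\qquad \mathbf K=\breve{\bG}+\tfrac1{V_{GMV}}(\bY-\boldsymbol\mu_0)(\bY-\boldsymbol\mu_0)^\top,
\]
with $\boldsymbol\mu_0=(\btheta^\top,\breve R_{GMV})^\top$. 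Partition this smaller Wishart as $(k)+1$, the last coordinate the $\tbmu$--direction, with leading $k\times k$ block $\bR$, off--diagonal block $\br$, and Schur complement $r_0$. Then $r_0$ is independent of $(\bR,\br)$; since $\tilde s=(n-1)/r_0$ depends only on $r_0$ and $\tilde\bet=-\bR^{-1}\br$ only on $(\bR,\br)$, this is part (iv). Moreover $r_0\sim\big[\mathbf K_{k+1,k+1}\big]^{-1}\chi^2_{n-p+2}$ with $\mathbf K_{k+1,k+1}=\breve s\big(1+(\tilde R_{GMV}-\breve R_{GMV})^2/(V_{GMV}\breve s)\big)$, which rearranges to part (v). Finally $\br\mid\bR$ is normal and $\bR\sim W_k\big(n-p+k+2,\,(\mathbf K^{-1})_{1{:}k,1{:}k}\big)$, so the same normal--over--Wishart fact applied to $\tilde\bet=-\bR^{-1}\br$ gives $\tilde\bet\mid\bY\sim t_k\big(n-p+3,\ \mathbf K_{1{:}k,k+1}/\mathbf K_{k+1,k+1},\ [(n-p+3)\mathbf K_{k+1,k+1}]^{-1}(\mathbf K^{-1})_{1{:}k,1{:}k}^{-1}\big)$; expanding $(\mathbf K^{-1})_{1{:}k,1{:}k}^{-1}=\mathbf K_{1{:}k,1{:}k}-\mathbf K_{1{:}k,k+1}\mathbf K_{k+1,k+1}^{-1}\mathbf K_{k+1,1{:}k}$ in the variables $\bL\bQ\bL^\top,\breve s,\breve{\bet},V_{GMV}$ and $\hbtheta-\btheta,\ \tilde R_{GMV}-\breve R_{GMV}$ identifies the location as $\breve{\bet}+\mathbf h$ and the scale matrix as the stated $\tilde\bF$ over $\breve s\big(1+(\tilde R_{GMV}-\breve R_{GMV})^2/(V_{GMV}\breve s)\big)^2$.

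None of these steps is deep; the real work --- and the place errors are most likely to slip in --- is the bookkeeping of the scale parameters through the two Schur reductions, i.e.\ rewriting the ``inverse parameters'' $\boldsymbol\Psi$ and then $\mathbf K^{-1}$ in terms of the financial quantities $V_{GMV},\btheta,\breve R_{GMV},\breve s,\breve{\bet},\bL\bQ\bL^\top$. This rests on the three displayed identities for $\boldsymbol\Psi$ together with the explicit Schur complement of $\mathbf K$; each is a short partitioned--inverse/Sherman--Morrison computation, and the last of these is the one that must be massaged to reproduce the precise form of $\tilde\bF$.
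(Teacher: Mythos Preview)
Your proposal is correct and follows essentially the same route as the paper. Both arguments start from the Wishart law of $(n-1)\big(\bM\hbSigma^{-1}\bM^\top\big)^{-1}$ and then perform two successive $(k+1)+1$ and $k+1$ block reductions; the paper phrases this in the inverse--Wishart parametrisation and invokes a block result of Bodnar--Okhrin (2008) at each stage, whereas you stay on the Wishart side and use the classical Bartlett/block decomposition together with the normal--mixed--over--Wishart characterisation of the multivariate $t$. The intermediate conditional law you obtain, $\bG_{11}\mid\bY\sim W_{k+1}(n-p+k+2,\bK^{-1})$ with $\bK=\breve\bG+V_{GMV}^{-1}(\bY-\boldsymbol\mu_0)(\bY-\boldsymbol\mu_0)^\top$, is exactly the paper's $\tilde\bB$, so the two routes coincide at every checkpoint; your version has the minor advantage of being self--contained rather than citing an external inverse--Wishart theorem.
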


\begin{proof}[Proof of Lemma~\ref{th1}:]

Under the assumption of independent and normally distributed sample of the asset returns, we get that
\begin{enumerate}[(a)]
\item $\hbmu \sim \mathcal{N}_p\left(\bmu,\dfrac{1}{n}\bSigma\right)$; \label{eqn:hbmu}
\item $(n-1)\hbSigma \sim \mathcal{W}_p(n-1,\bSigma)$ ($p$-dimensional Wishart distribution with $(n-1)$ degrees of freedom and covariance matrix $\bSigma$);
\item $\hbmu$ and $\hbSigma$ are independent.
\end{enumerate}
As a result, the conditional distribution of a random variable defined as a function of $\hbmu$ and $\hbSigma$ given $\hbmu=\tbmu$ is equal to the distribution of a random variable defined by the same function where $\hbmu$ is replaced by $\tbmu$.

Let $\tilde{\bM}=(\bL^\top,\tbmu,\bi)^\top$ and define
$$\tilde{\bH}=\tilde{\bM}\hbSigma^{-1} \tilde{\bM}^\top=
\left(\begin{array}{cc}
\tilde{\bH}_{11}&\tilde{\bH}_{12}\\
\tilde{\bH}_{21}&\tilde{\bH}_{22}\\
\end{array}\right)$$
with
$$\tilde{\bH}_{11}=
\left(\begin{array}{cc}
\bL\hbSigma^{-1}\bL^\top&\bL\hbSigma^{-1}\tbmu\\
\tbmu^\top\hbSigma^{-1}\bL^\top&\tbmu^\top\hbSigma^{-1}\tbmu\\
\end{array}\right),
~ \tilde{\bH}_{12}= \left(\begin{array}{c}
\bL\hbSigma^{-1}\bi\\
\tbmu^\top\hbSigma^{-1}\bi\\
\end{array}\right),~ \tilde{\bH}_{21}=\tilde{\bH}_{12}^\top, ~ \tilde{\bH}_{22}=\bi^\top\hbSigma^{-1}\bi$$
and
$$\bH=\tilde{\bM}\bSigma^{-1} \tilde{\bM}^\top=
\left(\begin{array}{cc}
\bH_{11}&\bH_{12}\\
\bH_{21}&\bH_{22}\\
\end{array}\right)$$
with
$$\bH_{11}=
\left(\begin{array}{cc}
\bL\bSigma^{-1}\bL^\top&\bL\bSigma^{-1}\tbmu\\
\tbmu^\top\bSigma^{-1}\bL^\top&\tbmu^\top\bSigma^{-1}\tbmu\\
\end{array}\right),~
\bH_{12}= \left(\begin{array}{c}
\bL\bSigma^{-1}\bi\\
\tbmu^\top\bSigma^{-1}\bi\\
\end{array}\right), ~ \bH_{21}=\bH_{12}^\top, ~ \bH_{22}=\bi^\top\bSigma^{-1}\bi.$$

Also, let
\begin{equation}\label{tbG}
\tilde{\bG}=\tilde{\bH}_{11}-\frac{\tilde{\bH}_{12}\tilde{\bH}_{21}}{\tilde{\bH}_{22}}
=\left(\begin{array}{c}\bL\\ \tbmu^\top \end{array}\right)\hat{\bQ}\left(\begin{array}{cc}\bL^\top&\tbmu \end{array}\right)
=\left(\begin{array}{cc}\bL\hat{\bQ}\bL^\top & \bL\hat{\bQ}\tbmu\\
\tbmu^\top \hat{\bQ}\bL^\top & \tbmu^\top\hat{\bQ}\tbmu \end{array}\right)
=\left(\begin{array}{cc}\tilde{\bG}_{11} & \tilde{\bG}_{12}\\
\tilde{\bG}_{21} & \tilde{\bG}_{22}\end{array}\right)
\end{equation}
and
\begin{equation}\label{bG}
\bG=\bH_{11}-\frac{\bH_{12}\bH_{21}}{\bH_{22}}=\left(\begin{array}{c}\bL\\ \tbmu^\top \end{array}\right)\bQ\left(\begin{array}{cc}\bL^\top&\tbmu \end{array}\right)
=\left(\begin{array}{cc}\bL\bQ\bL^\top & \bL\bQ\tbmu\\
\tbmu^\top \bQ\bL^\top & \tbmu^\top\bQ\tbmu \end{array}\right)
=\left(\begin{array}{cc}\bG_{11} & \bG_{12}\\
\bG_{21} & \bG_{22}\end{array}\right)
\end{equation}
with $\tilde{\bG}_{22}=\tbmu^\top\hat{\bQ}\tbmu$ and $\bG_{22}=\tbmu^\top\bQ\tbmu$.

In using the definitions of $\tilde{\bH}$ and $\tilde{\bG}$, we get
\[\hat{V}_{GMV}=\frac{1}{\tilde{\bH}_{22}}, ~~\left(\begin{array}{c} \hat{\btheta}\\\tilde{R}_{GMV} \end{array}\right)=\frac{\tilde{\bH}_{12}}{\tilde{\bH}_{22}},~~
\tilde{s}=\tilde{\bG}_{22}, ~~ \tilde{\bet}= \frac{\tilde{\bG}_{12}}{\tilde{\bG}_{22}}.
\]
Moreover, from \citet[Theorem 3.2.11]{muirhead2009aspects} we get $(n-1)\tilde{\bH}^{-1} \sim \mathcal{W}_{k+2}(n-p+k+1,\bH^{-1})$ and, consequently, (see, \citet[Theorem 3.4.1] {gupta2000matrix} $(n-1)^{-1}\tilde{\bH} \sim \mathcal{W}_{k+2}^{-1}(n-p+2k+4,\bH)$ ($(k+2)$-dimensional inverse Wishart distribution with $n-p+2k+4$ degrees of freedom and parameter matrix $\bH$). The application of Theorem 3 in \citet{bodnar2008properties} leads to
\begin{enumerate}[(i)]
\item $\tilde{\bH}_{22}$ is independent of ${\tilde{\bH}_{12}}/{\tilde{\bH}_{22}}$ and $\tilde{\bG}$ and, consequently,
    \[\hat{V}_{GMV}~~ \text{is independent of}~~(\hat{\btheta}^\top,\tilde{R}_{GMV},\tilde{s},\tilde{\bet}^\top)^\top.\]
\item We get that $(n-1)^{-1}\tilde{\bH}_{22}\sim \mathcal{W}_{1}^{-1}(n-p+2,\bH_{22})$. Hence,
\begin{equation}\label{H_22}
(n-1)\frac{\bi^\top\bSigma^{-1}\bi}{\bi^\top\hbSigma^{-1}\bi}=(n-1)\frac{\hat{V}_{GMV}}{V_{GMV}}\sim \chi^2_{n-p}\,;
\end{equation}
\item Let
$\Gamma_l\Big(\frac{m}{2}\Big)=\pi^{l(l-1)/4}\prod_{i=1}^l \Gamma\Big(\frac{m-i+1}{2}\Big)$
be the multivariate gamma function. Then, the density of $\tilde{\bH}_{12}/\tilde{\bH}_{22}=\begin{pmatrix} \hat{\btheta}^\top & \tilde{R}_{GMV} \end{pmatrix}^\top$ is given by
\begin{eqnarray}
f(\by) &=& \frac{|\bG|^{-\frac{1}{2}}|\bH_{22}|^{\frac{(k+1)}{2}}}
{\pi^{\frac{k+1}{2}}}\frac{\Gamma_{k+1}(\frac{n-p+k+2}{2})}{\Gamma_{k+1}(\frac{n-p+k+1}{2})}|\bI+\bG^{-1}(\by-\bH_{12}/\bH_{22})\bH_{22}(\by-\bH_{12}/\bH_{22})^\top|^{-\frac{n-p+k+2}{2}}\,\nonumber\\
&=&\frac{|\bG/\bH_{22}|^{-\frac{1}{2}}}
{\pi^{\frac{k+1}{2}}}\frac{\Gamma_{k+1}(\frac{n-p+k+2}{2})}{\Gamma_{k+1}(\frac{n-p+k+1}{2})}\left(1+\bH_{22}(\by-\bH_{12}/\bH_{22})^\top\bG^{-1}(\by-\bH_{12}/\bH_{22})\right)^{-\frac{n-p+k+2}{2}} \label{eqn:dens}
\end{eqnarray}
where the last equality is obtained by the use of the Sylvester determinant identity. The density presented in \eqref{eqn:dens} corresponds to a $(k+1)$-dimensional $t$ distribution with $(n-p+1)$ degrees of freedom, location parameter $\bH_{12}/\bH_{22} = \begin{pmatrix} \btheta^\top & \breve{R}_{GMV} \end{pmatrix}^\top$ and scale matrix $\frac{V_{GMV}}{n-p+1}\bG$.
\end{enumerate}

In the proof of parts (iv)-(vi) we use the following result (see Theorem 3.f of \citet{bodnar2008properties})
\begin{equation*}
(n-1)^{-1}\tilde{\bG}| \hat{\btheta}^\top, \tilde{R}_{GMV} \sim \mathcal{W}_{k+1}^{-1}\left(n-p+2k+4,\tilde{\bB}\right),
\end{equation*}
where
\[
\tilde{\bB}=\bG+
\frac{1}{V_{GMV}}
\left(\begin{array}{c}
\hbtheta - \btheta \\ \tilde{R}_{GMV} - \breve{R}_{GMV}
\end{array}\right)
\left(\begin{array}{c}
\hbtheta - \btheta \\ \tilde{R}_{GMV} - \breve{R}_{GMV}
\end{array}\right)^\top
=\left(\begin{array}{cc}\tilde{\bB}_{11} & \tilde{\bB}_{12}\\
\tilde{\bB}_{21} & \tilde{\bB}_{22}\end{array}\right)
\]
with $\tilde{\bB}_{22}=\bG_{22}+\frac{(\tilde{R}_{GMV} - \breve{R}_{GMV})^2}{V_{GMV}}$

Hence,
\begin{itemize}
\item[(iv)] $\tilde{s}=\tilde{\bG}_{22}$ and $\tilde{\bet}={\tilde{\bG}_{12}}/{\tilde{\bG}_{22}}$ are conditionally independent given $\hat{\btheta}^\top$ and $\tilde{R}_{GMV}$.

\item[(v)] It holds that $(n-1)^{-1}\tilde{\bG}_{22}|\hat{\btheta}^\top, \tilde{R}_{GMV}\sim \mathcal{W}_{1}^{-1}\left(n-p+4,\tilde{\bB}_{22}\right)$. Hence,
\begin{equation}\label{G_22}
(n-1)\frac{\breve{s} + (\tilde{R}_{GMV} - \breve{R}_{GMV})^2/V_{GMV}}{\tilde{s}} \sim \chi^2_{n-p+2}\,.
\end{equation}

\item[(vi)] Finally, similarly to the proof of part (iii), we get
\[
\tilde{\bet}|\hat{\btheta}^\top, \tilde{R}_{GMV} \sim
t_{k}\left(n-p+3,\frac{\tilde{\bB}_{12}}{\tilde{\bB}_{22}},
\frac{1}{n-p+3}\frac{\tilde{\bB}_{11}\tilde{\bB}_{22}-\tilde{\bB}_{12}\tilde{\bB}_{21}}{\tilde{\bB}_{22}^2} \right), \]
where
{\small
\begin{eqnarray*}
&&\tilde{\bB}_{11}\tilde{\bB}_{22}-\tilde{\bB}_{12}\tilde{\bB}_{21}\\
&=&
\left(\bG_{11}+\frac{1}{V_{GMV}}(\hbtheta - \btheta)(\hbtheta - \btheta)^\top\right)
    \left(\bG_{22}+\frac{(\tilde{R}_{GMV} - \breve{R}_{GMV})^2}{V_{GMV}}\right)\\
  &-&\left(\bG_{12}+\frac{\tilde{R}_{GMV} - \breve{R}_{GMV}}{V_{GMV}}(\hbtheta - \btheta)\right)
      \left(\bG_{12}+\frac{\tilde{R}_{GMV} - \breve{R}_{GMV}}{V_{GMV}}(\hbtheta - \btheta)\right)^\top\\
  &=& \bG_{11}\bG_{22}-\bG_{12}\bG_{21}+\frac{\bG_{22}}{V_{GMV}}(\hbtheta - \btheta)(\hbtheta - \btheta)^\top\\
  &+& \frac{(\tilde{R}_{GMV} - \breve{R}_{GMV})^2}{V_{GMV}}\bG_{11}-\frac{\tilde{R}_{GMV} - \breve{R}_{GMV}}{V_{GMV}}\left((\hbtheta - \btheta)\bG_{12}^\top+\bG_{12}(\hbtheta - \btheta)^\top\right)\\
  &=& \left(\bG_{11}-\frac{\bG_{12}\bG_{21}}{\bG_{22}}\right)\left(\bG_{22}+\frac{(\tilde{R}_{GMV} - \breve{R}_{GMV})^2}{V_{GMV}}\right)\\
  &+&\frac{\bG_{22}}{V_{GMV}}\left(\hbtheta - \btheta-\frac{\bG_{12}}{\bG_{22}}(\tilde{R}_{GMV} - \breve{R}_{GMV})\right)\left(\hbtheta - \btheta-\frac{\bG_{12}}{\bG_{22}}(\tilde{R}_{GMV} - \breve{R}_{GMV})\right)^\top.
\end{eqnarray*}
}
\end{itemize}
\end{proof}

\begin{proof}[Proof of Theorem~\ref{th2}:]
From Theorem \ref{th1}.ii we get
\begin{equation}\label{ap_th2_eq1}
\hat{V}_{GMV}\stackrel{d}{=}\frac{V_{GMV}}{n-1}\xi_1
\end{equation}
where $\xi_1\sim \chi^2_{n-p}$. Moreover, Theorem~\ref{th1}.iii implies that $\hat{\btheta}$ and $\tilde{R}_{GMV}$ are jointly multivariate $t$-distributed and, hence, it holds that (see, e.g., \citet{ding2016conditional})
$\tilde{R}_{GMV} \sim t\left(n-p+1,\breve{R}_{GMV},\frac{V_{GMV}\breve{s}}{n-p+1}\right)$
and
{\footnotesize
\begin{eqnarray*}
 \hat{\btheta} | \tilde{R}_{GMV}&\sim& t_k\Bigg(n-p+2,\btheta+\breve{\bet}(\tilde{R}_{GMV} - \breve{R}_{GMV}), \frac{n-p+1+(n-p+1)(\tilde{R}_{GMV} - \breve{R}_{GMV})^2/(V_{GMV}\breve{s})}{n-p+2}\frac{V_{GMV}}{n-p+1}\left(\bL \bQ \bL^\top-\breve{s} \breve{\bet}\breve{\bet}^\top\right)\Bigg)\\
 &=&t_k\Bigg(n-p+2,\btheta+\breve{\bet}(\tilde{R}_{GMV} - \breve{R}_{GMV}),\frac{V_{GMV}}{n-p+2}\left(1+\frac{(\tilde{R}_{GMV} - \breve{R}_{GMV})^2}{V_{GMV}\breve{s}}\right)\left(\bL \bQ \bL^\top-\breve{s} \breve{\bet}\breve{\bet}^\top\right)\Bigg)
\end{eqnarray*}}
As a result, we get
\begin{equation}\label{ap_th2_eq2}
\hat{R}_{GMV}\stackrel{d}{=}\frac{\bi^\top\bSigma^{-1}\hbmu}{\bi^\top \bSigma^{-1}\bi}+\frac{\sqrt{V_{GMV}}\sqrt{\hbmu^\top \bQ \hbmu}}{\sqrt{n-p+1}}t_1
\end{equation}
and
\begin{eqnarray}
\hat{\btheta}&\stackrel{d}{=}&\btheta+\sqrt{V_{GMV}}
\frac{t_1}{\sqrt{n-p+1}}\frac{\bL\bQ \hbmu}{\sqrt{\hbmu^\top \bQ \hbmu}}+
\sqrt{1+\frac{t_1^2}{n-p+1}}\frac{\sqrt{V_{GMV}}}{\sqrt{n-p+2}}\left(\bL \bQ \bL^\top-\frac{\bL\bQ \hbmu\hbmu^\top \bQ \bL^\top}{\hbmu^\top \bQ \hbmu}\right)^{1/2}\bt_2\nonumber\\
&=&\btheta+\sqrt{V_{GMV}}
\Bigg(\frac{\bL\bQ \hbmu}{\sqrt{\hbmu^\top \bQ \hbmu}}\frac{t_1}{\sqrt{n-p+1}}
+
\left(\bL \bQ \bL^\top-\frac{\bL\bQ \hbmu\hbmu^\top \bQ \bL^\top}{\hbmu^\top \bQ \hbmu}\right)^{1/2}
\sqrt{1+\frac{t_1^2}{n-p+1}}\frac{\bt_2}{\sqrt{n-p+2}}\Bigg)\label{ap_th2_eq3}
\end{eqnarray}
where $t_1 \sim t(n-p+1)$, $\bt_2\sim t_k(n-p+2)$ are independent and also they are independent of $\xi_1$.

Similarly, the application of Theorem \ref{th1}.v leads to
\begin{equation}\label{ap_th2_eq4}
\hat{s}\stackrel{d}{=}(n-1)\left(1+\frac{t_1^2}{n-p+1}\right)\frac{\hbmu^\top \bQ \hbmu}{\xi_2}, ~\text{where $\xi_2\sim \chi^2_{n-p+2}$ and is independent of $t_1$, $\bt_2$, and $\xi_1$.}
\end{equation}

Finally, the application of Theorem \ref{th1}.vi leads to
{\footnotesize
\begin{eqnarray}
\hat{\bet}&\stackrel{d}{=}&\frac{\bL\bQ \hbmu}{\hbmu^\top \bQ \hbmu}
+\frac{\sqrt{1+\frac{t_1^2}{n-p+1}}\left(\bL \bQ \bL^\top-\frac{\bL\bQ \hbmu\hbmu^\top \bQ \bL^\top}{\hbmu^\top \bQ \hbmu}\right)^{1/2}\frac{\bt_2}{\sqrt{n-p+2}} \frac{1}{\sqrt{\hbmu^\top \bQ \hbmu}}\frac{t_1}{\sqrt{n-p+1}}
                            }{1+\frac{t_1^2}{n-p+1}}\nonumber\\
  &+&\frac{1}{\sqrt{\hbmu^\top \bQ \hbmu}} \frac{1}{1+\frac{t_1^2}{n-p+1}}
 \Bigg(\left(\bL \bQ \bL^\top-\frac{\bL\bQ \hbmu\hbmu^\top \bQ \bL^\top}{\hbmu^\top \bQ \hbmu}\right)\left(1+\frac{t_1^2}{n-p+1}\right)\nonumber\\
&+&\frac{\hbmu^\top \bQ \hbmu}{V_{GMV}}\left(1+\frac{t_1^2}{n-p+1}\right)\frac{V_{GMV}}{n-p+2}
\left(\bL \bQ \bL^\top-\frac{\bL\bQ \hbmu\hbmu^\top \bQ \bL^\top}{\hbmu^\top \bQ \hbmu}\right)^{1/2}\bt_2\bt_2^\top \left(\left(\bL \bQ \bL^\top-\frac{\bL\bQ \hbmu\hbmu^\top \bQ \bL^\top}{\hbmu^\top \bQ \hbmu}\right)^{1/2}\right)^\top
\Bigg)^{1/2}\frac{\bt_3}{\sqrt{n-p+3}}\nonumber\\
&=&\frac{\bL\bQ \hbmu}{\hbmu^\top \bQ \hbmu}
+\frac{1}{\sqrt{\hbmu^\top \bQ \hbmu}}\left(\bL \bQ \bL^\top-\frac{\bL\bQ \hbmu\hbmu^\top \bQ \bL^\top}
    {\hbmu^\top \bQ \hbmu}\right)^{1/2}\nonumber\\
  &\times&\left(\frac{1}{\sqrt{1+\frac{t_1^2}{n-p+1}}}\frac{\bt_2}{\sqrt{n-p+2}}\frac{t_1}{\sqrt{n-p+1}}
+ \left(\bI_k+\hbmu^\top \bQ \hbmu \frac{\bt_2\bt_2^\top}{n-p+2}\right)^{1/2}\frac{\bt_3}{\sqrt{n-p+3}} \right)\label{ap_th2_eq5}
\end{eqnarray}
}
where $\bt_3 \sim t_k(n-p+3)$ and is independent of $t_1$ and $\bt_2$. Moreover, due to Theorem \ref{th1}.i and \ref{th1}.iv we get that $\xi_1$, $\xi_2$, $t_1$, $\bt_2$, and $\bt_3$ are mutually independent.

Next, we derive stochastic representations for the linear and quadratic forms in $\hbmu$, namely of $\bi^\top\bSigma^{-1}\hbmu$, $\bL\bQ \hbmu$ and $\hbmu^\top \bQ \hbmu$ which are present in the derived above stochastic representations. Let $\bP= \bQ\bL^\top\left(\bL \bQ \bL^\top\right)^{-1/2}$ and
$\bA=\bQ-\bP\bP^\top =\bQ- \bQ \bL^\top \left(\bL \bQ \bL^\top\right)^{-1}\bL \bQ$. Then
\begin{equation}\label{ap_th2_eq6}
\hbmu^\top \bQ \hbmu=\hbmu^\top\bA \hbmu+(\bP^\top \hbmu)^\top (\bP^\top \hbmu).
\end{equation}
Moreover, the equality $\bi^\top \bQ=\mathbf{0}^\top$ implies
$
\left(\begin{array}{c}
\bi^\top \bSigma^{-1}\\
\bP^\top
\end{array}
\right)\bSigma \bA =\left(\begin{array}{c}
\bi^\top \bA\\
\bP^\top \bSigma \bA \end{array}
\right)=
\left(\begin{array}{c}
\mathbf{0}^\top\\
\bP^\top- \bP^\top\end{array}
\right)=\mathbf{O}
$
and, consequently, we get from Theorem 5.5.1 in \citet{mathai1992quadratic} that $\hbmu^\top\bA \hbmu$ is independent of $\bi^\top\bSigma^{-1}\hbmu$ and $\bP^\top \hbmu$, while Corollary 5.1.3a in \citet{mathai1992quadratic} implies that
\begin{equation}\label{ap_th2_eq7}
n \hbmu^\top\bA \hbmu\stackrel{d}{=} \xi_3, ~~\text{where $\xi_3\sim \chi^2_{p-k-1;n\bmu^\top\bA \bmu}$}.
\end{equation}

Finally, the identity $\bi^\top \bSigma^{-1}\bSigma\bP=\mathbf{0}$ ensures that $\bi^\top\bSigma^{-1}\hbmu$ and $\bP^\top \hbmu$ are independent (c.f., \citet[Chapter 2.2]{rencher1998multivariate}) with
\begin{eqnarray}\label{ap_th2_eq8}
\bi^\top \bSigma^{-1}\hbmu&\stackrel{d}{=}& \bi^\top \bSigma^{-1}\bmu+\sqrt{\bi^\top \bSigma^{-1}\bi}\frac{z_1}{\sqrt{n}}
=\frac{R_{GMV}}{V_{GMV}}+\frac{1}{\sqrt{V_{GMV}}}\frac{z_1}{\sqrt{n}}\\
\bP^\top \hbmu&\stackrel{d}{=}& \bP^\top \bmu+\left(\bP^\top\bSigma \bP\right)^{1/2}\frac{\tilde{\bz}_2}{\sqrt{n}}
= \left(\bL \bQ \bL^\top\right)^{-1/2} s\bet+ \frac{\bz_2}{\sqrt{n}}\label{ap_th2_eq9}
\end{eqnarray}
where $z_1\sim \mathcal{N}(0,1)$ and $\tilde{\bz}_2\sim \mathcal{N}_k(\mathbf{0},\bI_k)$ are independent. Inserting \eqref{ap_th2_eq6} -- \eqref{ap_th2_eq9} in \eqref{ap_th2_eq1} -- \eqref{ap_th2_eq5} and performing some algebra, we get the statement of the theorem.
\end{proof}


\begin{proof}[Proof of Theorem~\ref{th3}:]
The statement of the theorem follows directly from the results of Theorem \ref{th2}.
\end{proof}

\begin{proof}[Proof of Theorem~\ref{th4}:]
The mutual independence of $\xi$, $\psi$, and $z$ follows from Theorem~\ref{th2}, while Theorem \ref{th2}.i provides the stochastic representation for $\hat{V}_{GMV}$.

Next, we derive the joint stochastic representation for $\hat{R}_{GMV}$ and $\hat{s}$. Let $\tilde\xi_2=\xi_2^{-1}$, then the distribution of $(\hat{R}_{GMV}, \hat{s}, t_1, f)$ is obtained as a transformation of $(z_1, \tilde{\xi}_2, t_1, f)$ with the Jacobian matrix given by
$$
\mathbf{J} =
  \begin{pmatrix}
    \frac{\sqrt{V_{GMV}}}{\sqrt{n}} & 0 & \frac{\sqrt{f}\sqrt{V_{GMV}}}{\sqrt{n-p+1}} & \frac{1}{2}\frac{\sqrt{V_{GMV}}t_1}{\sqrt{n-p+1}\sqrt{f}} \\
    0 & (n-1)\left(1+\frac{t^2_1}{n-p+1}\right)f &\frac{2(n-1)}{n-p+1}ft_1\tilde{\xi}_2 & (n-1)\left(1+\frac{t_1^2}{n-p+1}\right)\tilde{\xi}_2 \\
    0 & 0 & 1 & 0 \\
    0 & 0 & 0 & 1 \\
  \end{pmatrix}
$$
which implies that $|\mathbf{J}|=\frac{(n-1)}{\sqrt{n}} \sqrt{V_{GMV}}\left(1+\frac{t^2_1}{n-p+1}\right)f$.

Let $d_f(\cdot)$ denote the marginal density of the distribution of $f$. Ignoring the normalizing constants, we get the joint density of $(\hat{R}_{GMV}, \hat{s}, t_1, f)$ expressed as
{\small
\begin{flalign*}
& d(\hat{R}_{GMV}, \hat{s}, t_1, f)
 \propto
  \exp\left\{-\frac{n}{2} \frac{\left(\hat{R}_{GMV}-R_{GMV}-\sqrt{f}\frac{t_1\sqrt{V_{GMV}}}{\sqrt{n-p+1}} \right)^2}{V_{GMV}} \right\}
  \left(\frac{(n-1)f}{\hat{s}}\left( 1+\frac{t^2_1}{n-p+1} \right)\right)^{\frac{n-p+2}{2} +1 }
  \nonumber \\
  &\times
    \exp\left\{-\frac{(n-1)f}{2\hat{s}} \left( 1+\frac{t^2_1}{n-p+1}\right)\right\}
  \left(1+\frac{t^2_1}{n-p+1}\right)^{-\frac{n-p+2}{2}} \left(f\left(1+\frac{t^2_1}{n-p+1}\right)\right)^{-1} d_f(f)  \nonumber \\
  &\propto  \left( \frac{f}{\hat{s}} \right)^{\frac{n-p+2}{2} + 1} \frac{1}{f} \exp\Bigg\{-\frac{n}{2}\frac{\left(\hat{R}_{GMV}-R_{GMV}\right)^2}{V_{GMV}}
  +\frac{n\left(\hat{R}_{GMV}-R_{GMV}\right)\sqrt{f}\frac{t_1}{\sqrt{n-p+1}}}{\sqrt{V_{GMV}}}\\
  &-\frac{(n-1)f}{2\hat{s}} -\frac{1}{2} \left(n+\frac{n-1}{\hat{s}}\right)\frac{ft^2_1}{n-p+1}\Bigg\} d_f(f). \nonumber
\end{flalign*}}
We now notice that
{\small
\begin{align*}
  &\exp\left\{\frac{n\left(\hat{R}_{GMV}-R_{GMV}\right)\sqrt{f}\frac{t_1}{\sqrt{n-p+1}}}{\sqrt{V_{GMV}}} - \frac{(n\hat{s}+(n-1))f}{2\hat{s}(n-p+1)}t^2_1\right\}\\
  &=\exp\left\{ - \frac{(n\hat{s}+(n-1))f }{2\hat{s}(n-p+1)}\left(t_1 - \frac{n^2\hat{s}\sqrt{n-p+1}\left(\hat{R}_{GMV}-R_{GMV}\right)}{ \sqrt{V_{GMV}f}(n\hat{s}+(n-1))}\right)^2 \right\}  \exp\left\{ \frac{n^2\hat{s} \left(\hat{R}_{GMV}-R_{GMV}\right)^2}{2V_{GMV}(n\hat{s}+(n-1)) } \right\},
\end{align*}}
where the first factor is the kernel of a normal distribution. Hence,
{\small
\begin{flalign}
 d(\hat{R}_{GMV}, \hat{s})
 &= \int_{\mathbbm{R}_+}\int_{\mathbbm{R}} d(\hat{R}_{GMV}, \hat{s}, t_1, f) dt_1 df \propto \exp\Bigg\{-\frac{n}{2}\frac{\left(\hat{R}_{GMV}-R_{GMV}\right)^2}{V_{GMV}}\Bigg\}
 \exp\left\{ \frac{n^2\hat{s} \left(\hat{R}_{GMV}-R_{GMV}\right)^2}{2V_{GMV}(n\hat{s}+(n-1)) } \right\} \nonumber\\
 &\times \int_{\mathbbm{R}_+} \left( \frac{f}{\hat{s}} \right)^{\frac{n-p+2}{2} + 1}\frac{e^{-\frac{f}{2\hat{s}}} }{f}d_f(f) \int_{\mathbbm{R}} e^{- \frac{((n-1) + n\hat{s})f }{2\hat{s}(n-p+1)}\left(t_1 - \frac{\hat{s}\sqrt{n-p+1}\left(\hat{R}_{GMV}-R_{GMV}\right)}{ \sqrt{V_{GMV}f}(\hat{s}-1+1/n) }\right)^2 } dt_1 df \nonumber \\
 &\propto  \left(1+\frac{n}{n-1}\hat{s}\right)^{-1/2} \exp\Bigg\{-\frac{n}{2}\frac{\left(\hat{R}_{GMV}-R_{GMV}\right)^2}{(1+\frac{n}{n-1}\hat{s})V_{GMV}}\Bigg\} \label{eqn:Rnorm} \\
 &\int_{\mathbbm{R}_+} \left( \frac{f}{\hat{s}} \right)^{\frac{n-p+1}{2}+1}\frac{e^{-\frac{(n-1)f}{2\hat{s}}} }{f}d_f(f)df. \label{eqn:shatf}
\end{flalign}}
where \eqref{eqn:Rnorm} determines the conditional distribution of $\hat{R}_{GMV}$ given $\hat{s}$ which is a normal distribution with mean $R_{GMV}$ and variance $\left(1+\frac{n}{n-1} \hat{s} \right) \frac{V_{GMV}}{n}$. The expression in \eqref{eqn:shatf} specifies the marginal distribution of $\hat{s}$ which is the integral representation of the density of the ratio of two independent variables $f$ and $\zeta$ with $(n-1)\zeta \sim \chi^2_{n-p+1}$ and $nf \sim \chi^2_{p-1} (ns)$ (c.f., \citet[Theorem 5.1.3]{mathai1992quadratic}). Hence, $n(n-p+1) / ( (n-1)(p-1)) \hat{s}$ has a noncentral $F$-distribution with $(p-1)$ and $(n-p+1)$ degrees of freedom and noncentrality parameter $ns$.
\end{proof}


\begin{proof}[Proof of Theorem~\ref{th5}]
If $\xi \sim \chi^2_{m,\delta}$, then it holds that (see, e.g., \citet[Lemma 3]{Bodnar2016125})
\begin{equation}\label{ap_th5_eq1}
\left(\frac{\xi}{m}-1-\frac{\delta}{m}\right) \stackrel{a.s.}{\rightarrow} 0\; \text{ and } \sqrt{m}\left(2\left(1+2\frac{\delta}{m} \right)\right)^{-1/2}\left(\frac{\xi}{m}-1-\frac{\delta}{m}\right) \stackrel{d}{\rightarrow} N\left(0,1 \right)
\end{equation}
for $m \rightarrow \infty$.

Throughout the proof of the theorem the asymptotic results are derived under the high-dimensional asymptotic regime, that is under $p/n \rightarrow c \in [0,1)$ as $n \rightarrow \infty$. The applications of Slutsky's lemma (c.f., \citet[Theorem 1.5]{dasgupta2008}) and Theorem~\ref{th2}, and the fact that a t-distribution with increasing degrees of freedom tends to the standard normal distribution yield the following results:

\begin{enumerate}[(i)]
 \item The application of Theorem \ref{th2}.i and \eqref{ap_th5_eq1} with $m=n-p$ leads to
 $$\sqrt{n-p}\left(\hat{V}_{GMV} - \frac{1-p/n}{1-1/n} V_{GMV} \right) \stackrel{d}{=} \frac{1-p/n}{1-1/n} V_{GMV} \sqrt{n-p}\left(\frac{\xi_2}{n-p} - 1 \right) \stackrel{d}{\rightarrow} \sqrt{2}(1-c) V_{GMV} u_1,$$
 where $u_1\sim N(0,1)$.
 \item Using \eqref{ap_th5_eq1} with $m=p-k-1$ and $\delta=n\bmu^\top \bA \bmu$, we get
 \begin{eqnarray}
   f &\stackrel{d}{=}& \frac{\xi_3}{n} + \left(s\bet + \frac{\bz_2}{\sqrt{n}} \right)^\top (\bL \bQ \bL^\top)^{-1}\left(s\bet + \frac{\bz_2}{\sqrt{n}} \right)\nonumber\\
  & =& \frac{(p-k-1)}{n} \left(\frac{\xi_3}{p-k-1} -1 - \frac{n\bmu^\top \bA \bmu}{p-k-1} \right) + \frac{(p-k-1)}{n} + \bmu^\top \bQ \bmu \nonumber \\
    &+& \frac{1}{\sqrt{n}}\left( 2s\bet (\bL \bQ \bL^\top)^{-1}\bz_2 + \frac{1}{\sqrt{n}}\bz_2^\top (\bL \bQ \bL^\top)^{-1} \bz_2 \right) \stackrel{a.s.}{\rightarrow} s+c \label{f_const}
 \end{eqnarray}
 and, hence, $
    \sqrt{n-p}\left(f - (s+p/n) \right)  \stackrel{d}{\rightarrow}  \sqrt{2(1-c)\left(c+2\bmu^\top \bA \bmu\right)}u_2 + 2s\sqrt{(1-c)} \bet^\top (\bL\bQ\bL^\top)^{-1/2} \bu_3,$
    where $u_2 \sim N(0,1)$ and $\bu_3 \sim N_k(\mathbf{0}, \bI_k)$ which are independent of $u_1$ following Theorem \ref{th2}. Furthermore, the application of \eqref{f_const} yields
    {\footnotesize
 \begin{eqnarray*}
  \sqrt{n-p}\left( \hat{R}_{GMV}-R_{GMV}\right) &\stackrel{d}{=} &\sqrt{V_{GMV}} \left( \sqrt{1-p/n} z_1 + \left(\frac{1-p/n}{1-p/n+1/n}\right)^{1/2} \sqrt{f} t_1 \right) \stackrel{d}{\rightarrow} \sqrt{V}_{GMV}\left(\sqrt{1-c} u_4 + \sqrt{s+c} u_5 \right)
 \end{eqnarray*}}
 where $u_4$, $u_5 \sim N(0,1)$ and $u_1$, $u_2$, $\bu_3$, $u_4$, $u_5$ independent.
\item Furthermore, by the stochastic representation of $\hbtheta$ as given in Theorem \ref{th2}.iii we have that in distribution $\sqrt{n-p}\left( \hat{\btheta}-\btheta\right) $
  {\footnotesize
 \begin{eqnarray*}
 &\stackrel{d}{=}& \sqrt{V_{GMV}}\Bigg(\frac{s\bet+\bz_2/\sqrt{n}}{\sqrt{f}} \sqrt{\frac{1-p/n}{1-p/n+1/n}} t_1+\left(\bL \bQ \bL^\top-\frac{\left(s\bet+\bz_2/\sqrt{n}\right)\left(s\bet+\bz_2/\sqrt{n}\right)^\top}
{f}\right)^{1/2}
\sqrt{1+\frac{t_1^2}{n-p+1}}\frac{\sqrt{n-p}}{\sqrt{n-p+2}}\bt_2\Bigg)\nonumber \\
&\stackrel{d}{\rightarrow}& \sqrt{V_{GMV}}\left(\frac{s \bet}{\sqrt{s+c}}u_5 + \left(\bL \bQ \bL^\top - \frac{s^2}{s+c}\bet \bet^\top\right)^{1/2}\bu_6 \right),
 \end{eqnarray*}}
 where $\bu_6 \sim N_k(\mathbf{0}, \bI_k)$ and is independent of $u_1$, $u_2$, $\bu_3$, $u_4$, and $u_5$.

\item The application of Theorem \ref{th2}.iv and \eqref{ap_th5_eq1} leads to
  {\small
\begin{eqnarray*}
 && \sqrt{n-p}\left( \hat{s} - \frac{(s+p/n)(1-1/n)}{1-p/n+2/n} \right) \stackrel{d}{=} \\
  &&\frac{1-1/n}{1-p/n+2/n}
\Bigg( \left(1+\frac{t_1^2}{n-p+1}\right)\frac{\sqrt{n-p}(f-(s+p/n))}{\xi_2/(n-p+2)} + (s+p/n)\left(\frac{\frac{t_1^2}{n-p+1} - \left(\frac{\xi_2}{n-p+2}-1\right)}{\xi_2/(n-p+2)}\right) \Bigg) \nonumber \\
& \stackrel{d}{\rightarrow} &\frac{1}{1-c}\Bigg( \sqrt{2(1-c)\left(c+2\bmu^\top \bA \bmu \right)}u_2 + 2s\sqrt{(1-c)} \bet^\top (\bL\bQ\bL^\top)^{-1/2} \bu_3 + \sqrt{2}(s+c)u_7\Bigg),
\end{eqnarray*}
}
where $u_7\sim N(0,1)$ and is independent of $u_1$, $u_2$, $\bu_3$, $u_4$, $u_5$, and $\bu_6$.
\item Similarly, from Theorem \ref{th2}.v we get
  {\small
\begin{eqnarray*}
&&\sqrt{n-p}\left(\hat{\bet}- \frac{s}{s+p/n}\bet \right)
   \stackrel{d}{=} \frac{1}{f} \left( \frac{-s}{s+p/n} \sqrt{n-p}\left(f-(s+p/n)\right)\bet + \sqrt{1-p/n}\bz_2 \right)\\
  &+&\frac{1}{\sqrt{f}}\left(\bL \bQ \bL^\top-\frac{\left(s\bet+\bz_2/\sqrt{n}\right)\left(s\bet+\bz_2/\sqrt{n}\right)^\top}{f}\right)^{1/2} \Bigg(\frac{1}{\sqrt{1+\frac{t_1^2}{n-p+1}}}\frac{\bt_2}{\sqrt{n-p+2}}\left(\frac{n-p}{n-p+1}\right)^{1/2}  t_1\\
&+&\left(\bI_k+f\frac{\bt_2\bt_2^\top}{n-p+2}\right)^{1/2}\left(\frac{n-p}{n-p+3}\right)^{1/2} \bt_3 \Bigg)
\stackrel{d}{\rightarrow} \frac{1}{\sqrt{s+c}}\left(\bL \bQ \bL^\top - \frac{s^2\bet \bet^\top}{s+c}\right)^{1/2}\bar{\bu}_8  \nonumber \\
&+& \frac{\sqrt{1-c}}{(s+c)} \Bigg(\bL \bQ \bL^\top - 2\frac{s^2\bet \bet^\top}{s+c} \Bigg) (\bL \bQ \bL^\top)^{-1/2} \bu_3
- \frac{s\sqrt{2(1-c)\left(c+2\bmu^\top \bA \bmu \right)}u_2}{(s+c)^2}\bet,
\end{eqnarray*}
}
where $\bu_8 \sim N_k(\mathbf{0}, \bI_k)$ and $u_1$, $u_2$, $\bu_3$, $u_4$, $u_5$, $\bu_6$, $u_7$, $u_8$ are mutually independent distributed.
\end{enumerate}
\end{proof}

\begin{proof}[Proof of Theorem~\ref{th6}:]
 The application of Theorem~\ref{th5} and of the continuous mapping theorem (c.f., \citet[Theorem 1.14]{dasgupta2008}) leads to
 $
 \bL \hat{\bw}_g  \stackrel{a.s.}{\rightarrow} \btheta + \frac{s g(R_{GMV}, (1-c)V_{GMV}, (s+c)/(1-c)) }{s+c} \bet
 $
 for $p/n \rightarrow c$ as $n \rightarrow \infty$.

Let $\hat\blambda$ and $\blambda$  be defined as in \eqref{lam}. Then, the first order Taylor series expansion yields
 \begin{flalign}
  &\sqrt{n-p}\left( \bL \hat{\bw}_g - \left(\btheta + \frac{s g\left(\blambda\right)}{s+p/n} \bet \right) \right)
  = \sqrt{n-p}\left(\hbtheta-\btheta \right) + \sqrt{n-p}\left(\hbet- \frac{s}{s+p/n}\bet\right)g\left(\hat\blambda\right) \nonumber\\ + &
  \sqrt{n-p}\left(g\left(\hat\blambda\right)- g\left(\blambda\right)\right)\frac{s\bet }{s+p/n}=  \sqrt{n-p}\left(\hbtheta-\btheta \right) + \sqrt{n-p}\left(\hbet- \frac{s}{s+p/n}\bet\right)g\left(\hat\blambda\right) \nonumber\\
  +&
  \sqrt{n-p}\begin{pmatrix}
  \hat{R}_{GMV} - R_{GMV} \\
  \hat{V}_{GMV} - (1-p/n)V_{GMV} \\
   \hat{s} - \frac{s+p/n}{1-p/n}
  \end{pmatrix}^\top
  \begin{pmatrix}
  g_1\left(\blambda\right) \\
  g_2\left(\blambda\right) \\
  g_3\left(\blambda\right)
  \end{pmatrix}
  \frac{s}{s+p/n}\bet+ o_P(1) \label{eqn:th42}
  \end{flalign}
  Hence, from Theorem \ref{th5} we get
  {\small
  \begin{flalign*}
  &\sqrt{n-p}\left( \bL \hat{\bw}_g - \left(\btheta + \frac{s g\left(\blambda\right)}{s+p/n} \bet \right) \right)
  \stackrel{d}{\rightarrow}  \sqrt{V_{GMV}}\left(\frac{s u_5 }{\sqrt{s+c}}\bet+ \left(\bL \bQ \bL^\top - \frac{s^2}{s+c}\bet \bet^\top\right)^{1/2}\bu_6 \right) \\
    +& \Bigg(\frac{1}{\sqrt{s+c}}\left(\bL \bQ \bL^\top - \frac{s^2}{s+c}\bet \bet^\top\right)^{1/2}\bu_8 + \frac{\sqrt{1-c}}{(s+c)} \Bigg(\bL \bQ \bL^\top - 2\frac{s^2}{s+c}\bet \bet^\top \Bigg) (\bL \bQ \bL^\top)^{-1/2} \bu_3 \\
  -&
    \frac{s\sqrt{2(1-c)\left(c+2\bmu^\top \bA \bmu \right)}u_2}{(s+c)^2}\bet\Bigg)g\left(\blambda\right) + g_1\left(\blambda\right)\left(\sqrt{V_{GMV}}\left(\sqrt{1-c} u_4 + \sqrt{s+c} u_5 \right) \right)\frac{s}{s+c}\bet \\
  +&
    g_2\left(\blambda\right)\left(\sqrt{2}(1-c) V_{GMV} u_1 \right)\frac{s}{s+c}\bet + g_3\left(\blambda\right)
    \Bigg(\frac{1}{1-c}\Bigg( \sqrt{2(1-c)\left(c+2\bmu^\top \bA \bmu \right)}u_2 \\
  +&
    2s\sqrt{(1-c)} \bet^\top (\bL\bQ\bL^\top)^{-1/2} \bu_3 + \sqrt{2}(s+c)u_7\Bigg) \Bigg)\frac{s}{s+c}\bet \\
  =& 
    g_2\left(\blambda\right)\frac{\sqrt{2}(1-c)V_{GMV}s}{s+c}\bet u_1
  +
    \left(\frac{g_3\left(\blambda\right)}{1-c}-\frac{g\left(\blambda\right)}{(s+c)} \right)\frac{\sqrt{2(1-c)\left(c+2\bmu^\top \bA \bmu \right)}s}{s+c} \bet u_2 \\
  +& 
    \frac{\sqrt{1-c}}{s+c}
    \left(g\left(\blambda\right)\bL \bQ \bL^\top +2s^2\left(\frac{g_3\left(\blambda\right)}{1-c} - \frac{g\left(\blambda\right)}{(s+c)}
            \right) \bet\bet^\top\right)(\bL\bQ\bL)^{-1/2}\bu_3 \\
  +& 
  \left(\frac{s\sqrt{V_{GMV}}\sqrt{1-c}}{s+c}g_1\left(\blambda\right)\right)\bet u_4
  + \left(g_1\left(\blambda\right) + 1\right)\frac{s\sqrt{V_{GMV}}}{\sqrt{s+c}}\bet u_5 \\
  +& 
 \sqrt{V_{GMV}} \left(\bL \bQ \bL^\top - \frac{s^2}{s+c}\bet \bet^\top\right)^{1/2}\bu_6 + \left(\sqrt{2}\frac{s}{1-c}g_3\left(\blambda\right)\right)\bet u_7
  + \frac{g\left(\blambda\right)}{\sqrt{s+c}}\left(\bL \bQ \bL^\top - \frac{s^2}{s+c}\bet \bet^\top\right)^{1/2}\bu_8.
\end{flalign*}
}
Using that $u_1$, $u_2$, $\bu_3$, $u_4$, $u_5$, $\bu_6$, $u_7$, $u_8$ are mutually independent and standard (multivariate) normally distributed, the expression of the asymptotic covariance matrix of $\bL\hat{\bw}_q$ is obtained.
\end{proof}

\begin{proof}[Proof of Theorem~\ref{th8}:]
  Using \eqref{hV_c}-\eqref{hbet_c} together with a first order Taylor expansion we get that
  {\small
\begin{eqnarray*}
&&\sqrt{n-p}\left(\bL\hbw_{g;c} - \bL\bw_{g}  \right)
\stackrel{d}{=} \sqrt{n-p}\left( \hbtheta - \btheta \right) - \sqrt{n-p}(\hat{s}_c-s)\frac{p/n}{\hat{s}_c(p/n+s)} g \left( \hat{R}_{GMV;c}, \hat{V}_{GMV;c}, \hat{s}_c \right) \bet \\
  &+& \sqrt{n-p} \left( \hbet - \frac{s}{s+p/n} \bet \right) \frac{\hat{s}_c + p/n}{\hat{s}_c }g \left( \hat{R}_{GMV;c}, \hat{V}_{GMV;c}, \hat{s}_c \right)\\
  &+& \sqrt{n-p} \begin{pmatrix} \hat{R}_{GMV;c} - R_{GMV} \\ \hat{V}_{GMV;c} - V_{GMV} \\ \hat{s}_c-s \end{pmatrix}^\top
		\begin{pmatrix} g_1\left(R_{GMV}, V_{GMV}, s\right) \\ g_2\left(R_{GMV}, V_{GMV}, s\right)  \\ g_3\left(R_{GMV}, V_{GMV}, s\right) \end{pmatrix}\bet  + o_P(1)\\
&=&  \sqrt{n-p}\left( \hbtheta - \btheta \right) + \sqrt{n-p} \left( \hbet - \frac{s}{s+p/n} \bet \right) \frac{\hat{s}_c + p/n}{\hat{s}_c }g \left( \hat{R}_{GMV;c}, \hat{V}_{GMV;c}, \hat{s}_c \right) +  \sqrt{n-p}\begin{pmatrix}
  \hat{R}_{GMV} - R_{GMV} \\
  \hat{V}_{GMV} - (1-p/n)V_{GMV} \\
   \hat{s} - \frac{s+p/n}{1-p/n}
  \end{pmatrix}^\top\\
  &\times&
\begin{pmatrix} g_1\left(R_{GMV}, V_{GMV}, s\right) \\ \left(1-p/n\right)^{-1} g_2\left(R_{GMV}, V_{GMV}, s\right)  \\
 \left(1-p/n\right)\left( g_3\left(R_{GMV}, V_{GMV}, s\right) -\frac{p/n}{\hat{s}_c(p/n+s)} g \left( \hat{R}_{GMV;c}, \hat{V}_{GMV;c}, \hat{s}_c \right)\right) \end{pmatrix}\bet + o_P(1)
\end{eqnarray*}
}
The rest of the proof of part (a) follows from the proof of Theorem \ref{th6}. Similarly, the statement of part (b) is obtained.
\end{proof}

{\small
\bibliography{ref}
}
\end{document}